\pgfplotsset{compat=1.13}
\newtheorem{Theorem}{Theorem}[section]
\newtheorem{Corollary}{Corollary}[Theorem]
\newtheorem{Lemma}{Lemma}[section]
\newtheorem*{Remark}{Remark}
\newtheorem{Notation}{Notation}
\newtheorem{Definition}{Definition}
\newtheorem{Assumption}{Assumption}
\newtheorem{WeakPropertyStar}{Property}
\newtheorem{StrictPropertyStar}{Property}
\newtheorem{PropertyM}{Property}
\newtheorem{PropertyU}{Property}
\newtheorem{Example}{Example}
\acrodef{RPL}{Routing Protocol for Low-Power and Lossy Networks}
\acrodef{ID}{indentifier}
\acrodef{PID}{Partial Information Decomposition}
\acrodef{KL}{Kullback-Leiber}
\definecolor{my-red}{HTML}{ac0000}
\definecolor{my-green}{HTML}{008000}
\definecolor{my-plum}{HTML}{92268F}
\definecolor{magenta}{HTML}{E00000}
\newcommand{\extralength}{25mm}
\newcommand{\definedAs}{\ensuremath{\coloneqq}}
\newcommand{\area}[1]{\ensuremath{\textnormal{Area}\left(#1\right)}}
\newcommand{\mycircle}[2]{\tikz[baseline=(char.base)]{\node[#1,shape=circle,draw,inner sep=1pt] (char) {\color{black}#2};}}
\newcommand{\redundancy}{\mycircle{fill=orange!40}{\tiny 3}}
\newcommand{\uniqueA}{\mycircle{fill=blue!40}{\tiny 1}}
\newcommand{\uniqueB}{\mycircle{fill=my-green!40}{\tiny 2}}
\newcommand{\synergy}{\mycircle{fill=magenta!40}{\tiny 4}}
\newcommand{\refEq}[1]{Equation~(\ref{#1})}
\newcommand{\refFig}[1]{Figure~\ref{#1}}
\newcommand{\refSubFig}[2]{{Figure~\ref{#1}#2}}
\title{Quantifying redundancies and synergies\\with measures of inequality}
\author{Tobias Mages~\textsuperscript{1*} and Christian Rohner~\textsuperscript{1}\\
	\small \textsuperscript{1} Department of Information Technology, Uppsala University, Uppsala, Sweden \\
	\small \textsuperscript{*} Correspondence: tobias.mages@it.uu.se}
\date{{\small July 5, 2024}}
\begin{document}
	\maketitle
	\begin{abstract}
		Inequality measures provide a valuable tool for the analysis, comparison, and optimization based on system models. This work studies the relation between attributes or features of an individual to understand how redundant, unique, and synergetic interactions between attributes construct inequality. For this purpose, we define a family of inequality measures (f-inequality) from f-divergences. Special cases of this family are, among others, the Pietra index and the Generalized Entropy index. We present a decomposition for any f-inequality with intuitive set-theoretic behavior that enables studying the dynamics between attributes. Moreover, we use the Atkinson index as an example to demonstrate how the decomposition can be transformed to measures beyond f-inequality. The presented decomposition provides practical insights for system analyses and complements subgroup decompositions. Additionally, the results present an interesting interpretation of Shapley values and demonstrate the close relation between decomposing measures of inequality and information. 
	\end{abstract} \hspace{10pt}
	
	\noindent{\small\textbf{Keywords:} Partial Information Decomposition, Redundancy, Synergy, Lorenz curve, f-inequality, Generalized Entropy index, Atkinson index;}
	
	\section{Introduction}
	\label{sec:intro}
%
Understanding the structure of how resources are provided or how value is distributed directly leads to the question of decomposing inequality. 
Besides applications in economics and social sciences, the decomposition of inequality can be used to analyze, compare, and optimize systems in engineering. For an example from computer science, consider:
\begin{itemize}
	\item Energy and communication:\vspace{-1mm}
	\begin{itemize}
		\item How is the required energy distributed between nodes in some wireless routing protocol?\vspace{-1mm}
		\item How is the provided network capacity distributed between nodes?
	\end{itemize}
	\item Data and prediction:\vspace{-1mm}
	\begin{itemize}
		\item How is the privacy of different user groups impacted for obtaining data?\vspace{-1mm}
		\item How well does a machine learning model perform for the needs of different user groups?
	\end{itemize}\vspace{-1mm}
\end{itemize}
Such analyses can be split into three components, out of which this work addresses the latter two:
\begin{enumerate}
	\item \textbf{How can we quantify the property of interest?}\newline 
	 The \textit{indicator variable} of an individual shall give a non-negative value for how `good` the system is for the participant. In the examples above, this could be a measure of energy, network capacity, privacy, or prediction performance. The design of indicator variables is challenging since it requires detailed insights from a domain expert, depends on the research question, and influences the resulting notion of inequality. This work assumes that the indicator variable is given to maintain domain independence.
	\item \textbf{How can we quantify inequality?}\newline
	We introduce a family of inequality measures ($f$-inequality) that generalize the Pietra and Generalized Entropy Index (Section \ref{subsec:generalized-ineq-measure}). They are derived from  $f$-divergence/$f$-information and deepen the relation between information theory and inequality measures previously established by \citet{theilBook} and \citet{shorrocks1980}.
	\item \textbf{How can we decompose inequality for gaining insights?}\newline
	We present a novel decomposition for studying the interactions between attributes of individuals. For the initial examples, attributes of an individual could be its device type, network position, interest group or age. The decomposition is inspired by recent work in information theory~\cite{williams-beer,mages2024} and is constructed using the lattice formed by the Atkinson criterion.
	The decomposition is designed to provide a practical operational interpretation and satisfy a set-theoretic intuition as shown in \refFig{fig:intro-proposal}. We demonstrate the decomposition for any $f$-inequality (Section \ref{subsec:decomposition-ineq-measure}) and their transformations, such as the Atkinson index (Section \ref{subsec:transforming-ineq-measure}).
\end{enumerate}

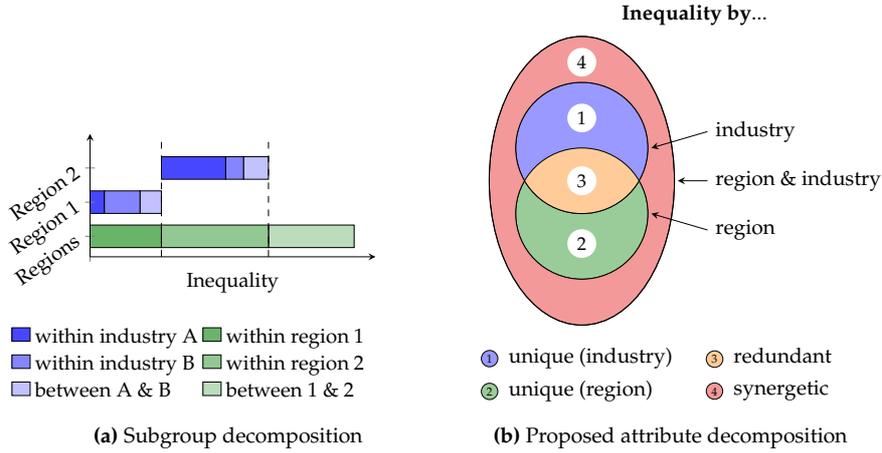
\begin{figure}[t]
	\centering
%
\definecolor{blue1}{HTML}{4747FF}
\definecolor{blue2}{HTML}{8585FF}
\definecolor{blue3}{HTML}{C2C2FF}
\definecolor{green1}{HTML}{6AB46A}
\definecolor{green2}{HTML}{92C892}
\definecolor{green3}{HTML}{BBDDBB}

\begin{subfigure}[t]{0.38\linewidth}
	\scalebox{0.85}{\begin{tikzpicture}
	\begin{axis}[
			set layers=axis on top,
			width=6cm, height=3.5cm, xbar stacked, axis lines=left,
			xlabel={\small Inequality},	ylabel={},
			xlabel style = {yshift=2mm},
			ymin=-3, ymax=6, xmin=0, xmax=8,
			xtick={0}, xticklabels={},
			ytick={-1.5, 1, 3.5}, yticklabels={\small Regions, \small Region 1, \small Region 2},
			yticklabel style = {rotate=35, anchor=east},
			legend cell align={left}, legend style={draw=none,at={(0.35,-0.5)}, anchor=north}, legend columns=2,
		]
		\draw [dashed] (2, -3) -- (2, 5.5);
		\draw [dashed] (5, -3) -- (5, 5.5);
		
		\addplot+[fill=white,draw=none, opacity=100,forget plot] coordinates {(0, 1)  (2, 3.5)  (0,-1.5)};
		\addplot [fill=blue1] coordinates {(0.4, 1)  (1.8, 3.5)  (0,-1.5)};
		\addplot [fill=green1] coordinates {(0, 1)  (0, 3.5) (2, -1.5)};
		
		\addplot [fill=blue2] coordinates {(1, 1)  (0.5, 3.5)  (0,-1.5)};
		\addplot [fill=green2] coordinates {(0, 1)  (0, 3.5) (3, -1.5)};
		
		\addplot [fill=blue3] coordinates {(0.6, 1)  (0.7, 3.5)  (0,-1.5)};
		\addplot [fill=green3] coordinates {(0, 1)  (0, 3.5) (2.4, -1.5)};
		
		\legend{\small within industry A, \small  within region 1, \small  within industry B, \small within region 2, \small between A \& B, \small between 1 \& 2};
	\end{axis}
	\end{tikzpicture}}
	\caption{Subgroup decomposition}
	\label{fig:intro-indicator}
\end{subfigure}%
~
\begin{subfigure}[t]{0.325\linewidth}
	\scalebox{0.87}{\begin{tikzpicture}[color=black]
			\begin{scope}[xshift=-1.75cm]
				\node[anchor=south] (s12) at (1.7,1.75) {\small \textbf{Inequality by}...};
				\node[anchor=west] (s12) at (1.9,-0.50) {\small region \& industry};
				\node[anchor=west] (s1) at (1.9,  0.25) {\small industry};
				\node[anchor=west] (s2) at (1.9,-1.25) {\small region};
				\draw[draw=black,fill=my-green!40] (0,-1) circle (1);
				\draw[draw=black,fill=blue!40] (0,0) circle (1);
				\begin{scope}[draw=black]
					\clip (0,0) circle (1);
					\clip (0,-1) circle (1);
					\path [fill=orange!40] (0,-1) circle (1);
					\end{scope}
				\begin{scope}[draw=black,even odd rule]
					\clip (0,0) circle (1) (0,-0.5) ellipse (1.4 and 2.2);
					\clip (0,-1) circle (1) (0,-0.5) ellipse (1.4 and 2.2);
					\fill[magenta!40] (0,-0.5) ellipse (1.4 and 2.2);
				\end{scope}
				\path[-stealth]
				($(s12.west)$) edge ($(s12.west)+(-0.45,0)$)
				($(s1.west)$) edge ($(s1.west)+(-0.85,-0.25)$)
				($(s2.west)$) edge ($(s2.west)+(-0.85,+0.25)$);
				\draw (0,-0.5) ellipse (1.4 and 2.2) node[above,yshift=45,circle,fill=white,inner sep=2pt] {\footnotesize4};
				\node[circle,fill=white,inner sep=2pt, yshift=13] at (0,0) {\footnotesize 1};
				\node[circle,fill=white,inner sep=2pt] at (0,-0.5) {\footnotesize 3};
				\node[circle,fill=white,inner sep=2pt, yshift=-13] at (0,-1) {\footnotesize 2};
				\draw[draw=black] (0,-1) circle (1);
				\draw[draw=black] (0,0) circle (1);
			\draw[draw=black] (0,-0.5) ellipse (1.4 and 2.2);
			\end{scope}
			\node[anchor=west] at (0.0,-3.7) {\small \synergy~ synergetic};
			\node[anchor=west] at (-3.4,-3.2) {\small \uniqueA~ unique (industry)};
			\node[anchor=west] at (0.0,-3.2) {\small \redundancy~ redundant};
			\node[anchor=west] at (-3.4,-3.7) {\small \uniqueB~ unique (region)};
	\end{tikzpicture}}
	\caption{Proposed attribute decomposition}
	\label{fig:intro-proposal}
\end{subfigure}
	\caption{\textbf{Intuition for the relation between a subgroup decomposition and the proposed attribute decomposition.} Consider a set of companies with industry type and region as attributes: (\textbf{a}) A subgroup decomposition provides detailed insights for the possible values of an attribute, such as region 1 or 2. However, it does not provide insights into the dynamics between attributes. (\textbf{b}) An attribute decomposition provides detailed insights into the interaction between attributes, such as redundant and synergetic effects between industries and regions. However, it does not provide insights for particular attribute values, such as region 1 or 2. Therefore, subgroup and attribute decompositions complement each other.}
	\label{fig:intro-explanation}
\end{figure}

\subsection*{Related work}
The quantification and decomposition of inequality have mainly been driven by economic research~\cite{dalton1920,atkinson1970,theilBook,lerman1985,paul2004,centralbank2019}. An established framework within this area is the \textit{subgroup decomposition}~\cite{bhattacharya1967,bourguignon1979,shorrocks1984,dagum1998}: As indicated by \refEq{eq:intro-subgroup} and \refFig{fig:intro-indicator}, this framework considers a particular partitioning of the population into subgroups. The framework aims to decompose the total inequality into the inequality between subgroups and the inequality within subgroups.
\begin{equation}
	\text{Total Inequality} = \text{(Inequality between subgroups)} + \sum_{\text{subgroup } \in \text{ Partition}} \text{(Inequality within subgroup)}
	\label{eq:intro-subgroup}
\end{equation}
As visualized in \refFig{fig:intro-indicator}, this can provide detailed insights into the attribute values that characterize a subgroup. For example, we can see inequality with respect to the different regions and industries but do not clearly see the interactions between industries and regions.

This work presents a complementing \textit{partition decomposition} or \textit{attribute decomposition}, as visualized in \refFig{fig:intro-proposal}. We decompose inequality into different population partitionings to characterize the dynamics between attributes. The resulting decomposition provides insights into how inequality is constructed from redundant, unique, and synergetic effects between attributes, as indicated by \refEq{eq:intro-attributedecomp}.
\begin{equation}
	\begin{aligned}
		\text{Total Inequality} = & \text{ (redundant between attributes)} + \text{(unique to first attribute)} \\
		&~~+ \text{(unique to second attribute)} + \text{(synergetic between attributes)}
	\end{aligned}
	\label{eq:intro-attributedecomp}
\end{equation}
In summary, a subgroup decomposition studies the interactions between subgroups for a particular population partitioning. An attribute decomposition studies the interactions between possible partitionings of the population based on the attributes of individuals.

	\section{Background, preliminaries and examples}

	\begin{Remark}
		Throughout this work, we assume access to some empirical/estimated/known distribution of the indicator variable. All concepts within this work can be described by probability distributions or a finite set of given samples. Since common inequality measures are typically expressed in terms of the latter, we provide all definitions in the same format. This also enables the discussion of small and intuitive examples. However, all presented definitions can be adjusted for the computation on a given probability distribution rather than a given set of samples. Methods for estimating the relevant distributions are discussed among others in~\cite{estimateLorenz1990,estimateLorenz1993,estimateLorenz1999,estimateLorenz2021}.
	\end{Remark}

	\subsection{Definitions and notation}
	\label{sec:notation}
%
\begin{Notation}\hfill
	\begin{itemize}
		\item We use subscripts to distinguish variable names, such as $s_1,s_2\in\mathbb{R}_{\geq0}$.
		\item We notate the power set as $\mathcal{P}(\cdot)$ and the set of all multisets as $\mathcal{P}_M(\cdot)$.
		\item We notate the Cartesian product of two sets by $\mathbf{A}\times\mathbf{B}$.
		\item We notate the n-ary Cartesian product for a set of sets by $\mathcal{C}(\cdot)$.
		\item We notate the additive union of multisets as $\mathbf{A}\uplus\mathbf{B}$.
		\item We reserve the variable $n\geq1$ for the total number of features/attributes of each individual.
		\item We indicate the set of values for a categorical feature/attribute as $\mathbb{A}_i$ with $i\in\{1,..,n\}$.
		\item We write the function $\tau(i,\cdot)$ to access the $i$-th elements of a tuple starting from zero. \newline For example $\tau(i,(0,..,i))=i$. This notation only appears within Section \ref{sec:notation}.
		\item We indicate unused variables using an underscore, such as $\_\in\mathbf{A}$.
	\end{itemize}
\end{Notation}
\begin{Definition}\hfill
	\begin{itemize}
		\item An \textbf{individual} is a tuple $\rho \in (\mathbb{R}_{\geq 0} \times \mathbb{A}_1 \times \cdots\times\mathbb{A}_n)$. The first element $\tau(0,\rho) \in \mathbb{R}_{\geq 0}$  represents its non-negative indicator variable. The remaining elements ($0<i\leq n$) represent its categorical features/attributes $\tau(i,\rho)\in\mathbb{A}_i$.
		\item We define a \textbf{model} as multiset of individuals $\mathbf{M}\in\mathcal{P}_M(\mathbb{R}_{\geq 0}\times\mathbb{A}_1\times...\times \mathbb{A}_n)$. The distribution of indicator values and attributes may be obtained from empirical data and/or estimations. We reserve the symbol $\mathbf{M}$ throughout this work to indicate a model.
		\item We define a \textbf{population} $\mathbf{S}\in\mathcal{P}_M(\mathbb{R}_{\geq0})$ as multiset of indicator values. Throughout this work, we reserve the symbol $\mathbf{S}$ for multisets of indicator values and note the average indicator value (arithmetic mean) of $\mathbf{S}$ by $\overline{\textbf{S}}\definedAs\sfrac{1}{|\mathbf{S}|}\cdot\sum_{s\in\mathbf{S}}s$.
		\item We define a \textbf{subgroup} by a function $\vartheta(\mathbf{B},\mathbf{M})$ that takes a set of attribute indices and values $(i,a)\in\mathbf{B}$ with a model $\mathbf{M}$ and returns a population by selecting the indicator values of individuals that satisfy all given attributes.
		\begin{equation}
			\vartheta(\mathbf{B},\mathbf{M}) \definedAs \{\tau(0,\rho)\in\mathbf{M}~:~(\forall (i,a)\in\mathbf{B})[\tau(i,\rho)=a]\}
			\label{eq:subgroup}
		\end{equation}
		\item We define a \textbf{partitioning} of a model by a function $\Gamma(\mathbf{a},\mathbf{M})$ that takes a set of attribute indices $i\in\mathbf{a}$ and a model $\mathbf{M}$ and returns a population. Each distinct subgroup from the considered attributes shall be represented by its size and cumulative indicator value. As it can be seen from Section \ref{sec:bg-lorenz}, this is (Lorenz) equivalent to representing each individual $\rho\in\mathbf{M}$ by the average indicator value of its subgroup:
		\begin{equation}
			\Gamma(\mathbf{a},\mathbf{M}) \definedAs
				\biguplus_{\mathbf{B}\in\mathbf{C}}\quad \biguplus_{\_\in\vartheta(\mathbf{B},\mathbf{M})}\left\{\sum_{s\in\vartheta(\mathbf{B},\mathbf{M})} \frac{s}{|\vartheta(\mathbf{B},\mathbf{M})|}\right\} \qquad\text{ where: } \mathbf{C} = \mathcal{C}(\left\{\{i\}\times\mathbb{A}_i~:~i\in\mathbf{a}\right\})
			\label{eq:partition}
		\end{equation}
		\item We notate an inequality measure as function $I : \mathcal{P}_M(\mathbb{R}_{\geq 0}) \rightarrow \mathbb{R}_{\geq 0}$ that assigns a non-negative real value to any population.
	\end{itemize}
	\label{def:terminology}
\end{Definition}
\begin{Assumption}
	Throughout this work, we assume that indicator values are non-negative ($\forall \rho\in\mathbf{M}~:~\tau(0,\rho)\geq 0$) and that at least one individual has a non-zero indicator value ($\exists \rho\in\mathbf{M} : \tau(0,\rho) > 0$).
\end{Assumption}
\begin{Example}
	Consider a \ac{RPL}, where battery-powered devices form a tree for routing packets to a root node. For a comparison with other protocols, we are interested in how evenly the required energy is split between devices.
	\begin{itemize}
		\item Let each device in the network have two attributes (n=2): a device type $\mathbb{A}_1=\{A, B\}$ and rank $\mathbb{A}_2=\mathbb{N}_{\geq 0}$ that indicates the length of its shortest path to the root.
		\item Let the indicator variable be the average power consumption caused by the routing protocol relative to the device's battery size.
		\item For a network of four devices, let the system model be $\mathbf{M}=\{(0.01,A,0),\ (0.05,B,1),\ (0.03,B,1),\ (0.01,B,2)\}$.
		\item The population $\vartheta(\{(1,B)\},\mathbf{M}) = \{0.05,\ 0.03,\ 0.01\}$ is the subgroup of individuals with the first attribute (device type) having value $B$. This subgroup has an average indicator value of $0.3$.
		\item The partition on the first attribute gives the population $\Gamma(\{1\},\mathbf{M}) = \{0.01,\ 0.03,\ 0.03,\ 0.03\}$. The partition on both attributes gives the population $\Gamma(\{1,2\},\mathbf{M}) = \{0.01,\ 0.04,\ 0.04,\ 0.01\}$. The partition on no attribute gives a uniform distribution $\Gamma(\emptyset,\mathbf{M}) = \{0.025,\ 0.025,\ 0.025,\ 0.025\}$, since $\mathbf{C}=\{\emptyset\}$ in \refEq{eq:partition} and $\vartheta(\emptyset,\mathbf{M})$ returns the indicator value of all individuals.
		\item Note that we refer with 'total inequality' to the inequality between distinguishable individuals based on all given attributes $I(\Gamma(\{1,..,n\},\mathbf{M}))$. As it can be seen from the partition $\Gamma(\{1,2\},\mathbf{M})$ above, distinguishing all individuals may require a unique \ac{ID} which can be modeled as additional attribute $\mathbb{A}_3$.
	\end{itemize}
	\label{example:bg-1-notation}
\end{Example}

	\subsection{Measuring inequality}
	\label{sec:bg-inequality-measures}
%
\subsubsection{Inequality metric properties}
An inequality measure should satisfy the following properties:
\begin{PropertyM}[Label invariance~\cite{centralbank2019}]
	Inequality is invariant to the label of groups or individuals.
	\label{prop:ineq-1}
\end{PropertyM}
\begin{PropertyM}[Duplication invariance~\cite{dalton1920}]\hfill\newline
	Inequality is invariant when duplicating each individual in the population (size invariance).
	\begin{equation}
		I(\mathbf{S}) = I(\mathbf{S} \uplus \mathbf{S})
	\end{equation}
	\label{prop:ineq-2}
\end{PropertyM}
\begin{PropertyM}[Scale invariance~\cite{allison1978}]\hfill\newline
	Inequality is invariant under linear scaling of the indicator variable by a factor $k\in\mathbb{R}_{> 0}$ (unit invariance).
	\begin{equation}
		I(\mathbf{S}) = I(\{k\cdot s ~:~ s \in \mathbf{S}\})
	\end{equation}
	\label{prop:ineq-3}
\end{PropertyM}
\begin{PropertyM}[Pigou-Dalton transfer principle~\cite{dalton1920,pigou1912}]\hfill\newline
	Consider a population $\mathbf{S} = \mathbf{G}\uplus\{s_1,s_2\}$ and a population  $\mathbf{S}' = \mathbf{G}\uplus\{s'_1,s'_2\}$, where $s_1\neq s_2$ and $(s'_1,s'_2)$ is a convex combination of $(s_1,s_2)$ with $q\in(0,0.5]$ as shown in \refEq{eq:transfer}. We say $\mathbf{S}'$ represents the population $\mathbf{S}$ after a Pigou-Dalton transfer between $s_1$ and $s_2$. 
	\begin{equation}
		\begin{aligned}
			s'_1 &= (1-q)\cdot s_1 + q\cdot s_2\\
			s'_2 &= q\cdot s_1+ (1-q)\cdot s_2
		\end{aligned}
		\label{eq:transfer}
	\end{equation}
	\begin{itemize}
		\item weak version: A non-zero Pigou-Dalton transfer ($q\in(0,0.5]$) can only reduce inequality $I(S) \geq I(S')$.
		\item strict version: A non-zero Pigou-Dalton transfer ($q\in(0,0.5]$) must reduce inequality $I(S) > I(S')$.
	\end{itemize}
	Note that satisfying Property \ref{prop:ineq-1} directly extends the range of $q\in(0,0.5]$ to $q\in(0,1)$, since $q>0.5$ equals a transfer with relabeling.
	\label{prop:ineq-4}
\end{PropertyM}
\begin{PropertyM}[Non-Negativity with zero at uniform distribution~\cite{bourguignon1979}]\hfill
\begin{itemize}
	\item Inequality is non-negative: $I(\mathbf{S})\geq 0$.
	\item Inequality is zero if all individuals have an identical indicator value.
	\begin{equation}
		\begin{aligned}
			(\forall s_1,s_2\in\mathbf{S}~:~ s_1 = s_2) &\Longrightarrow I(\mathbf{S}) = 0
		\end{aligned}
	\end{equation}
\end{itemize}
\label{prop:ineq-5}
\end{PropertyM}
\begin{Definition}
	\item An inequality measure satisfies the `weak Property \ref{prop:ineq-1}-\ref{prop:ineq-5}` when considering the weak version of Property \ref{prop:ineq-4}.
	\item An inequality measure satisfies the `strict Property \ref{prop:ineq-1}-\ref{prop:ineq-5}` when considering the strict version of Property \ref{prop:ineq-4}.
\end{Definition}

\subsubsection{Measures of inequality}
Several inequality measures are known to satisfy the weak or strict Property \ref{prop:ineq-1}-\ref{prop:ineq-5}. The following measures are commonly used in the literature~\cite{centralbank2019}:
\begin{itemize}
	\item Gini coefficient~\cite{gini1912}:
	\begin{equation}
		G(\mathbf{S}) = \frac{1}{2 \overline{\mathbf{S}} |\mathbf{S}|^2 }\sum_{s_1\in\mathbf{S}}\sum_{s_2\in\mathbf{S}}|s_1-s_2|
	\end{equation}
	\item Pietra index~\cite{pietra1915}, also known as Ricci-, Schutz- or Hoover index:
	\begin{equation}
		R(\mathbf{S}) = \frac{1}{2 |\mathbf{S}|}\sum_{s\in\mathbf{S}}\frac{|s-\overline{\mathbf{S}}|}{\overline{\mathbf{S}}}
		\label{eq:pietra}
	\end{equation}
	\item Generalized Entropy index~\cite{shorrocks1980}: a parameterized family, the special case of $c=1$ is known as Theil index~\cite{theilBook}. The parameter range ($c$) varies with restrictions on the indicator value.
	\begin{equation}
		\begin{aligned}
			&c \in \mathbb{R}\setminus\{0,1\}  &\Longrightarrow~&~ \text{GE}_{c}(\mathbf{S}) = \frac{1}{c(c-1)}\frac{1}{|\mathbf{S}|}\sum_{s\in\mathbf{S}}\left(\left(\frac{s}{\overline{\mathbf{S}}}\right)^c-1\right)\\
			&c = 1 &\Longrightarrow~&~  \text{GE}_{1}(\mathbf{S}) = \frac{1}{|\mathbf{S}|}\sum_{s\in\mathbf{S}} \frac{s}{\overline{\mathbf{S}}}\ln\left(\frac{s}{\overline{\mathbf{S}}}\right)\\
			&c = 0 &\Longrightarrow~&~ \text{GE}_{0}(\mathbf{S}) = -\frac{1}{|\mathbf{S}|}\sum_{s\in\mathbf{S}} \ln\left(\frac{s}{\overline{\mathbf{S}}}\right)\\
		\end{aligned}
		\label{eq:gec-1}
	\end{equation}
	\item Atkinson indexes~\cite{atkinson1970}: a parameterized family, designed with the properties of a welfare function in mind. It can be represented as transformation of the Generalized entropy index~\cite{inequalityBook}. 
	\begin{equation}
		\begin{aligned}
			&d \in (0,1) &\Longrightarrow~&~ A_{d}(\mathbf{S}) = 1-\left[d(d-1)\text{GE}_{1-d}(\mathbf{S}) + 1\right]^\frac{1}{1-d} &&= 	1- \frac{1}{\overline{\mathbf{S}}}\left(\frac{1}{|\mathbf{S}|} \sum_{s\in\mathbf{S}} s^{1-d}\right)^{\frac{1}{1-d}}
			\\
			&d = 1 &\Longrightarrow~&~  A_{1}(\mathbf{S}) = 1-e^{-\text{GE}_{0}(\mathbf{S})}&&= 1- \frac{1}{\overline{\mathbf{S}}}\left(\prod_{s\in\mathbf{S}}s\right)^{\frac{1}{|\mathbf{S}|}}
		\end{aligned}
		\label{eq:atkinson-index}
	\end{equation}
\end{itemize}

	\subsection{Lorenz curves and their ordering}
	\label{sec:bg-lorenz}
%
The Lorenz curve represents the minimal concentration of wealth in a subgroup of a particular size and is typically defined through the Quantile function~\cite{inequalityBook,centralbank2019}. However, we can equivalently define the Lorenz curve as the boundary of a zonogon~\cite{Mosler1996,Mosler2007}. This directly highlights the well-known relation~\cite{theilBook,atkinson1970} between the Lorenz curve~\cite{lorenz1905}, the Neyman-Pearson region from hypothesis testing~\cite[p. 278]{polyanskiy2022information}, and the (pointwise) Blackwell order from information theory~\cite{blackwell-no-lattice,mages2024}. Moreover, the definition through zonogons highlights additional properties of the inequality measures defined in Section \ref{sec:method}.

\subsubsection{Zonogons and their partial order}

\begin{Definition}[Stochastic matrix]
	A (row) stochastic matrix $\lambda\in\mathbb{R}_{\geq0}^{a\times b}$ of dimension $a \times b$ is a matrix, where all entries are non-negative real values and each row sums to one. In a double stochastic matrix, all entries are non-negative, and each row and column sums to one.
\end{Definition}
\begin{Definition}[Normalized population matrix]
	We define a function $\kappa: \mathcal{P}_M(\mathbb{R}_{\geq0}) \rightarrow \mathbb{R}_{\geq0}^{2\times m}$ as shown in \refEq{eq:population-to-matrix}. The function maps a population $\mathbf{S}$  to a $2\times |\mathbf{S}|$ row stochastic matrix by normalizing both, the population size and indicator value. The ordering of columns can be arbitrary (discussed below Definition \ref{def:zonogon}).
	\begin{equation}
		\kappa(\mathbf{S}) \definedAs \frac{1}{|\mathbf{S}|} \begin{bmatrix}~
			\begin{matrix}
				1\\ \sfrac{s}{\overline{\textbf{S}}}
			\end{matrix} &:~ s \in \mathbf{S}
		\end{bmatrix}
		\label{eq:population-to-matrix}
	\end{equation}
	\label{def:population-to-matrix}
\end{Definition}
\begin{Example}
A normalized population matrix for $\Gamma(\{1,2\},\mathbf{M})$ from Example \ref{example:bg-1-notation} is shown in \refEq{eq:matrix-example} (columns can be permuted).
\begin{equation}
	\kappa(\Gamma(\{1,2\},\mathbf{M})) = \begin{bmatrix}
		\sfrac{1}{4} & \sfrac{1}{4} & \sfrac{1}{4} & \sfrac{1}{4} \\
		\sfrac{1}{10} & \sfrac{4}{10} & \sfrac{4}{10} & \sfrac{1}{10} \end{bmatrix}
	\label{eq:matrix-example}
\end{equation}
\end{Example}
\begin{Notation}
	We access a vector within a normalized population matrix as $\vec{v}_i\in\kappa(\mathbf{S})$, such as $\vec{v}_i=\frac{1}{|\mathbf{S}|}\left(\begin{smallmatrix} 1 \\ \sfrac{s_i}{\overline{\textbf{S}}}\end{smallmatrix}\right)$.
\end{Notation}

\begin{Definition}[Zonogon~\cite{Mosler1996,Mosler2007,blackwell-no-lattice}]
	The function $Z: \mathbb{R}_{\geq0}^{2\times n} \rightarrow \mathcal{P}([0,1]^2)$ transforms a normalized population matrix into a zonogon. A zonogon (\refEq{eq:defZonogon}) is a set of two-dimensional points constructed from the Minkowski sum of line segments from its generating vectors $\vec{v}_i \in \kappa(\mathbf{S})$. 
	\begin{equation}
		Z(\kappa(\mathbf{S})) \definedAs \left\{\sum_{i=1}^{|\mathbf{S}|} x_i \vec{v}_i ~:~ x_i\in[0,1],~ \vec{v}_i \in \kappa(\mathbf{S})\right\} = \left\{\kappa(\mathbf{S}) a ~:~ a \in [0,1]^{|\mathbf{S}|}\right\}
		\label{eq:defZonogon}
	\end{equation}
	\label{def:zonogon}
\end{Definition}
The zonogon can be defined equivalently as image of the unit-cube $[0,1]^{|\mathbf{S}|}$ under the linear transformation of the given matrix and provides the following basic properties~\cite{blackwell-no-lattice}:
\begin{itemize}
	\item The zonogon of a stochastic matrix is a centrally symmetric convex polygon.
	\item The zonogon is invariant to permuting the order of matrix vectors:\newline$Z([\begin{smallmatrix}
		\vec{v}_1 ~&~ \dots ~&~ \vec{v}_2 ~&~ \dots
	\end{smallmatrix}]) = Z([\begin{smallmatrix}\vec{v}_2 ~&~ \dots ~&~ \vec{v}_1 ~&~ \dots\end{smallmatrix}])$.
	\item The zonogon is invariant to splitting/merging matrix vectors of identical slope:\newline $Z([\begin{smallmatrix}(1+\ell)\vec{v}_1 ~&~ \dots\end{smallmatrix}]) = Z([\begin{smallmatrix}\vec{v}_1 ~&~ \ell \vec{v}_1 ~&~ \dots\end{smallmatrix}])$.
	\item Ordering the matrix vectors $\vec{v}_i\in\kappa(\mathbf{S})$ by increasing/decreasing slope provides the zonogon perimeter (visualized in \refSubFig{fig:example-zonogon}{a}).
\end{itemize}
\begin{Notation}
	For abbreviation, we use the notation $Z_\kappa(\mathbf{S})\definedAs Z(\kappa(\mathbf{S}))$.
\end{Notation}
Zonogon examples and their interpretation are discussed in Example \ref{example:zonogon-lorenz} of Section \ref{subsec:lorenz-atkinson}.
\begin{Definition}[Zonogon order~\cite{blackwell-no-lattice}]
	The subset relation (\refEq{eq:zonogon-order-1}) is a partial order of zonogons from $2\times \_$ row stochastic matrices that forms a (non-distributive) lattice with unique meet and join elements. Under this ordering relation, the meet of two zonogons corresponds to their intersection and their join corresponds to the convex hull of their union.
	\begin{subequations}
		\begin{align}
			Z_\kappa(\mathbf{S}_1) ~&\subseteq Z_\kappa(\mathbf{S}_2)\quad \label{eq:zonogon-order-1}\\
			\Longleftrightarrow \qquad\kappa(\mathbf{S}_1) ~&= \kappa(\mathbf{S}_2)\ \lambda \quad \text{for some row stochastic matrix $\lambda$}\label{eq:zonogon-order-2}
			\end{align}
			\label{eq:zonogon-order}
	\end{subequations}
	\label{def:zonogon-order}
\end{Definition}

A zonogon is a subset of another $Z_\kappa(\mathbf{S}_1) \subseteq Z_\kappa(\mathbf{S}_2)$ if and only if there exists a row stochastic matrix $\lambda$ such that $\kappa(\mathbf{S}_1) = \kappa(\mathbf{S}_2) \lambda$ (\refEq{eq:zonogon-order-2})~\cite{blackwell-no-lattice}. 
This relation leads to \refEq{eq:product-subset}, which is useful since any sequence of Pigou-Dalton transfers corresponds to a multiplication by some stochastic matrix (see Appendix \ref{apsub:prop-representation}).
\begin{equation}
	Z(\kappa(\mathbf{S})\ \lambda) \subseteq Z_\kappa(\mathbf{S})
	\label{eq:product-subset}
\end{equation}
We can use the lattice of zonogons to define a lattice of population equivalence classes.
\begin{Definition}[Population equivalence]
	We say two populations $(\mathbf{S}_1,\mathbf{S}_2)$ are equivalent ($\cong$) if and only if they generate the same zonogon.
	\begin{equation}
		(\mathbf{S}_1\cong \mathbf{S}_2) \definedAs (Z_\kappa(\mathbf{S}_1) = Z_\kappa(\mathbf{S}_2))
	\end{equation}
	\label{def:pop-equivalence}
\end{Definition}
\begin{Notation}\hfill
	\begin{itemize}
	\item We notate the equivalence class of a population as $\langle\mathbf{S}_1\rangle \definedAs \{\mathbf{S}_2\in\mathcal{P}(\mathcal{P}_M(\mathbb{R}_{\geq 0}))~:~\mathbf{S}_2 \cong \mathbf{S}_1\}$.
	\item We extend the notation for zonogons to equivalence classes $Z_\kappa(\langle\mathbf{S}\rangle) \definedAs Z_\kappa(\mathbf{S})$.
	\end{itemize}
	\label{not:pop-classes}
\end{Notation}
\begin{Definition}[Lattice of population equivalence classes]
	The lattice of zonogons provides a lattice for the equivalence classes of populations. We notate their ordering as $\langle\mathbf{S}_1\rangle \sqsubseteq \langle\mathbf{S}_2\rangle$, their meet as $\langle\mathbf{S}_1\rangle \sqcap \langle\mathbf{S}_2\rangle$ and join as $\langle\mathbf{S}_1\rangle \sqcup \langle\mathbf{S}_2\rangle$. We notate a top and bottom population for the lattice as $\top_\textbf{S}=\{0,1\}$ and $\bot_\textbf{S}=\{1\}$ respectively. $\textnormal{Conv}(\cdot)$ indicates the convex hull in \refEq{eq:pop-join}.
	\begin{subequations}
		\begin{align}
			(\langle\mathbf{S}_1\rangle \sqsubseteq \langle\mathbf{S}_2\rangle) &\definedAs (Z_\kappa(\mathbf{S}_1) \subseteq Z_\kappa(\mathbf{S}_2))\\
			(\langle\mathbf{S}_1\rangle \sqsubset \langle\mathbf{S}_2\rangle) &\definedAs (Z_\kappa(\mathbf{S}_1) \subset Z_\kappa(\mathbf{S}_2))\\
			Z_\kappa(\langle\mathbf{S}_1\rangle \sqcap \langle\mathbf{S}_2\rangle) &=
			Z_\kappa(\mathbf{S}_1) \cap Z_\kappa(\mathbf{S}_2)\\
			Z_\kappa(\langle\mathbf{S}_1\rangle \sqcup \langle\mathbf{S}_2\rangle) &=
			\textnormal{Conv}\left(Z_\kappa(\mathbf{S}_1) \cup Z_\kappa(\mathbf{S}_2)\right)\label{eq:pop-join}
		\end{align}
		\label{eq:pop-class-lattice}
	\end{subequations}
	\label{def:pop-class-lattice}
\end{Definition}

\begin{Notation}
	The equivalence class of the `joint` distribution for two attributes is $\langle\Gamma(\{1,2\},\mathbf{M})\rangle$, while the `join` of both attributes is $\langle\Gamma(\{1\},\mathbf{M})\rangle \sqcup \langle\Gamma(\{2\},\mathbf{M})\rangle$.
\end{Notation}

To obtain a set-theoretic behavior of inequality measures, we have to understand the inclusion-exclusion relation between the defined lattice operations. For an example of this concept, we can first use the standard set-theoretic inclusion-exclusion relation ($|A \cup B| = |A| + |B| - |A \cap B|$) to obtain \refEq{eq:basic-in-ex}: For a non-empty set of populations ($\emptyset \neq \mathbf{A}$), computing an inclusion-exclusion principle on the zonogon area of the meet (zonogon intersection) gives the area of their union, which is a lower bound on the area of their join (convex hull of the union). We can separate terms based on their sign (\refEq{eq:basic-in-ex-2}) to recognize another inclusion-exclusion principle below. 
\begin{subequations}
		\begin{align}
			\area{Z_\kappa\left(\bigsqcup_{\mathbf{S}\in\mathbf{A}} \langle\mathbf{S}\rangle\right)} \geq \area{\bigcup_{\mathbf{S}\in\mathbf{A}} Z_\kappa(\mathbf{S})} = \sum_{\emptyset \neq \mathbf{B} \subseteq \mathbf{A}}(-1)^{|\mathbf{B}|-1} \area{Z_\kappa\left(\bigsqcap_{\mathbf{S}\in\mathbf{B}}\langle\mathbf{S}\rangle\right)}\label{eq:basic-in-ex}\\
			\area{Z_\kappa\left(\bigsqcup_{\mathbf{S}\in\mathbf{A}} \langle\mathbf{S}\rangle\right)}+ 
			\sum_{\stackrel{\emptyset\neq\mathbf{B}\subseteq\mathbf{A}}{|\mathbf{B}| \text{ even}}} \area{Z_\kappa\left(\bigsqcap_{\mathbf{S}\in\mathbf{B}}\langle\mathbf{S}\rangle\right)}
			\geq\sum_{\stackrel{\mathbf{B}\subseteq\mathbf{A}}{|\mathbf{B}| \text{ odd}}} \area{Z_\kappa\left(\bigsqcap_{\mathbf{S}\in\mathbf{B}}\langle\mathbf{S}\rangle\right)}\label{eq:basic-in-ex-2}
		\end{align}
		\label{eq:basic-in}
\end{subequations}

Instead of measuring the area, we will define a class of inequality measures in Section \ref{subsec:generalized-ineq-measure} that is additive with the zonogon sum:
\begin{Definition}[Zonogon sum]
	The addition of two zonogons corresponds to their Minkowski sum:
	\begin{subequations}
		\begin{align}
			Z_\kappa(\langle\mathbf{S}_1\rangle) + Z_\kappa(\langle\mathbf{S}_2\rangle) \definedAs Z_\kappa(\mathbf{S}_1) + Z_\kappa(\mathbf{S}_2) ~&\definedAs \left\{a+b ~:~ a\in Z_\kappa(\mathbf{S}_1),~ b\in Z_\kappa(\mathbf{S}_2)\right\} \label{eq:def-zono-sum-1}\\
			~&\ = Z\left(\begin{bmatrix}\kappa(\mathbf{S}_1)&\kappa(\mathbf{S}_2)\end{bmatrix}\right)
			\label{eq:def-zono-sum-2}
		\end{align}
		\label{eq:def-zono-sum}
	\end{subequations}
	\label{def:zono-sum}
\end{Definition}
The defined operators provide the following inclusion-exclusion relation at the zonogon sum~\cite[Lemma A5]{mages2024}.
\begin{equation}
	Z_\kappa\left(\bigsqcap_{\mathbf{S} \in \mathbf{A}} \langle\mathbf{S}\rangle\right) + \sum_{\stackrel{\emptyset\neq\mathbf{B}\subseteq\mathbf{A}}{|\mathbf{B}| \text{ even}}} Z_\kappa\left(\bigsqcup_{\mathbf{S}\in\mathbf{B}}\langle\mathbf{S}\rangle\right) \subseteq \sum_{\stackrel{\mathbf{B}\subseteq\mathbf{A}}{|\mathbf{B}| \text{ odd}}} Z_\kappa\left(\bigsqcup_{\mathbf{S}\in\mathbf{B}}\langle\mathbf{S}\rangle\right)
	\label{eq:zonogon-relation}
\end{equation}

\subsubsection{Operational meaning of zonogons}
\label{subsec:lorenz-atkinson}
\begin{Definition}[Lorenz Curve~\cite{lorenz1905}]
	The Lorenz curve maps a fraction of the population (x-axis) to the minimal fraction of the indicator value  (y-axis) concentrated in any subgroup of this size. The Lorenz curve is the lower boundary of the zonogon (Definition \ref{def:zonogon}, visualized in \refSubFig{fig:example-zonogon}{a})~\cite{Mosler1996,Mosler2007}.
	\label{def:lorenz-cuve}
\end{Definition}

\begin{Definition}[Atkinson criterion]
	Assume two populations ($\mathbf{S}_1,\mathbf{S}_2$) with identical indicator mean ($\overline{\textbf{S}}_1 =\overline{\textbf{S}}_2$) and let the welfare of a population be the expected value of an increasing concave function $w(\cdot)$. Some populations can be compared without agreeing on the specific function $w(\cdot)$, which leads to the Atkinson criterion shown in \refEq{eq:at-crit-1}~\cite{atkinson1970}.
	\begin{subequations}
		\begin{align}
		& \forall w~:~\frac{1}{|\mathbf{S}_2|}\sum_{s\in\mathbf{S}_2} w(s)\leq \frac{1}{|\mathbf{S}_1|} \sum_{s\in\mathbf{S}_1} w(s) &&\text{where $w(\cdot)$ is increasing and concave}\label{eq:at-crit-1}\\
		\Longleftrightarrow  \quad&\forall v~:~\frac{1}{|\mathbf{S}_1|}\sum_{s\in\mathbf{S}_1}v(s) \leq \frac{1}{|\mathbf{S}_2|}\sum_{s\in\mathbf{S}_2}v(s)  &&\text{where $v(t)\definedAs -w(t)$ is decreasing and convex}\label{eq:at-crit-4}\\
		\Longleftrightarrow \quad& \exists \lambda~:~\kappa(\mathbf{S}_1) = \kappa(\mathbf{S}_2)\ \lambda &&\text{where $\lambda$ is a row stochastic matrix}\label{eq:at-crit-2}\\
		\Longleftrightarrow \quad& Z_\kappa(\mathbf{S}_1) \subseteq Z_\kappa(\mathbf{S}_2)\label{eq:at-crit-3}\\
		\Longleftrightarrow \quad& \quad\ \langle\mathbf{S}_1\rangle \sqsubseteq \langle\mathbf{S}_2\rangle
		\end{align}
		\label{eq:atkinson-crit}
	\end{subequations}
	\label{def:atkinson-crit}
\end{Definition}
For the context of this work, we can change the perspective from higher welfare to lower inequality (\refEq{eq:at-crit-4}), where the convexity of $v(\cdot)$ leads to Jensen's inequality. \citet{atkinson1970} showed that a population has a higher welfare (in this context: lower inequality) for any $w$ (in this context: $v$) if and only if there exists a sequence of Pigou-Dalton transfers from $\mathbf{S}_2$ to $\mathbf{S}_1$ (\refEq{eq:at-crit-2}). This equals the condition of non-intersecting Lorenz curves and the zonogon order (\refEq{eq:at-crit-3})~\cite{atkinson1970}.
\begin{Remark}
	The condition of an identical indicator mean in Definition \ref{def:atkinson-crit} has no further importance if the inequality measure is invariant to the population size and scaling of the indicator variable (Property \ref{prop:ineq-1}-\ref{prop:ineq-3}). In this case, normalizing the population size and indicator variable always results in the same mean without affecting the inequality measure.
\end{Remark}

\begin{Example}
	Consider the model $\textbf{M}$ obtained from Table \ref{tbl:example-zonogonmodel} with the two attributes $\mathbb{A}_1 = \{A,B\}$ and $\mathbb{A}_2 = \{C,D\}$.
	\begin{table}[h!]\centering
		\begin{tabular}{r | c c | l}\toprule
			Indicator value & $\mathbb{A}_1$ &  $\mathbb{A}_2$ & Number of individuals\\\hline
			\sfrac{1}{6} & A & D & 6\\
			\sfrac{2}{3} & A & C & 3\\
			7 & B & C & 1 \\\midrule
			Total sum:\quad 10 & / & / & 10\\\bottomrule
		\end{tabular}
		\caption{Example population model}
		\label{tbl:example-zonogonmodel}
	\end{table}
	
	We can construct the following three partitions based on the given attributes. To abbreviate the notation, we can sum columns with an identical slope without affecting the underlying zonogon, as discussed above. The order of columns is arbitrary.
	\begin{subequations}
		\begin{align}
			Z_\kappa(\Gamma(\{1,2\},\mathbf{M}))  &= 
			Z\left(\begin{bmatrix}
				0.1 & 0.1 & 0.1 & 0.1 & 0.1 &0.1 & 0.1 & 0.1 & 0.1 & 0.1 \\
				\sfrac{1}{60}& \sfrac{1}{60} & \sfrac{1}{60} & \sfrac{1}{60} & \sfrac{1}{60} & \sfrac{1}{60} & \sfrac{2}{30} & \sfrac{2}{30} & \sfrac{2}{30} & \sfrac{7}{10} \\ 
			\end{bmatrix}\right)\\ &= Z\left( \begin{bmatrix}
				0.6 & 0.3 & 0.1 \\ 0.1 &  0.2 & 0.7
			\end{bmatrix}\right)\\
			Z_\kappa(\Gamma(\{1\},\mathbf{M})) &= 
				Z\left(\begin{bmatrix}
					0.1 & 0.1 & 0.1 & 0.1 & 0.1 &0.1 & 0.1 & 0.1 & 0.1 & 0.1 \\
					\sfrac{7}{10}& \sfrac{1}{30} & \sfrac{1}{30} & \sfrac{1}{30} & \sfrac{1}{30} & \sfrac{1}{30} & \sfrac{1}{30} & \sfrac{1}{30} & \sfrac{1}{30} & \sfrac{1}{30} \\ 
				\end{bmatrix}\right)\\
				& = Z\left(\begin{bmatrix}
					0.1 & 0.9 \\ 0.7 & 0.3
				\end{bmatrix}\right)\\
			Z_\kappa(\Gamma(\{2\},\mathbf{M}))  &= 
			Z\left(\begin{bmatrix}
				0.1 & 0.1 & 0.1 & 0.1 & 0.1 &0.1 & 0.1 & 0.1 & 0.1 & 0.1 \\
				\sfrac{1}{60}& \sfrac{1}{60} & \sfrac{1}{60} & \sfrac{1}{60} & \sfrac{1}{60} & \sfrac{1}{60} & \sfrac{9}{40} & \sfrac{9}{40} & \sfrac{9}{40} & \sfrac{9}{40} \\ 
			\end{bmatrix}\right)\\
			&= Z\left(\begin{bmatrix}
				0.6 & 0.4 \\ 0.1 & 0.9
			\end{bmatrix}\right)
		\end{align}
	\end{subequations}
	The zonogon for each partition is shown in \refFig{fig:example-zonogon}. The lower boundary of the zonogon is the Lorenz curve (\refSubFig{fig:example-zonogon}{a}). Each edge segment of the zonogon corresponds to one subgroup of the partition, and its slope represents the expected normalized indicator value of the individuals within it (\refSubFig{fig:example-zonogon}{b}). As shown in \refSubFig{fig:example-zonogon}{b}, the partitions $\Gamma(\{1\},\mathbf{M})$ and $\Gamma(\{2\},\mathbf{M})$ are incomparable since neither zonogon is a subset of the other. This means that there does not exist a sequence of Pigou-Dalton transfers to convert one population into the other and that the preferable partition depends on the considered inequality measure. However, we can always construct Pigou-Dalton transfers to eliminate an attribute, which leads to the relation of \refEq{eq:partition-subset}.
	\begin{equation}
		\mathbf{a} \subseteq \mathbf{b} \Longrightarrow Z_\kappa(\Gamma(\mathbf{a},\mathbf{M})) \subseteq Z_\kappa(\Gamma(\mathbf{b},\mathbf{M}))
		\label{eq:partition-subset}
	\end{equation}
	The numbers of this particular example (Table \ref{tbl:example-zonogonmodel}) were chosen such that the joint attribute distribution corresponds to the join of partitioning on the individual attributes: $\langle\Gamma(\{1,2\},\mathbf{M})\rangle = \langle\Gamma(\{1\},\mathbf{M})\rangle \sqcup \langle\Gamma(\{2\},\mathbf{M})\rangle$. The join plays an important role since it represents the dependence between attributes ($\mathbb{A}_1$, $\mathbb{A}_2$) that leads to a zonogon that is unique and a subset of any other dependence. Thus, we can construct Pigou-Dalton transfers from all other attribute dependencies to arrive at the join population. The attribute dependence of the join provides minimal inequality under any measure satisfying Property \ref{prop:ineq-1}-\ref{prop:ineq-5}. Therefore, it represents a notion of ideal attribute dependence and demonstrates that the ideal dependence between attributes is measure independent.
	\begin{figure}[h]
		\centering	
		\begin{subfigure}[t]{0.48\textwidth}\centering
%
\begin{tikzpicture}[color=black]
	\draw[draw=none,fill=gray!20,fill opacity=0.7] (0,0) -- (4,4) -- (4,0) -- (0,0);
	\draw[mark=none,black] (0,0) -- (4,4);
	\draw[gray,fill=magenta!40,fill opacity=0.5] (0,0) -- (4*0.1,4*0.7) -- (4*0.4,4*0.9) -- (4,4) -- (4-4*0.1,4-4*0.7) -- (4-4*0.4,4-4*0.9) -- (0,0);
	\begin{axis}[height=5.58cm, width=5.58cm,xmin=0, xmax=1,ymin=0, ymax=1,  
		ylabel={\small cumulative indicator}, xlabel={\small proportion with lowest indicator},minor x tick num=1,minor y tick num=1,clip=false,
		legend style={at={(1.1,0.5)},anchor=west,draw=none,legend plot pos=left,legend cell align=left}]
		\draw[red,line width=.5mm] (1,1) -- (1-0.1,1-0.7) -- (1-0.4,1-0.9) -- (0,0);
		\path[draw=none] (1-0.1,1-0.7) -- node[midway,sloped,below] {\color{red}\footnotesize Lorenz curve} (1-0.4,1-0.9);
		\addplot[only marks,mark=*,mark size=1] coordinates {(0,0) (0.1,0.7) (0.4,0.9) (1,1) (1-0.1,1-0.7) (1-0.4,1-0.9) (0,0)};
		\path[-latex, draw=black] (0,0) -- node[pos=0.55,anchor=south] {\small$\vec{v}_{1}$} (0.6,0.1);
		\path[-latex, draw=black] (0.6,0.1) -- node[pos=0.5,anchor=south east] {\small$\vec{v}_{2}$} (0.6+0.3,0.1+0.2);
		\path[-latex, draw=black] (0.6+0.3,0.1+0.2) -- node[pos=0.55,anchor=east] {\small$\vec{v}_{3}$} (0.6+0.3+0.1,0.1+0.2+0.7);
		\addplot[only marks,mark=*,mark size=1] coordinates {(0.6+0.3+0.1,0.1+0.2+0.7)};
		\path[latex-, draw=black] (1-0,1-0) -- node[pos=0.55,anchor=north] {\small$\vec{v}_{1}$} (1-0.6,1-0.1);
		\path[latex-, draw=black] (1-0.6,1-0.1) -- node[pos=0.5,anchor=north west] {\small$\vec{v}_{2}$} (1-0.6-0.3,1-0.1-0.2);
		\path[latex-, draw=black] (1-0.6-0.3,1-0.1-0.2) -- node[pos=0.55,anchor=west]{\small$\vec{v}_{3}$} (1-0.6-0.3-0.1,1-0.1-0.2-0.7);
	\end{axis};
	\foreach \col/\label [count=\i] in {{magenta/{$Z_\kappa(\Gamma(\{1,2\},\mathbf{M})) = Z\left( \begin{bmatrix}\vec{v}_1&\vec{v}_2&\vec{v}_3\end{bmatrix}\right)$}}}{
		\node[anchor=west] at ($(-0.35-0.2,-1.85) - \i*(0,-0.4)$) {\footnotesize \label\vphantom{q}};%
		\draw[fill=\col!30] ($(-0.8,-1.8) - \i*(0,-0.4) + (0.2,0.1)$) rectangle ($(-0.8,-1.8) - \i*(0,-0.4) - (0.2,0.1)$);}
	\node at (3.2,-2.2) {\footnotesize$= Z\left( \begin{bmatrix}
			0.6 & 0.3 & 0.1 \\ 0.1 &  0.2 & 0.7
		\end{bmatrix}\right)$};
	\node at (0,-2.5) {};
\end{tikzpicture}
			\caption{Zonogon boundary}
		\end{subfigure}%
		~ 
		\begin{subfigure}[t]{0.48\textwidth}\centering
%
\begin{tikzpicture}[color=black]
	\begin{scope}[xshift=-0.5mm,yshift=0.5mm]
	\path[clip] (-1.5,-1.5) -- (5,5) -- (-1.5,5) -- cycle;
	\draw[draw=none,fill=gray!20,fill opacity=0.7] (0,0) -- (4,4) -- (4,0) -- (0,0);
	\draw[fill=my-green!40,fill opacity=0.5, opacity=0.5] (0,0) --  (4*0.4,4*0.9) -- (4,4) -- (0,0);
	\draw[fill=blue!40,fill opacity=0.5, opacity=0.5] (0,0) -- (4*0.1,4*0.7) -- (4,4) -- (0,0);
	\draw[fill=blue!40,fill opacity=0.5,opacity=0.5] (0,0) -- (4,4) -- (4-4*0.1,4-4*0.7) -- (0,0);
	\draw[fill=my-green!40,fill opacity=0.5,opacity=0.5] (0,0) -- (4,4) -- (4-4*0.4,4-4*0.9) -- (0,0);
	\begin{axis}[height=5.58cm, width=5.58cm,xmin=0, xmax=1,ymin=0, ymax=1,  
		ylabel={\small cumulative indicator}, xlabel={\small proportion with lowest indicator},minor x tick num=1,minor y tick num=1,clip=false,
		legend style={at={(1.1,0.5)},anchor=west,draw=none,legend plot pos=left,legend cell align=left}]
		\addplot[only marks,mark=*,mark size=1] coordinates {(0,0) (0.1,0.7)(0.4,0.9) (1,1) (1-0.1,1-0.7) (1-0.4,1-0.9) (0,0)};
		\path[draw=black,thick] (0,0)-- node[pos=0.6,sloped,below] {\footnotesize $\vartheta({(1,B)},M)$} (0.1,0.7);
		\path[draw=black,thick] (0.1,0.7) -- node[pos=0.4,sloped,below] {\footnotesize $\vartheta({(1,A)},M)$} (1,1);
		\path[draw=black,thick] (0,0) -- node[pos=0.6,sloped,above] {\footnotesize $\vartheta({(2,D)},M)$} (1-0.4,1-0.9);
		\path[draw=black,thick] (1-0.4,1-0.9) -- node[pos=0.4,sloped,above] {\footnotesize $\vartheta({(2,C)},M)$} (1,1);
	\end{axis};
	\end{scope}
	
	\begin{scope}[xshift=0.5mm,yshift=-0.5mm]
		\path[clip] (-1.5,-1.5) -- (5,5) -- (5,-1.5) -- cycle;
		\draw[draw=none,fill=gray!20,fill opacity=0.7] (0,0) -- (4,4) -- (4,0) -- (0,0);
		\draw[fill=my-green!40,fill opacity=0.5, opacity=0.5] (0,0) --  (4*0.4,4*0.9) -- (4,4) -- (0,0);
		\draw[fill=blue!40,fill opacity=0.5, opacity=0.5] (0,0) -- (4*0.1,4*0.7) -- (4,4) -- (0,0);
		\draw[fill=blue!40,fill opacity=0.5,opacity=0.5] (0,0) -- (4,4) -- (4-4*0.1,4-4*0.7) -- (0,0);
		\draw[fill=my-green!40,fill opacity=0.5,opacity=0.5] (0,0) -- (4,4) -- (4-4*0.4,4-4*0.9) -- (0,0);
		\begin{axis}[height=5.58cm, width=5.58cm,xmin=0, xmax=1,ymin=0, ymax=1,  
			ylabel={\small cumulative indicator}, xlabel={\small proportion with lowest indicator},minor x tick num=1,minor y tick num=1,clip=false,
			legend style={at={(1.1,0.5)},anchor=west,draw=none,legend plot pos=left,legend cell align=left}]
			\addplot[only marks,mark=*,mark size=1] coordinates {(0,0) (0.1,0.7)(0.4,0.9) (1,1) (1-0.1,1-0.7) (1-0.4,1-0.9) (0,0)};
			\path[draw=black,thick] (0,0)-- node[pos=0.6,sloped,below] {\footnotesize $\vartheta({(1,B)},M)$} (0.1,0.7);
			\path[draw=black,thick] (0.1,0.7) -- node[pos=0.4,sloped,below] {\footnotesize $\vartheta({(1,A)},M)$} (1,1);
			\path[draw=black,thick] (0,0) -- node[pos=0.6,sloped,above] {\footnotesize $\vartheta({(2,D)},M)$} (1-0.4,1-0.9);
			\path[draw=black,thick] (1-0.4,1-0.9) -- node[pos=0.4,sloped,above] {\footnotesize $\vartheta({(2,C)},M)$} (1,1);
		\end{axis};
	\end{scope}	
	
	\foreach \col/\label [count=\i] in {{my-green/{$Z_\kappa(\Gamma(\{2\},\mathbf{M}))$}},{blue/{$Z_\kappa(\Gamma(\{1\},\mathbf{M}))$}}}{
		\node[anchor=west] at ($(1.25,-2.25) - \i*(0,-0.4)$) {\footnotesize \label\vphantom{q}};%
		\draw[fill=\col!30] ($(1.0,-2.2) - \i*(0,-0.4) + (0.2,0.1)$) rectangle ($(1.0,-2.2) - \i*(0,-0.4) - (0.2,0.1)$);}
		
	\node at (0,-2.5) {};
\end{tikzpicture}
			\caption{Zonogon relation and partition subgroups}
		\end{subfigure}
		\caption{\textbf{Zonogon construction, the meaning of its boundary and their ordering.} (\textbf{a}) The zonogon of a population is a symmetric convex polygon containing the line from (0,0) to (1,1). Sorting the vectors of a normalized population matrix by increasing slope provides the lower boundary of the zonogon, which is the Lorenz curve. (\textbf{b}) Each zonogon boundary segment corresponds to one subgroup of the partition, and its slope is the expected normalized indicator value of its individuals. The subgroups for the partition on attribute $\mathbb{A}_1$ are labeled in the upper triangle and those for the partition on $\mathbb{A}_2$ are labeled in the lower triangle. The example was constructed such that the join of both attributes (\refSubFig{fig:example-zonogon}{b}) equals their joint distribution (\refSubFig{fig:example-zonogon}{a}). For any other attribute dependence, the zonogon of their joint distribution is a superset of \refSubFig{fig:example-zonogon}{a}.}
		\label{fig:example-zonogon}
	\end{figure}
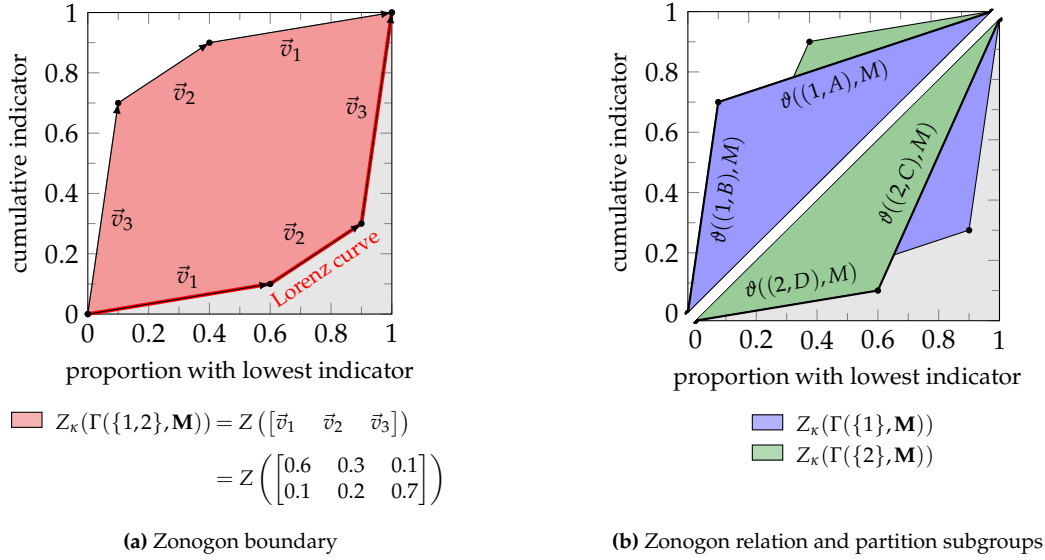
	\label{example:zonogon-lorenz}
\end{Example}

\subsubsection{From ordering to quantification}
\label{subsubsec:order-prop-relation}
We can simplify the required properties of inequality measures for the remaining context of this work by using the ordering of population equivalence classes:
\begin{WeakPropertyStar}
	The inequality measure $I(\cdot)$ shall maintain the zonogon order and quantify a bottom population ($\bot_{\mathbf{S}}$) to zero (\refEq{eq:mainain-weak-zonogon-order}). 
	\begin{subequations}
		\begin{align}
			\langle\mathbf{S}_1\rangle \sqsubseteq \langle\mathbf{S}_2\rangle &\Longrightarrow I(\mathbf{S}_1) \leq I(\mathbf{S}_2),\label{eq:mainain-weak-zonogon-order-1}\\
			I(\bot_{\mathbf{S}}) &= 0.\label{eq:mainain-weak-zonogon-order-2}
		\end{align}
		\label{eq:mainain-weak-zonogon-order}
	\end{subequations}
	\label{prop:weak-propertyStar}
\end{WeakPropertyStar}
\begin{StrictPropertyStar}
	The inequality measure $I(\cdot)$ shall maintain the strict zonogon order and quantify a bottom population ($\bot_{\mathbf{S}}$) to zero (\refEq{eq:mainain-strict-zonogon-order}).
	\begin{subequations}
		\begin{align}
			\langle\mathbf{S}_1\rangle = \langle\mathbf{S}_2\rangle &\Longrightarrow I(\mathbf{S}_1) = I(\mathbf{S}_2),\label{eq:mainain-strict-zonogon-order-1}\\
			\langle\mathbf{S}_1\rangle \sqsubset \langle\mathbf{S}_2\rangle &\Longrightarrow I(\mathbf{S}_1) < I(\mathbf{S}_2),\label{eq:mainain-strict-zonogon-order-2}\\
			I(\bot_{\mathbf{S}}) &= 0.\label{eq:mainain-strict-zonogon-order-3}
		\end{align}
		\label{eq:mainain-strict-zonogon-order}
	\end{subequations}
	\label{prop:strict-propertyStar}
\end{StrictPropertyStar}
\begin{Lemma}
	Satisfying Property \ref{prop:weak-propertyStar} implies that the inequality measure satisfies the weak Property \ref{prop:ineq-1}-\ref{prop:ineq-5}.
	\label{thm:weak-star-implies-1-5}
\end{Lemma}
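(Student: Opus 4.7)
The plan is to verify each weak property \ref{prop:ineq-1}--\ref{prop:ineq-5} in turn, showing in each case that the operation in question either preserves the zonogon equivalence class or produces a population below the original in the zonogon order, and then invoking Property \ref{prop:weak-propertyStar}. Note that $\langle\mathbf{S}_1\rangle=\langle\mathbf{S}_2\rangle$ means $\langle\mathbf{S}_1\rangle\sqsubseteq\langle\mathbf{S}_2\rangle$ and $\langle\mathbf{S}_2\rangle\sqsubseteq\langle\mathbf{S}_1\rangle$, so \refEq{eq:mainain-weak-zonogon-order-1} applied in both directions yields $I(\mathbf{S}_1)=I(\mathbf{S}_2)$. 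This single observation is the engine for the first three invariances.

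\textbf{Label invariance (M1)} is immediate because a population is formally a multiset of non-negative reals, so relabelings of individuals do not alter the argument of $I(\cdot)$ at all. \textbf{Duplication invariance (M2)} follows by observing that $\kappa(\mathbf{S}\uplus\mathbf{S})$ is obtained from $\kappa(\mathbf{S})$ by splitting every column $\vec{v}_i$ into two copies of $\tfrac12\vec{v}_i$, which have identical slope. Splitting matrix vectors of identical slope leaves the zonogon invariant, so $\langle\mathbf{S}\rangle=\langle\mathbf{S}\uplus\mathbf{S}\rangle$ and equality of $I$ follows. \textbf{Scale invariance (M3)} holds because scaling every indicator value by $k>0$ scales $\overline{\mathbf{S}}$ by the same factor, so each ratio $s/\overline{\mathbf{S}}$ is unchanged and $\kappa$ is identical; again the two populations share a zonogon.

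For the \textbf{weak Pigou-Dalton principle (M4)}, I use the fact (stated in the excerpt around \refEq{eq:product-subset}, with the detailed representation deferred to Appendix \ref{apsub:prop-representation}) that any Pigou-Dalton transfer from $\mathbf{S}$ to $\mathbf{S}'$ can be written as $\kappa(\mathbf{S}')=\kappa(\mathbf{S})\,\lambda$ for some row-stochastic matrix $\lambda$. By \refEq{eq:zonogon-order-2}--\refEq{eq:product-subset} this gives $Z_\kappa(\mathbf{S}')\subseteq Z_\kappa(\mathbf{S})$, i.e.\ $\langle\mathbf{S}'\rangle\sqsubseteq\langle\mathbf{S}\rangle$, and \refEq{eq:mainain-weak-zonogon-order-1} yields $I(\mathbf{S}')\leq I(\mathbf{S})$.

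Finally, for \textbf{non-negativity with zero at uniform (M5)}, note that if all $s\in\mathbf{S}$ are equal then every column of $\kappa(\mathbf{S})$ is a scalar multiple of $(1,1)^\top$, so $Z_\kappa(\mathbf{S})$ is the diagonal segment from $(0,0)$ to $(1,1)$, which is exactly $Z_\kappa(\bot_{\mathbf{S}})$; equivalence together with \refEq{eq:mainain-weak-zonogon-order-2} gives $I(\mathbf{S})=0$. For non-negativity on arbitrary $\mathbf{S}$, observe that every zonogon is a centrally symmetric convex set containing $(0,0)$ and $(1,1)$, hence contains the diagonal $Z_\kappa(\bot_{\mathbf{S}})$; thus $\bot_{\mathbf{S}}\sqsubseteq\langle\mathbf{S}\rangle$ and \refEq{eq:mainain-weak-zonogon-order-1} with \refEq{eq:mainain-weak-zonogon-order-2} gives $0=I(\bot_{\mathbf{S}})\leq I(\mathbf{S})$. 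The only non-routine step is M4, where the link between a Pigou-Dalton transfer and a stochastic matrix requires the explicit construction deferred to the appendix; the remaining items reduce to direct calculations with $\kappa$ and the listed zonogon invariances.
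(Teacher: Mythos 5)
Your proof is correct and follows essentially the same route as the paper's: each of M1--M5 is reduced to an equivalence or ordering of zonogon classes and then Property $(\text{weak}~\star)$ is applied, exactly as the paper does via its Lemmas \ref{lem:label-inv-perm}--\ref{lem:prop-5-helper}. Your handling of M1 (multiset invariance directly), M2 (splitting columns of equal slope) and the non-negativity part of M5 (convexity of the zonogon containing $(0,0)$ and $(1,1)$) replaces the paper's explicit stochastic-matrix constructions with equivalent and slightly more direct observations, but the argument is the same in substance.
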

\begin{Lemma}
	Satisfying Property \ref{prop:strict-propertyStar} implies that the inequality measure satisfies the strict Property \ref{prop:ineq-1}-\ref{prop:ineq-5}.
	\label{thm:strict-star-implies-1-5}
\end{Lemma}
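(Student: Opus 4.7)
The plan is to verify the strict versions of Properties M1--M5 one at a time, using the fact that Property (strict $\star$) upgrades any instance of the zonogon order $\sqsubseteq$ (respectively strict order $\sqsubset$) into an inequality (respectively strict inequality) for $I$. So the entire task reduces to translating each M-property into a statement about the lattice of equivalence classes $\langle\cdot\rangle$.

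First I would handle the ``invariance'' properties M1, M2, M3 and the second bullet of M5 uniformly, by showing that in each case the transformation leaves the equivalence class fixed, so \eqref{eq:mainain-strict-zonogon-order-1} immediately gives equality of $I$. For M1, a relabeling is simply a permutation of columns of $\kappa(\mathbf{S})$, which leaves the zonogon unchanged by the permutation-invariance listed below Definition~\ref{def:zonogon}. For M2, the matrix $\kappa(\mathbf{S}\uplus\mathbf{S})$ is obtained from $\kappa(\mathbf{S})$ by replacing each generating vector $\vec{v}_i$ with two copies of $\tfrac{1}{2}\vec{v}_i$, which is a legal split of vectors of identical slope; hence $\langle\mathbf{S}\rangle = \langle\mathbf{S}\uplus\mathbf{S}\rangle$. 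For M3, the scaling factor $k$ appears identically in $s$ and in $\overline{\mathbf{S}}$ inside $\kappa$, so $\kappa(\mathbf{S}) = \kappa(\{ks:s\in\mathbf{S}\})$, and the equivalence class is preserved. For the ``zero at uniform'' half of M5, if all $s_i$ coincide then every $\vec{v}_i$ has slope $1$, so merging them via the split/merge rule reduces $Z_\kappa(\mathbf{S})$ to the diagonal segment, which is exactly $Z_\kappa(\bot_\mathbf{S})$; thus $\langle\mathbf{S}\rangle = \langle\bot_\mathbf{S}\rangle$ and \eqref{eq:mainain-strict-zonogon-order-1} together with \eqref{eq:mainain-strict-zonogon-order-3} give $I(\mathbf{S}) = 0$.

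Next I would handle the non-negativity half of M5. For any population $\mathbf{S}$, the diagonal $Z_\kappa(\bot_\mathbf{S})$ is contained in every zonogon from a $2\times\_$ stochastic matrix (it is just the central symmetry axis from $(0,0)$ to $(1,1)$), so $\langle\bot_\mathbf{S}\rangle \sqsubseteq \langle\mathbf{S}\rangle$. Applying \eqref{eq:mainain-strict-zonogon-order-2} (in its non-strict form, which is implied by strict order together with \eqref{eq:mainain-strict-zonogon-order-1}) and \eqref{eq:mainain-strict-zonogon-order-3} yields $I(\mathbf{S}) \geq 0$.

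The main obstacle is M4 in its strict form. The key input is that any Pigou-Dalton transfer from $\mathbf{S}$ to $\mathbf{S}'$ can be represented as $\kappa(\mathbf{S}') = \kappa(\mathbf{S})\,\lambda$ for a row-stochastic $\lambda$ (this is exactly the content cited from Appendix~\ref{apsub:prop-representation}), so by Definition~\ref{def:zonogon-order} we get $\langle\mathbf{S}'\rangle \sqsubseteq \langle\mathbf{S}\rangle$. The delicate point is showing that a \emph{non-zero} transfer with $s_1 \neq s_2$ and $q\in(0,\tfrac{1}{2}]$ yields the \emph{strict} relation $\langle\mathbf{S}'\rangle \sqsubset \langle\mathbf{S}\rangle$, so that \eqref{eq:mainain-strict-zonogon-order-2} can be invoked to give $I(\mathbf{S}') < I(\mathbf{S})$. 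The argument I would give is to exhibit a boundary point of $Z_\kappa(\mathbf{S})$ that fails to lie in $Z_\kappa(\mathbf{S}')$: orient the generating vectors of $\kappa(\mathbf{S})$ by increasing slope so that the two generators $\vec{v}_1,\vec{v}_2$ corresponding to $s_1,s_2$ are adjacent on the Lorenz boundary, and observe that averaging them under the transfer strictly moves the vertex between these two segments upward, shrinking the zonogon in a neighborhood of that vertex. Equivalently, for $q\in(0,\tfrac{1}{2}]$ and $s_1\neq s_2$ the stochastic matrix $\lambda$ built from the transfer is not a permutation matrix, and an elementary check shows that $\kappa(\mathbf{S})\lambda$ then generates a proper subset of $Z_\kappa(\mathbf{S})$. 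Combining this strict inclusion with \eqref{eq:mainain-strict-zonogon-order-2} closes the argument and completes the proof.
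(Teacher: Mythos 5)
Your proposal is correct and takes essentially the same route as the paper: each of Properties M1--M5 is translated into a statement about equivalence classes under the zonogon order (the paper packages these as Lemmas \ref{lem:label-inv-perm}--\ref{lem:prop-5-helper} in Appendix \ref{apsub:prop-representation}) and then Property $(\text{strict}~\star)$ is invoked, exactly as you do. One minor imprecision: in the M4 step the generators for $s_1,s_2$ need not be adjacent on the slope-sorted Lorenz boundary if other individuals have intermediate indicator values, but your fallback argument --- that the transfer corresponds to a non-permutation stochastic matrix whose action strictly shrinks the zonogon, so no reverse stochastic factorization exists --- is precisely the content of the paper's Lemma \ref{lem:transfer-inferior}, stated at a comparable level of informality.
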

The proof of Lemma \ref{thm:weak-star-implies-1-5} and \ref{thm:strict-star-implies-1-5} is shown in Appendix \ref{subap:proofs-order-prop}.
The relation between some inequality measures and the Lorenz curve (and thus their zonogons) is well established: 
The Gini coefficient is known to equal twice the area between the Lorenz curve and diagonal~\cite[p. 121]{theilBook}. Thus, the Gini coefficient equals the zonogon area ($G(\mathbf{S}) = \area{Z_\kappa(\mathbf{S})}$) and satisfies Property \ref{prop:strict-propertyStar}. The Pietra index is known to equal the maximal vertical distance between the Lorenz curve and diagonal~\cite[p. 17]{centralbank2019} and thus satisfies Property \ref{prop:weak-propertyStar}.

	\section{Methodology}
	\label{sec:method}

	We begin by defining a family of inequality measures (Section \ref{subsec:generalized-ineq-measure}) that are additive under the zonogon sum and demonstrate that several established measures are its special case. Section \ref{subsec:intuitionandshapley} provides an intuition for the concepts of redundancy and synergy and highlights the limitation of analyses with Shapley values in this setting. With this motivation, we explain the decomposition lattice and desired properties for a set-theoretic intuition (Section \ref{subsec:decompositoin-lattice}). Section \ref{subsec:decomposition-ineq-measure} defines a decomposition that satisfies the desired properties and provides a suitable operational interpretation. We demonstrate how the decomposition results can be transformed to other inequality measures, such as an Atkinson index (Section \ref{subsec:transforming-ineq-measure}). Finally, Section \ref{subsec:layered-inequality} discusses multi-layered inequality and Section \ref{sec:relation} highlights the relation between decomposing measures of information and inequality.

	\subsection{Defining f-inequality}
	\label{subsec:generalized-ineq-measure}
%
If a zonogon is a subset of another, then it shall obtain a smaller inequality score to obtain Property \ref{prop:ineq-1}-\ref{prop:ineq-5} from Lemma \ref{thm:weak-star-implies-1-5} and \ref{thm:strict-star-implies-1-5}. For a first intuition, consider quantifying the length of the zonogon boundary (Lorenz curve): All zonogons are convex and have a common start and end point. Therefore, if a zonogon is a subset of another (Atkinson criterion), then its boundary is shorter. 

For turning this conceptual idea into a family of inequality measures, we can follow a simple strategy: (1) Define the inequality measure as sum of quantifying each vector in the normalized population matrix (zonogon boundary segment) by a function $r$. This could be re-phrased to a sum of quantifying each individual of the population. (2) The function $r$ shall satisfy three properties: (a) quantify any vector of slope one to zero, (b) scale linearly, and (c) be convex. Quantifying any vector of slope one to a score of zero ensures quantifying the bottom element ($\bot_{\mathbf{S}}$) correctly. The linear scaling and convexity provide a triangle inequality on the zonogon boundary, which then reflects their subset relation on the inequality measure. Interestingly, we previously studied a function that satisfies exactly these properties for decomposing information measures~\cite{mages2024}.
\begin{Notation}
	We reserve the name $f$ for generator functions of an $f$-divergence~\cite{csiszar1967}: Let $f :(0,\infty) \rightarrow \mathbb{R}$ be a function that satisfies the following three properties. By convention we understand that $f(0) = \lim_{t\rightarrow 0^+}f(t)$ and $0 f\left(\tfrac{0}{0}\right) = 0$:
	\begin{itemize}
		\item $f$ is convex,
		\item $f(1) = 0$,
		\item $f(t)$ is finite for all $t > 0$.
	\end{itemize}
	\label{not:f-div}
\end{Notation}

\begin{Definition}[$f$-inequality]\hfill
	\begin{itemize}
		\item Define a function $r_{f,p}$ as shown in \refEq{eq:f-pseudo-distance} to quantify a vector $\vec{v}=\left[\begin{smallmatrix}x\\y\end{smallmatrix}\right]$ of the zonogon boundary with $p,x,y\in[0,1]$.
		\item Define a parameterized class of $f$-inequality measures ($p\in[0,1]$) as shown in \refEq{eq:zonogon-perimeter} to be the sum of all segments from the Lorenz curve for a populations $\mathbf{S}$.
	\end{itemize}
	\begin{subequations}
		\begin{align}
			r_{f,p}\left(\left[\begin{smallmatrix}x\\y\end{smallmatrix}\right]\right) &\definedAs \left(px+(1-p)y\right) \cdot f\left(\frac{x}{px+(1-p)y}\right)	\label{eq:f-pseudo-distance}\\
			I_{f,p}\left(\mathbf{S}\right) &\definedAs \sum_{\vec{v}\in\kappa(\mathbf{S})} r_{f,p}(\vec{v}) \label{eq:zonogon-perimeter}\\
			&= \frac{1}{|\mathbf{S}|}\sum_{s\in\mathbf{S}}\frac{f\left(g(p,\overline{\mathbf{S}},s)\right)}{g(p,\overline{\mathbf{S}},s)}\qquad \text{where: ~}~ g(p,\overline{\mathbf{S}},s)=\frac{\overline{\mathbf{S}}}{p\overline{\mathbf{S}}+(1-p)s}\label{eq:new-measure}
		\end{align}
	\end{subequations}
	\label{def:generalized-ineq}
\end{Definition}
\begin{Notation}
	\item We say an $f$-inequality measure is `strict` if and only if its generator function $f$ is strictly convex.
	\item We say an $f$-inequality measure is `weak` if and only if its generator function $f$ is not strictly convex.
\end{Notation}

\begin{Theorem}[Properties of $r_{f,p}$ and $I_{f,p}$]
	For a constant $p\in[0,1]$:
	\begin{enumerate}
		\item the function $r_{f,p}(\vec{v})$:
		\begin{enumerate}
			\item \makebox[7.2cm][l]{quantifies any vector of slope one to zero:} $r_{f,p}\left(\left[\begin{smallmatrix}\ell\\\ell\end{smallmatrix}\right]\right)=0$
			\item \makebox[7.2cm][l]{quantifies the zero vector to zero:} $r_{f,p}\left(\left[\begin{smallmatrix}0\\0\end{smallmatrix}\right]\right)=0$
			\item \makebox[7.2cm][l]{scales linearly in $\vec{v}$ where $\ell\in\mathbb{R}$:} $r_{f,p}(\ell\vec{v})=\ell r_{f,p}(\vec{v})$
			\item {is convex in $\vec{v}$:}\begin{itemize}
				\item \makebox[6.5cm][l]{$f$-inequality $\ell\in\{0,1\}$:} $r_{f,p}(\ell\vec{v}_1+(1-\ell)\vec{v}_2) = \ell r_{f,p}(\vec{v}_1)+(1-\ell)r_{f,p}(\vec{v}_2)$
				\item\makebox[6.5cm][l]{weak $f$-inequality $\ell\in(0,1)$:} $r_{f,p}(\ell\vec{v}_1+(1-\ell)\vec{v}_2) \leq \ell r_{f,p}(\vec{v}_1)+(1-\ell)r_{f,p}(\vec{v}_2)$
				\item \makebox[6.5cm][l]{strict $f$-inequality $\ell\in(0,1)$:} $r_{f,p}(\ell\vec{v}_1+(1-\ell)\vec{v}_2) < \ell r_{f,p}(\vec{v}_1)+(1-\ell)r_{f,p}(\vec{v}_2)$
			\end{itemize} 
			\item {satisfies a triangle inequality in $\vec{v}$:}\begin{itemize}
				\item \makebox[6.5cm][l]{$f$-inequality $\textnormal{Slope}(\vec{v}_1) = \textnormal{Slope}(\vec{v}_2)$:} $r_{f,p}(\vec{v}_1+\vec{v}_2) = r_{f,p}(\vec{v}_1)+r_{f,p}(\vec{v}_2)$
				\item\makebox[6.5cm][l]{weak $f$-inequality $\textnormal{Slope}(\vec{v}_1) \neq \textnormal{Slope}(\vec{v}_2)$:} $r_{f,p}(\vec{v}_1+\vec{v}_2) \leq r_{f,p}(\vec{v}_1)+r_{f,p}(\vec{v}_2)$
				\item \makebox[6.5cm][l]{strict $f$-inequality $\textnormal{Slope}(\vec{v}_1) \neq \textnormal{Slope}(\vec{v}_2)$:} $r_{f,p}(\vec{v}_1+\vec{v}_2) < r_{f,p}(\vec{v}_1)+r_{f,p}(\vec{v}_2)$
			\end{itemize}  
		\end{enumerate}
		\item the function $I_{f,p}(\mathbf{S})$:
		\begin{enumerate}
			\item \makebox[6cm][l]{quantifies the bottom element to zero:} $I_{f,p}(\bot_\mathbf{S})=0$
			\item \makebox[6cm][l]{maintains the zonogon order:} 
			\begin{itemize}
			\item \makebox[5.2cm][l]{$f$-inequality:} $\langle\mathbf{S}_1\rangle = \langle\mathbf{S}_2\rangle \Longrightarrow I_{f,p}(\mathbf{S}_1) = I_{f,p}(\mathbf{S}_2)$
			\item\makebox[5.2cm][l]{weak $f$-inequality:} $\langle\mathbf{S}_1\rangle \sqsubseteq \langle\mathbf{S}_2\rangle \Longrightarrow I_{f,p}(\mathbf{S}_1) \leq I_{f,p}(\mathbf{S}_2)$
			\item \makebox[5.2cm][l]{strict $f$-inequality:} $\langle\mathbf{S}_1\rangle \sqsubset \langle\mathbf{S}_2\rangle \Longrightarrow I_{f,p}(\mathbf{S}_1) < I_{f,p}(\mathbf{S}_2)$
			\end{itemize} 
		\end{enumerate}
	\end{enumerate} 
	\label{thm:rf-props}
\end{Theorem}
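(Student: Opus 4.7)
The plan is to first establish the pointwise properties of $r_{f,p}$ in Part 1 and then lift them to the global statements about $I_{f,p}$ in Part 2 via the stochastic-matrix characterization \refEq{eq:zonogon-order-2} of the zonogon order.

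For Part 1, claims (a)--(c) follow by direct substitution into \refEq{eq:f-pseudo-distance}: a slope-one vector $[\ell;\ell]$ makes the argument of $f$ equal to $1$, so $f(1)=0$ gives (a); the zero vector vanishes by the convention $0\cdot f(0/0)=0$; and linearity in $\vec{v}$ is immediate because $px+(1-p)y$ is homogeneous of degree one while $x/(px+(1-p)y)$ is homogeneous of degree zero. For convexity (d), I would recognize $r_{f,p}$ as the composition of the linear map $(x,y)\mapsto(x,\,px+(1-p)y)$ with the perspective transform $(t,s)\mapsto s\,f(t/s)$ of $f$, which is convex whenever $f$ is. The strict case uses that this perspective is linear along rays from the origin but strictly convex transversally; vectors of different slopes produce distinct values of the ratio $x/(px+(1-p)y)$ (for $p<1$), so strict convexity of $f$ transfers to strict convexity of $r_{f,p}$ across the chord between $\vec{v}_1$ and $\vec{v}_2$. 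The triangle inequality (e) then follows via the homogeneity trick $r_{f,p}(\vec{v}_1+\vec{v}_2)=2\,r_{f,p}(\tfrac{1}{2}\vec{v}_1+\tfrac{1}{2}\vec{v}_2)\le r_{f,p}(\vec{v}_1)+r_{f,p}(\vec{v}_2)$, with the equal-slope case reducing to the linear-scaling identity $\vec{v}_2=c\,\vec{v}_1$ and the strict case inherited from (d) whenever slopes differ.

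For Part 2(a), the bottom $\bot_\mathbf{S}=\{1\}$ produces a single normalized column $[1;1]$, so (a) of Part 1 gives $I_{f,p}(\bot_\mathbf{S})=0$. For 2(b), writing $\kappa(\mathbf{S}_1)=\kappa(\mathbf{S}_2)\,\lambda$ with $\lambda$ row-stochastic and denoting the columns of $\kappa(\mathbf{S}_1)$ by $\vec{w}_j=\sum_i\lambda_{ij}\vec{v}_i$, the triangle inequality and linear scaling yield
\[
I_{f,p}(\mathbf{S}_1)=\sum_j r_{f,p}(\vec{w}_j)\le \sum_{i,j}\lambda_{ij}\,r_{f,p}(\vec{v}_i)=\sum_i r_{f,p}(\vec{v}_i)=I_{f,p}(\mathbf{S}_2),
\]
where the penultimate equality uses that each row of $\lambda$ sums to one. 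This gives the weak case, and applying the same bound in both directions when $\langle\mathbf{S}_1\rangle=\langle\mathbf{S}_2\rangle$ yields the unqualified equality statement.

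The main obstacle is the strict statement. The chain above becomes an equality whenever every column $\vec{w}_j$ pools only vectors $\vec{v}_i$ of identical slope, because then each application of (e) is an equality; but such a $\lambda$ merely encodes a permutation together with merging or splitting of same-slope vectors, which preserves the zonogon. Consequently, $\langle\mathbf{S}_1\rangle\sqsubset\langle\mathbf{S}_2\rangle$ forces at least one column of $\lambda$ to combine vectors of $\kappa(\mathbf{S}_2)$ of different slopes with strictly positive coefficients (otherwise $\kappa(\mathbf{S}_1)$ would generate the same zonogon as $\kappa(\mathbf{S}_2)$, contradicting the strict inclusion); the strict triangle inequality from 1(e) applied to this column then sharpens the estimate to $I_{f,p}(\mathbf{S}_1)<I_{f,p}(\mathbf{S}_2)$. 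The degenerate case $p=1$ collapses $r_{f,p}$ and hence $I_{f,p}$ to zero and should be excluded from the strict claim; making this exclusion rigorous, together with the argument that a non-reducible $\lambda$ actually exists whenever the zonogon inclusion is strict, is where the most care is needed.
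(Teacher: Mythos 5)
Your proof is correct and follows essentially the same route as the paper: direct substitution and homogeneity for 1(a)--(c), convexity of the perspective of $f$ for 1(d), the $\ell=\tfrac{1}{2}$ homogeneity trick for 1(e), and the column-wise application of the triangle inequality to $\kappa(\mathbf{S}_1)=\kappa(\mathbf{S}_2)\,\lambda$ for 2(b) --- and your justification that a strict zonogon inclusion forces some column of $\lambda$ to mix vectors of distinct slopes is actually spelled out more carefully than in the paper, which merely asserts the existence of such a column. Your remark about $p=1$ is also well taken: there $r_{f,1}\left(\left[\begin{smallmatrix}x\\y\end{smallmatrix}\right]\right)=x\,f(1)=0$ identically, so $I_{f,1}\equiv 0$ and the strict claims of 1(d), 1(e) and 2(b) fail, a degenerate case the paper's proof silently passes over (its strict-convexity step requires the arguments $x_1/a_1\neq x_2/a_2$, which never holds when $p=1$).
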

The proof of Theorem \ref{thm:rf-props} is shown in Appendix \ref{subap:f-ineq-properties}.
\begin{Corollary}\hfil
	\begin{itemize}
	\item Any weak $f$-inequality satisfies Property \ref{prop:weak-propertyStar} and the weak Property \ref{prop:ineq-1}-\ref{prop:ineq-5}.
	\item Any strict $f$-inequality satisfies Property \ref{prop:strict-propertyStar} and the strict Property \ref{prop:ineq-1}-\ref{prop:ineq-5}.
	\end{itemize}
\end{Corollary}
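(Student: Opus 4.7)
The plan is a direct two-step reduction that requires no new calculation. First I would observe that Property~\ref{prop:weak-propertyStar} and Property~\ref{prop:strict-propertyStar} each consist of a bottom-to-zero clause together with an order-preservation clause, and that both kinds of clause are literally the statements of item 2 of Theorem~\ref{thm:rf-props}. The bottom-to-zero clause $I_{f,p}(\bot_{\mathbf{S}}) = 0$ is supplied verbatim by item 2(a). The order-preservation clause is supplied by item 2(b): for a weak $f$-inequality the weak bullet gives $\langle\mathbf{S}_1\rangle \sqsubseteq \langle\mathbf{S}_2\rangle \Longrightarrow I_{f,p}(\mathbf{S}_1) \leq I_{f,p}(\mathbf{S}_2)$, which is precisely Property~\ref{prop:weak-propertyStar}; for a strict $f$-inequality, the generic $f$-inequality bullet of item 2(b) supplies equation~(\ref{eq:mainain-strict-zonogon-order-1}) and the strict bullet supplies the strict order statement (\ref{eq:mainain-strict-zonogon-order-2}), which together with item 2(a) is exactly Property~\ref{prop:strict-propertyStar}.

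Second, having established Property~\ref{prop:weak-propertyStar} (respectively Property~\ref{prop:strict-propertyStar}), I would invoke Lemma~\ref{thm:weak-star-implies-1-5} (respectively Lemma~\ref{thm:strict-star-implies-1-5}) to conclude that the measure satisfies the weak (respectively strict) Property~\ref{prop:ineq-1}--\ref{prop:ineq-5}. Both lemmas are already proved in Appendix~\ref{subap:proofs-order-prop}, so the chain closes immediately.

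There is no substantive obstacle here: the corollary is a bookkeeping assembly of Theorem~\ref{thm:rf-props} with the two order-to-properties lemmas. The only point requiring minor care is that in the strict case, the equality-of-equivalence-classes clause and the strict-inequality clause originate from two different bullets of Theorem~\ref{thm:rf-props} item 2(b); but the generic $f$-inequality bullet holds for any convex generator and hence in particular for strictly convex $f$, so no gap arises. Consequently the proof I would write fits comfortably in a single sentence of each direction followed by a citation of the corresponding lemma.
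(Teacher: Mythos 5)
Your proposal is correct and matches the paper's own proof, which reads ``Follows directly from Theorem \ref{thm:rf-props} with Lemma \ref{thm:weak-star-implies-1-5} and Lemma \ref{thm:strict-star-implies-1-5}.'' Your extra care in matching the bullets of Theorem \ref{thm:rf-props} item 2(b) to the clauses of Properties \ref{prop:weak-propertyStar} and \ref{prop:strict-propertyStar} is accurate but just spells out what the paper leaves implicit.
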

\begin{proof}Follows directly from Theorem \ref{thm:rf-props} with Lemma \ref{thm:weak-star-implies-1-5} and Lemma \ref{thm:strict-star-implies-1-5}.
\end{proof}
\begin{Notation}
	Since $f$-inequality is constant for all populations within an equivalence class (Theorem \ref{thm:rf-props} nr. 2b), we can quantify an equivalence class by any population that it contains: $I_{f,p}(\langle\mathbf{S}\rangle)\definedAs I_{f,p}(\mathbf{S})$.
\end{Notation}

The intended attribute decomposition will require an interpretation for the addition of inequality from multiple populations. Therefore, it will be helpful that the Minkowski sum of the underlying zonogons directly corresponds to the addition of $f$-inequality from their generating populations.
\begin{Lemma}
	Consider two non-empty sets of populations with equal cardinality ($|\mathbf{A}| = |\mathbf{B}|$), then:
	\begin{subequations}
		\begin{align}
			\text{$f$-inequality:}&&\sum_{\mathbf{S}\in \mathbf{A}}Z_\kappa(\mathbf{S}) = \sum_{\mathbf{S}\in \mathbf{B}}Z_\kappa(\mathbf{S}) &\Longrightarrow \sum_{\mathbf{S}\in \mathbf{A}}I_{f,p}(\mathbf{S}) = \sum_{\mathbf{S}\in \mathbf{B}}I_{f,p}(\mathbf{S})\\
			\text{weak $f$-inequality:}&&\sum_{\mathbf{S}\in \mathbf{A}}Z_\kappa(\mathbf{S}) \subseteq \sum_{\mathbf{S}\in \mathbf{B}}Z_\kappa(\mathbf{S}) &\Longrightarrow \sum_{\mathbf{S}\in \mathbf{A}}I_{f,p}(\mathbf{S}) \leq \sum_{\mathbf{S}\in \mathbf{B}}I_{f,p}(\mathbf{S})\\
			\text{strict $f$-inequality:}&&\sum_{\mathbf{S}\in \mathbf{A}}Z_\kappa(\mathbf{S}) \subset \sum_{\mathbf{S}\in \mathbf{B}}Z_\kappa(\mathbf{S}) &\Longrightarrow \sum_{\mathbf{S}\in \mathbf{A}}I_{f,p}(\mathbf{S}) < \sum_{\mathbf{S}\in \mathbf{B}}I_{f,p}(\mathbf{S})
		\end{align}
		\label{eq:minkowski-sum-subsets}
	\end{subequations}
	\label{lem:minkowski-sum-subsets}
\end{Lemma}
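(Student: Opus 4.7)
My plan is to reduce the lemma to Theorem \ref{thm:rf-props} through the matrix representation of Minkowski sums. By \refEq{eq:def-zono-sum-2}, setting $M_{\mathbf{A}}\definedAs\bigl[\kappa(\mathbf{S})\bigr]_{\mathbf{S}\in\mathbf{A}}$ and analogously $M_{\mathbf{B}}$, the two zonogon sums coincide with the zonogons of single concatenated matrices and the two inequality sums coincide with sums of $r_{f,p}$ over their columns: $\sum_{\mathbf{S}\in\mathbf{A}}Z_\kappa(\mathbf{S})=Z(M_{\mathbf{A}})$ and $\sum_{\mathbf{S}\in\mathbf{A}}I_{f,p}(\mathbf{S})=\sum_{\vec{v}\in M_{\mathbf{A}}}r_{f,p}(\vec{v})$. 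Because every $\kappa(\mathbf{S})$ is row stochastic, $M_{\mathbf{A}}/|\mathbf{A}|$ and $M_{\mathbf{B}}/|\mathbf{B}|$ are row stochastic; the cardinality hypothesis $|\mathbf{A}|=|\mathbf{B}|$ is exactly what is needed to lift Definition \ref{def:zonogon-order} from the normalized matrices back to $M_{\mathbf{A}}$ and $M_{\mathbf{B}}$, so the subset relation $Z(M_{\mathbf{A}})\subseteq Z(M_{\mathbf{B}})$ becomes equivalent to the existence of a row stochastic matrix $\lambda$ with $M_{\mathbf{A}}=M_{\mathbf{B}}\lambda$.

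Next I handle the weak case. Each column $\vec{w}_j$ of $M_{\mathbf{A}}$ is the non-negative combination $\sum_i\lambda_{ij}\vec{v}_i$ of columns of $M_{\mathbf{B}}$. Positive homogeneity (Theorem \ref{thm:rf-props}, property 1c) combined with convexity (property 1d) yields the subadditivity $r_{f,p}\bigl(\sum_i\lambda_{ij}\vec{v}_i\bigr)\leq\sum_i\lambda_{ij}r_{f,p}(\vec{v}_i)$, with the corner case $\sum_i\lambda_{ij}=0$ handled by property 1b. Summing over $j$ and exchanging the order of summation leaves $\sum_j r_{f,p}(\vec{w}_j)\leq\sum_i r_{f,p}(\vec{v}_i)\sum_j\lambda_{ij}=\sum_i r_{f,p}(\vec{v}_i)$ since $\lambda$ is row stochastic.

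The equality statement then follows from invariance of $\sum r_{f,p}$ under the zonogon-preserving transformations listed below Definition \ref{def:zonogon}: permutation is trivial, while splitting or merging a same-slope generator preserves the sum by the first bullet of property 1e in Theorem \ref{thm:rf-props} together with homogeneity. Every zonogon admits a canonical slope-ordered, slope-merged generating set, so whenever the two Minkowski sums coincide, $M_{\mathbf{A}}$ and $M_{\mathbf{B}}$ reduce to the same canonical set and their $r_{f,p}$-sums must agree.

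The main obstacle is the strict case. Here I need to show that $Z(M_{\mathbf{A}})\subsetneq Z(M_{\mathbf{B}})$ forces at least one column $\vec{w}_j=\sum_i\lambda_{ij}\vec{v}_i$ to mix columns $\vec{v}_i$ of at least two distinct slopes. I plan to argue by contradiction: if every $\vec{w}_j$ drew only from a single slope class, then row-stochasticity of $\lambda$ would make each slope class contribute exactly $\sum_{\operatorname{slope}(\vec{v}_i)=s}\vec{v}_i\cdot\sum_j\lambda_{ij}=\sum_{\operatorname{slope}(\vec{v}_i)=s}\vec{v}_i$ on the $M_\mathbf{A}$ side, so $M_{\mathbf{A}}$ and $M_{\mathbf{B}}$ would share the same canonical slope-merged generators and hence the same zonogon, contradicting the strict subset. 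On such a mixed-slope column, the strict bullet of property 1d (extended from a binary to a multi-slope convex combination by an easy induction using the first bullet of property 1e to absorb any same-slope residuals) gives the strict inequality $r_{f,p}(\vec{w}_j)<\sum_i\lambda_{ij}r_{f,p}(\vec{v}_i)$, and it survives the overall sum because all remaining columns satisfy at least the weak inequality proved above.
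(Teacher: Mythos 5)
Your proof is correct and follows essentially the same route as the paper: both reduce the Minkowski sums to single concatenated population matrices via \refEq{eq:def-zono-sum-2}, use the equal-cardinality hypothesis to normalize both by the same factor $1/m$ so that the zonogon-order characterization of Definition \ref{def:zonogon-order} applies, and then exploit homogeneity and convexity of $r_{f,p}$. The only difference is that the paper simply cites the already-established order-preservation result (Theorem \ref{thm:rf-props} nr.~2.b) for the normalized concatenated matrices and then un-normalizes by linearity, whereas you re-derive that step inline column-by-column via the stochastic matrix $\lambda$.
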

The proof of Lemma \ref{lem:minkowski-sum-subsets} is shown in Appendix \ref{subap:f-ineq-additivity}.
\begin{Corollary}
	Any $f$-inequality satisfies the following inclusion-exclusion relation:
	\begin{equation}
		\begin{aligned}
			I_{f,p}\left(\bigsqcap_{\mathbf{S} \in \mathbf{A}} \langle\mathbf{S}\rangle\right) &\leq \sum_{\emptyset\neq\mathbf{B}\subseteq\mathbf{A}}(-1)^{|\mathbf{B}|-1} I_{f,p}\left(\bigsqcup_{\mathbf{S}\in\mathbf{B}}\langle\mathbf{S}\rangle\right)
		\end{aligned}
	\end{equation}
	\label{col:f-ineq-inc-enc}
\end{Corollary}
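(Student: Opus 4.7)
The plan is to deduce the claim directly from the zonogon-level inclusion-exclusion relation in Equation~(16) by promoting the subset relation between Minkowski sums into an inequality between $f$-inequalities via Lemma~\ref{lem:minkowski-sum-subsets}. Concretely, Equation~(16) states
\begin{equation*}
    Z_\kappa\left(\bigsqcap_{\mathbf{S} \in \mathbf{A}} \langle\mathbf{S}\rangle\right) + \sum_{\stackrel{\emptyset\neq\mathbf{B}\subseteq\mathbf{A}}{|\mathbf{B}| \text{ even}}} Z_\kappa\left(\bigsqcup_{\mathbf{S}\in\mathbf{B}}\langle\mathbf{S}\rangle\right) \subseteq \sum_{\stackrel{\mathbf{B}\subseteq\mathbf{A}}{|\mathbf{B}| \text{ odd}}} Z_\kappa\left(\bigsqcup_{\mathbf{S}\in\mathbf{B}}\langle\mathbf{S}\rangle\right),
\end{equation*}
so the whole task reduces to invoking Lemma~\ref{lem:minkowski-sum-subsets} on the two families of zonogons appearing on either side.

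To apply the lemma I first verify the cardinality condition $|\mathbf{A}_{\text{left}}|=|\mathbf{A}_{\text{right}}|$ that the lemma requires. Writing $n=|\mathbf{A}|$, the left side aggregates one term for the intersection and one term for every non-empty even subset of $\mathbf{A}$, giving $1+(2^{n-1}-1)=2^{n-1}$ zonogons in the Minkowski sum. The right side aggregates one term for every odd subset of $\mathbf{A}$, giving $2^{n-1}$ zonogons as well. With the cardinalities matched, the weak-inequality case of Lemma~\ref{lem:minkowski-sum-subsets} converts the subset relation above into
\begin{equation*}
    I_{f,p}\!\left(\bigsqcap_{\mathbf{S} \in \mathbf{A}} \langle\mathbf{S}\rangle\right) + \sum_{\stackrel{\emptyset\neq\mathbf{B}\subseteq\mathbf{A}}{|\mathbf{B}| \text{ even}}} I_{f,p}\!\left(\bigsqcup_{\mathbf{S}\in\mathbf{B}}\langle\mathbf{S}\rangle\right) \;\leq\; \sum_{\stackrel{\mathbf{B}\subseteq\mathbf{A}}{|\mathbf{B}| \text{ odd}}} I_{f,p}\!\left(\bigsqcup_{\mathbf{S}\in\mathbf{B}}\langle\mathbf{S}\rangle\right).
\end{equation*}

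Finally, I move the even-parity terms to the right-hand side and collect them with the odd-parity terms, matching the sign pattern $(-1)^{|\mathbf{B}|-1}$. This yields exactly the claimed inclusion-exclusion inequality. The only substantive step is the cardinality bookkeeping that unlocks Lemma~\ref{lem:minkowski-sum-subsets}; everything else is a direct rewriting. Because Lemma~\ref{lem:minkowski-sum-subsets} already covers the $f$-inequality class (and not only the strict sub-class), the argument applies uniformly to every $f$-inequality, which is exactly the generality asserted by the corollary.
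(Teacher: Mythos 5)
Your proposal is correct and follows exactly the paper's own route: the paper's proof is the one-line ``Follows directly from Lemma \ref{lem:minkowski-sum-subsets} and \refEq{eq:zonogon-relation}'', and you have simply filled in the cardinality bookkeeping ($2^{n-1}$ terms on each side) and the rearrangement of parities that the paper leaves implicit.
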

\begin{proof}
	Follows directly from Lemma \ref{lem:minkowski-sum-subsets} and \refEq{eq:zonogon-relation}.
\end{proof}

\begin{Theorem}
	The Pietra index and Generalized Entropy index are special cases of $f$-inequality:
	\begin{subequations}
		\begin{align}
			R(\mathbf{S}) &= I_{f,p}(\mathbf{S}) &\text{where:~} & p=0~\text{ and }~ f(t) = \frac{|t-1|}{2}\\
			\text{GE}_{c}(\mathbf{S}) &= I_{f,p}(\mathbf{S}) &\text{where:~} & p=0~\text{ and }~f(t) = \frac{t^{1-c} - t}{c(c - 1)}\label{eq:ge-func}\\
			\text{GE}_1(\mathbf{S}) &= I_{f,p}(\mathbf{S}) &\text{where:~} & p=0~\text{ and }~f(t) = -\ln\left(t\right)\\
			\text{GE}_0(\mathbf{S}) &= I_{f,p}(\mathbf{S}) &\text{where:~} & p=0~\text{ and }~f(t) =  t\ln\left(t\right)
		\end{align}
	\end{subequations}
	\label{thm:special-cases}
\end{Theorem}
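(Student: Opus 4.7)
The plan is to start from the closed form in \refEq{eq:new-measure}, specialize to $p=0$, and then verify each claimed identity by direct substitution of the proposed generator $f$. Setting $p=0$ gives $g(0,\overline{\mathbf{S}},s)=\overline{\mathbf{S}}/s$, so
\begin{equation*}
	I_{f,0}(\mathbf{S})
	= \frac{1}{|\mathbf{S}|}\sum_{s\in\mathbf{S}}\frac{f(\overline{\mathbf{S}}/s)}{\overline{\mathbf{S}}/s}
	= \frac{1}{|\mathbf{S}|}\sum_{s\in\mathbf{S}}\frac{s}{\overline{\mathbf{S}}}\,f\!\left(\frac{\overline{\mathbf{S}}}{s}\right).
\end{equation*}
This single reduction is the engine of all four cases; the rest is bookkeeping.

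Next I would run through each case and check two things: (i) that the proposed $f$ is an admissible generator in the sense of Notation \ref{not:f-div} (convex, $f(1)=0$, finite on $(0,\infty)$), and (ii) that the substitution reproduces the stated index. For Pietra, $f(t)=|t-1|/2$ is clearly convex with $f(1)=0$, and the substitution gives $\tfrac{s}{\overline{\mathbf{S}}}\cdot\tfrac{|\overline{\mathbf{S}}/s-1|}{2}=\tfrac{|s-\overline{\mathbf{S}}|}{2\overline{\mathbf{S}}}$, matching \refEq{eq:pietra}. For $\text{GE}_1$ with $f(t)=-\ln t$, one gets $\tfrac{s}{\overline{\mathbf{S}}}\ln(s/\overline{\mathbf{S}})$ directly; for $\text{GE}_0$ with $f(t)=t\ln t$, the factor $t=\overline{\mathbf{S}}/s$ cancels against $s/\overline{\mathbf{S}}$ leaving $-\ln(s/\overline{\mathbf{S}})$. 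Convexity of $-\ln t$ and $t\ln t$ is standard, and $f(1)=0$ in both.

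The only case that needs a small algebraic manipulation is $\text{GE}_c$ with $f(t)=\frac{t^{1-c}-t}{c(c-1)}$. Substituting $t=\overline{\mathbf{S}}/s$ and multiplying by $s/\overline{\mathbf{S}}$ yields
\begin{equation*}
	\frac{s}{\overline{\mathbf{S}}}\cdot\frac{(\overline{\mathbf{S}}/s)^{1-c}-\overline{\mathbf{S}}/s}{c(c-1)}
	= \frac{1}{c(c-1)}\!\left[(s/\overline{\mathbf{S}})^{c}-1\right],
\end{equation*}
using $(\overline{\mathbf{S}}/s)^{1-c}\cdot(s/\overline{\mathbf{S}})=(s/\overline{\mathbf{S}})^{c}$, which matches the summand in \refEq{eq:gec-1}. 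Admissibility follows from $f(1)=0$ and $f''(t)=t^{-c-1}>0$ on $(0,\infty)$ (the sign works out because $(1-c)(-c)=c(c-1)$ cancels the denominator).

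There is no real obstacle here beyond keeping the sign of $c(c-1)$ straight; the argument is a direct verification once the $p=0$ specialization has been written down, and the convexity/finiteness checks for each $f$ are one-line observations.
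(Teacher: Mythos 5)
Your proposal is correct and follows essentially the same route as the paper: specialize \refEq{eq:new-measure} to $p=0$ to obtain $I_{f,0}(\mathbf{S})=\frac{1}{|\mathbf{S}|}\sum_{s\in\mathbf{S}}\frac{s}{\overline{\mathbf{S}}}f(\overline{\mathbf{S}}/s)$, then verify each case by direct substitution, with the $\text{GE}_c$ case handled by the identity $(\overline{\mathbf{S}}/s)^{1-c}\cdot(s/\overline{\mathbf{S}})=(s/\overline{\mathbf{S}})^c$ and convexity by $f''(t)=t^{-c-1}>0$. The only cosmetic difference is that you check generator admissibility for all four functions where the paper cites three of them as standard $f$-divergence generators and verifies only the $\text{GE}_c$ case in detail.
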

The proof of Theorem \ref{thm:special-cases} is shown separately in Appendix \ref{subap:f-ineq-cases}. This section presented the construction of inequality measures from any $f$-divergence.

	\subsection{Preliminary intuition for redundancy and synergy}
	\label{subsec:intuitionandshapley}
%
\subsubsection{Intuition examples}
\label{subsec:intuition}
Before constructing the desired decomposition, this section shall give an intuition for the concepts of redundant, unique, and synergetic contributions in the context of inequality measures. For this, we adopt typical examples~\cite{finn-ppid-2} that are fully determined by Property \ref{prop:ineq-1}-\ref{prop:ineq-5}:
\begin{Example}[Redundant contributions]
	\refFig{fig:redundancy-example} provides a fully redundant model $\mathbf{M}$. Attribute $\mathbb{A}_2$ is a re-labeled copy of attribute $\mathbb{A}_1$, and re-labeling groups shall not affect inequality (Property \ref{prop:ineq-1}). Therefore, the partitionings $\Gamma(\{1\},\mathbf{M})$, $\Gamma(\{2\},\mathbf{M})$, and $\Gamma(\{1,2\},\mathbf{M})$ must obtain the same inequality index as visualized by the Venn diagram in \refFig{fig:redundancy-example}. Since all regions fully intersect, we say the inequality is contributed redundantly by attributes $\mathbb{A}_1$ and $\mathbb{A}_2$.
	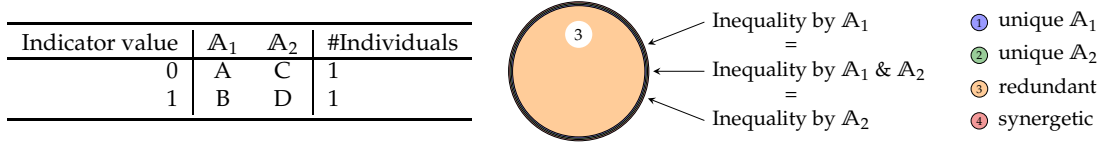
\begin{figure}[h]
		\centering
%
\scalebox{0.87}{\begin{tikzpicture}[color=black]\centering
		\begin{scope}[local bounding box=scope1]
			\node at (0,0) {
				\begin{tabular}{r | c c | l}\toprule
					Indicator value & $\mathbb{A}_1$ &  $\mathbb{A}_2$ & {\#}Individuals\\\hline
					0 & A & C & 1\\
					1 & B & D & 1\\\bottomrule
			\end{tabular}};
		\end{scope}
		\begin{scope}[shift={($(scope1.east)+(1.5cm,0)$)}]
			\node[anchor=west] (s1) at (1.9,0.75) {\small Inequality by $\mathbb{A}_1$};
			\node[anchor=west] (s2) at (1.9,-0.75) {\small Inequality by $\mathbb{A}_2$};
			\node[anchor=west] (s12) at (1.9,0) {\small Inequality by $\mathbb{A}_1$ \& $\mathbb{A}_2$};
			\node[anchor=north,yshift=1mm] at (s1.south) {\small =};
			\node[anchor=south,yshift=-1mm] at (s2.north) {\small =};
			\path[-stealth,shorten >=32pt]
			($(s12.west)$) edge (0,0)
			($(s1.west)$) edge (0,0)
			($(s2.west)$) edge (0,0);
			\path [line width=0.05mm,draw=black,fill=magenta!40] (0,0) circle (1.06);
			\path [line width=0.05mm,draw=black,fill=my-green!40] (0,0) circle (1.04);
			\path [line width=0.05mm,draw=black,fill=blue!40] (0,0) circle (1.02);
			\path [line width=0.05mm,draw=black,fill=orange!40] (0,0) circle (1);
			\draw (0,0.35) node[above,circle,fill=white,inner sep=2pt] {\scriptsize 3};
			\node[anchor=west] at (5.8,0.75) {\small \uniqueA~ unique $\mathbb{A}_1$};
			\node[anchor=west] at (5.8,0.25) {\small \uniqueB~ unique $\mathbb{A}_2$};
			\node[anchor=west] at (5.8,-0.25) {\small \redundancy~ redundant};
			\node[anchor=west] at (5.8,-0.75) {\small \synergy~ synergetic};
		\end{scope}
\end{tikzpicture}}
		\caption{\textbf{Redundancy example.} Fully redundant contribution by both attributes.}
		\label{fig:redundancy-example}
	\end{figure}
\end{Example}
\begin{Example}[Unique contributions]
	\refFig{fig:unique-example} provides a fully unique model $\mathbf{M}$ to attribute $\mathbb{A}_1$. The partitioning $\Gamma(\{2\},\mathbf{M})$ provides a uniform distribution and, thus, an inequality index of zero (Property \ref{prop:ineq-5}). The partitioning on attribute $\mathbb{A}_1$ provides the same population as partitioning on both attributes: $\Gamma(\{1,2\},\mathbf{M}) = \Gamma(\{1\},\mathbf{M})$. Therefore, both partitionings must obtain the same inequality index. This results in the Venn diagram in \refFig{fig:redundancy-example} and we conclude that inequality is contributed uniquely by attribute $\mathbb{A}_1$.
	\begin{figure}[h]
		\centering
%
\scalebox{0.87}{\begin{tikzpicture}[color=black]\centering
		\begin{scope}[local bounding box=scope1]
			\node at (0,0) {
				\begin{tabular}{r | c c | l}\toprule
					Indicator value & $\mathbb{A}_1$ &  $\mathbb{A}_2$ & {\#}Individuals\\\hline
					0 & A & C & 1 \\
					0 & A & D & 1 \\
					1/2 & B & C & 1 \\
					1/2 & B & D  & 1 \\ \bottomrule
			\end{tabular}};
		\end{scope}
		\begin{scope}[shift={($(scope1.east)+(1.5cm,0)$)}]
			\node[anchor=west] (s1) at (1.9,0.75) {\small Inequality by $\mathbb{A}_1$};
			\node[anchor=west] (s2) at (1.9,-0.75) {\small Inequality by $\mathbb{A}_2$ is zero};
			\node[anchor=west] (s12) at (1.9,0) {\small Inequality by $\mathbb{A}_1$ \& $\mathbb{A}_2$};
			\node[anchor=north,yshift=1mm] at (s1.south) {\small =};
			\path [line width=0.05mm,draw=black,fill=magenta!40] (0,0) circle (1.02);
			\path [line width=0.05mm,draw=black,fill=blue!40] (0,0) circle (1.00);
			\path [line width=0.05mm,draw=black,fill=my-green!40] (0,0) circle (0.04);
			\path [line width=0.05mm,draw=black,fill=orange!40] (0,0) circle (0.02);
			\draw (0,0.35) node[above,circle,fill=white,inner sep=2pt] {\scriptsize 1};
			\path[-stealth,shorten >=30pt]
			($(s12.west)$) edge (0,0)
			($(s1.west)$) edge (0,0);
			\path[-stealth,shorten >=2pt]($(s2.west)$) edge (0,0);
			\node[anchor=west] at (5.8,0.75) {\small \uniqueA~ unique $\mathbb{A}_1$};
			\node[anchor=west] at (5.8,0.25) {\small \uniqueB~ unique $\mathbb{A}_2$};
			\node[anchor=west] at (5.8,-0.25) {\small \redundancy~ redundant};
			\node[anchor=west] at (5.8,-0.75) {\small \synergy~ synergetic};
		\end{scope}
\end{tikzpicture}}
		\caption{\textbf{Unique example.} Fully unique contribution by attribute $\mathbb{A}_1$.}
		\label{fig:unique-example}
	\end{figure}
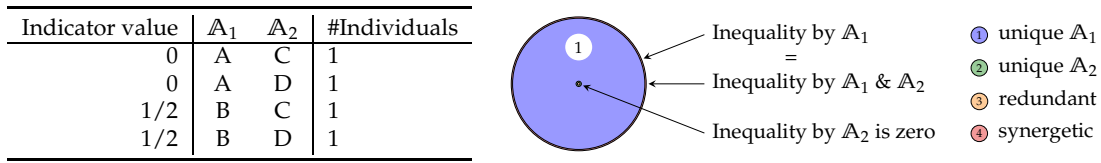
\end{Example}
\begin{Example}[Synergetic contributions]
 \refFig{fig:synergy-example} provides a fully synergetic model $\mathbf{M}$. Partitioning on either attribute individually ($\Gamma(\{1\},\mathbf{M})$ and $\Gamma(\{2\},\mathbf{M})$) provides a uniform distribution and thus an inequality index of zero (Property \ref{prop:ineq-5}). Non-zero inequality can only be measured when partitioning on both attributes ($\Gamma(\{1,2\},\mathbf{M})$), which results in the Venn diagram in \refFig{fig:synergy-example}. Therefore, we say the inequality is contributed synergetically by both attributes. 
	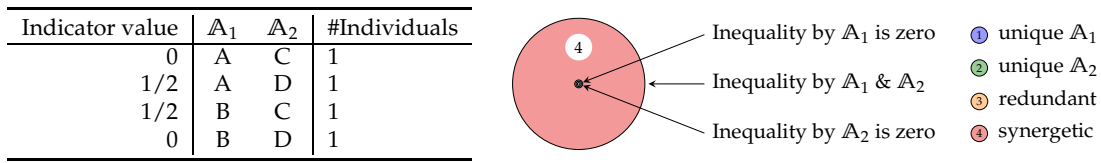
\begin{figure}[h]
		\centering
%
\scalebox{0.87}{\begin{tikzpicture}[color=black]\centering
		\begin{scope}[local bounding box=scope1]
			\node at (0,0) {
				\begin{tabular}{r | c c | l}\toprule
					Indicator value & $\mathbb{A}_1$ &  $\mathbb{A}_2$ & {\#}Individuals\\\hline
					0 & A & C & 1\\
					1/2 & A & D & 1\\
					1/2 & B & C & 1\\
					0 & B & D & 1\\\bottomrule
			\end{tabular}};
		\end{scope}
		\begin{scope}[shift={($(scope1.east)+(1.5cm,0)$)}]
			\node[anchor=west] (s1) at (1.9,0.75) {\small Inequality by $\mathbb{A}_1$ is zero};
			\node[anchor=west] (s2) at (1.9,-0.75) {\small Inequality by $\mathbb{A}_2$ is zero};
			\node[anchor=west] (s12) at (1.9,0) {\small Inequality by $\mathbb{A}_1$ \& $\mathbb{A}_2$};
			\path [line width=0.05mm,draw=black,fill=magenta!40] (0,0) circle (1.0);
			\path [line width=0.05mm,draw=black,fill=blue!40] (0,0) circle (0.06);
			\path [line width=0.05mm,draw=black,fill=my-green!40] (0,0) circle (0.04);
			\path [line width=0.05mm,draw=black,fill=orange!40] (0,0) circle (0.02);
			\draw (0,0.35) node[above,circle,fill=white,inner sep=2pt] {\scriptsize 4};
			\path[-stealth,shorten >=30pt]
			($(s12.west)$) edge (0,0);
			\path[-stealth,shorten >=2pt]
			($(s2.west)$) edge (0,0)
			($(s1.west)$) edge (0,0);
			\node[anchor=west] at (5.8,0.75) {\small \uniqueA~ unique $\mathbb{A}_1$};
			\node[anchor=west] at (5.8,0.25) {\small \uniqueB~ unique $\mathbb{A}_2$};
			\node[anchor=west] at (5.8,-0.25) {\small \redundancy~ redundant};
			\node[anchor=west] at (5.8,-0.75) {\small \synergy~ synergetic};
		\end{scope}
\end{tikzpicture}}
		\caption{\textbf{Synergetic example.} Fully synergetic contribution by both attributes.}
		\label{fig:synergy-example}
	\end{figure}
\end{Example}

\subsubsection{Game theoretic synergy is insufficient}
\label{subsubset:insufficient}
As it could already be seen (\refEq{eq:intro-attributedecomp} in Section \ref{sec:intro} and the previous examples), the desired attribute decomposition builds on Assumption \ref{as:pid-assumption}:
\begin{Assumption}
	Inequality can be decomposed into non-negative redundant, unique, and synergetic contributions as indicated by \refEq{eq:pid-assumption} and \refFig{fig:intro-proposal} for the case of two attributes.
	\begin{subequations}
		\begin{align}
		&\begin{aligned}
			\text{inequality by $\mathbb{A}_1$ and $\mathbb{A}_2$}~ = &\ \redundancy~ \text{redundant inequality by attribute $\mathbb{A}_1$ and $\mathbb{A}_2$}\\
			&+ \uniqueA~ \text{unique inequality by attribute $\mathbb{A}_1$}\\
			&+ \uniqueB~ \text{unique inequality by attribute $\mathbb{A}_2$}\\
			&+ \synergy~ \text{synergetic inequality by attribute $\mathbb{A}_1$ and $\mathbb{A}_2$}\\
		\end{aligned}\vspace{3mm}\\
		&\qquad\quad\begin{aligned}
			\text{inequality by $\mathbb{A}_1$}~ = &\ \redundancy~ \text{redundant inequality by attribute $\mathbb{A}_1$ and $\mathbb{A}_2$}\\
			&+ \uniqueA~ \text{unique inequality by attribute $\mathbb{A}_1$}
		\end{aligned}\vspace{3mm}\\
		&\qquad\quad\begin{aligned}
			\text{inequality by $\mathbb{A}_2$}~ = &\ \redundancy~ \text{redundant inequality by attribute $\mathbb{A}_1$ and $\mathbb{A}_2$}\\
			&+ \uniqueB~ \text{unique inequality by attribute $\mathbb{A}_2$}
		\end{aligned}
	\end{align}
	\label{eq:pid-assumption}
	\end{subequations}
	\label{as:pid-assumption}
\end{Assumption}
The corresponding decomposition for Assumption \ref{as:pid-assumption} is challenging since it requires quantifying four partial contributions, while only three cumulative contributions can be measured ($\Gamma(\{1\},\mathbf{M})$, $\Gamma(\{2\},\mathbf{M})$, and $\Gamma(\{1,2\},\mathbf{M})$). The resulting system of equations is under-determined, which causes the necessity of extending the inequality measure to either a notion of intersection or union. The examples in Section \ref{subsec:intuition} avoided this issue by only discussing special cases where Property \ref{prop:ineq-1}-\ref{prop:ineq-5} imply that the redundant or synergetic contribution must be zero.

A (different) notion of synergy is already well established in game theory and the computation of Shapley values. Since Shapley values can be applied to inequality measure~\cite{deutsch2008shapley}, it raises the question of how the challenges mentioned above have been addressed in this setting:
\begin{Definition}[Game synergy and Shapley values~\cite{shapley1951notes,shapleySynergy}]
	Game synergy is a function $\textnormal{GS} : \mathcal{P}(\{1,..,n\})\rightarrow \mathbb{R}$, that takes a set of attribute indices and quantifies their synergy as shown in \refEq{eq:gamesyn} in its direct application to this setting.
	\begin{equation}
		\textnormal{GS}(\mathbf{a}) \definedAs \sum_{\emptyset\neq \mathbf{b}\subseteq \mathbf{a}} (-1)^{|\mathbf{a}|-|\mathbf{b}|} I_{f,p}(\Gamma(\mathbf{b},\mathbf{M}))
		\label{eq:gamesyn}
	\end{equation} 
	Game synergy can be used to compute Shapley values $\varphi : \{1,..,n\}\rightarrow \mathbb{R}$ (\refEq{eq:shapley}), which shall quantify the contribution of attribute $i$.
	\begin{equation}
		\varphi(i) \definedAs \sum_{\mathbf{a}\in\mathcal{P}(\{1,..,n\}\setminus\{i\})} \frac{\textnormal{GS}(\mathbf{a}\cup\{i\})}{|A|+1}
	\end{equation}
	\label{def:Shaley}
\end{Definition}
Game synergy does not consider the concept of redundancy and thus fails to separate it from synergy, as shown in \refEq{eq:shapley}. From our perspective, `game synergy` is the difference between synergy and redundancy.
\begin{subequations}
	\begin{align}
	&\qquad\qquad\qquad\quad \textnormal{GS}(\{1,2\}) = I_{f,p}(\Gamma(\{1,2\},\mathbf{M})) - I_{f,p}(\Gamma(\{1\},\mathbf{M})) - I_{f,p}(\Gamma(\{2\},\mathbf{M}))\\
	& \begin{aligned}
		\text{game synergy of $\mathbb{A}_1$ \& $\mathbb{A}_2$} 
		= &\ \redundancy\uniqueA\uniqueB\synergy~ \text{ inequality by attribute $\mathbb{A}_1$ \& $\mathbb{A}_2$}\\
		&- \redundancy\uniqueA~ \text{inequality by attribute $\mathbb{A}_1$}\\
		&- \redundancy\uniqueB~ \text{inequality by attribute $\mathbb{A}_2$}\\
		=&\ \synergy~\text{our notion of synergy} - \redundancy~\text{our notion of redundancy}
	\end{aligned}
	\end{align}
	\label{eq:shapley}
\end{subequations}
The interpretation of \refEq{eq:shapley} can be used to explain the negativity of game synergy and its consequent meaning:
we can interpret positive `game synergy` as indication of dominant synergetic interactions between attributes, while negative `game synergy` indicates dominant redundant interactions. Since both components may be present simultaneously (visualized in \refFig{fig:intro-proposal}) and in a canceling direction (highlighted in \refEq{eq:shapley}), it would be desirable to separate them. This would enable more detailed analyses and a more practical operational interpretation, as shown in Section \ref{subsec:decomposition-ineq-measure}.
\begin{Remark}
	An equivalent argument was made by \citet{williams-beer} for interaction information and motivated the research area of \aclp{PID}.
\end{Remark}
With this interpretation of game synergy, we can also provide an interpretation of Shapley values, as shown in \refEq{eq:shapley-2}. At two attributes, the shapely value of each attribute corresponds to its unique contribution plus half of their redundancy and synergy. As a result, the Shapley values of each attribute sum to the total amount: $I_{f,p}(\Gamma(\{1,2\},\mathbf{M})) = \varphi(1) + \varphi(2)$.
\begin{subequations}
	\begin{align}
	&\qquad\qquad\qquad\quad \varphi(1) = \frac{\textnormal{GS}(\{1\})}{1} + \frac{\textnormal{GS}(\{1,2\})}{2}	\\
	&\begin{aligned}
		\text{shapley value of } \mathbb{A}_1 
		= &\  \frac{\text{ game synergy of $\mathbb{A}_1$}}{1} + \frac{\text{ game synergy of $\mathbb{A}_1$ \& $\mathbb{A}_2$}}{2}\\
		=& \frac{\redundancy~ \text{ redundancy by $\mathbb{A}_1$ \& $\mathbb{A}_2$}}{2} + \uniqueA~ \text{unique by $\mathbb{A}_1$} + \frac{\synergy~ \text{ synergy by $\mathbb{A}_1$ \& $\mathbb{A}_2$}}{2}
	\end{aligned}
\end{align}
\label{eq:shapley-2}
\end{subequations}
This section provided an intuition for the desired concepts of redundancy and synergy. We highlighted the necessity of extending inequality measures to a notion of union or intersection since the decomposition is otherwise under-determined. We also explained our interpretation of game synergy and Shapley values and why we consider them insufficient for studying the interactions between attributes in this setting. Finally, both game synergy and Shapley values can be computed by combining the partial contributions of the following attribute decomposition, as indicated by \refEq{eq:shapley} and \refEq{eq:shapley-2}.

	\subsection{Decomposition lattice and required properties}
	\label{subsec:decompositoin-lattice}
%
This section presents the considered framework for an attribute decomposition and follows the general methodology of~\citet{williams-beer} from \aclp{PID}: we consider a lattice that captures the desired subset relation for a set-theoretic intuition and discuss the required properties for a cumulative measure on this lattice. The partial contributions are then obtained from the Möbius inverse, which enforces an inclusion-exclusion relation between them.

\begin{Definition}[Sources, atoms and union lattice \cite{williams-beer,Rosas_2020}]\hfill
	\begin{itemize}
		\item An attribute set $\mathbf{a} \in \mathcal{P}(\{1,..,n\})$ is a subset of all attribute indices that is used to construct a partition.\newline  For example: $\Gamma(\mathbf{a},\mathbf{M})$.
		\item An atom $\alpha\in\mathcal{A}(n)$ is a non-empty set of attribute sets defined by \refEq{eq:atom-set}. The cardinality of $\mathcal{A}(n)$ is one less than the $n$-th Dedekind number~\cite{gutknecht2023babel}. In this work, we use atoms to represent a notion of union. For example, the atom $\alpha=\{\{1\},\{2,3\}\}$ shall represent the union of inequality when partitioning on attribute $\mathbb{A}_1$ and $(\mathbb{A}_2,\mathbb{A}_3)$. 
		\begin{subequations}
			\begin{align}
			\mathcal{A}(n) &\definedAs \{\alpha \in \mathcal{P}_1(\mathcal{P}(\{1,..,n\}))~:~(\forall\mathbf{a},\mathbf{b}\in\alpha)[\neg(\mathbf{a}\subset\mathbf{b})]\}\label{eq:atom-set}\\
			(\alpha \preceq \beta) &\definedAs (\forall \mathbf{a}\in\alpha.~\exists \mathbf{b}\in\beta.~ \mathbf{a} \subseteq \mathbf{b})
			\label{eq:union-order}\\
			(\alpha \prec \beta) &\definedAs (\alpha \preceq \beta ~\text{ and }~ \neg(\beta\preceq\alpha))
			\end{align}
		\end{subequations}
		\item The set of atoms form a distributive lattice with the ordering of \refEq{eq:union-order}. We refer to the resulting lattice $(\mathcal{A}(n),\preceq)$ as union lattice~\cite{gutknecht2023babel,Kolchinsky,Gomes}.
	\end{itemize} 
\end{Definition}
\begin{Remark}
	We treat the union lattice as reversed synergy lattice. This enables the direct application of our results from \cite{mages2024}.
\end{Remark}
\begin{Notation}\hfill
	\begin{itemize}
		\item We notate the meet and join on the union lattice as $\alpha\curlywedge\beta$ and $\alpha\curlyvee\beta$ respectively.
		\item We notate the bottom and the top of the union lattice as $\bot_\cup=\{\emptyset\}$ and $\top_\cup=\{\{1,..,n\}\}$ respectively.
		\item We notate the upset and strict upset of on the union lattice as $\uparrow\alpha$ and $\dot\uparrow\alpha$  respectively.
	\end{itemize}
\end{Notation}
The union lattice for two and three attributes is visualized in \refFig{fig:lattice-examples}.
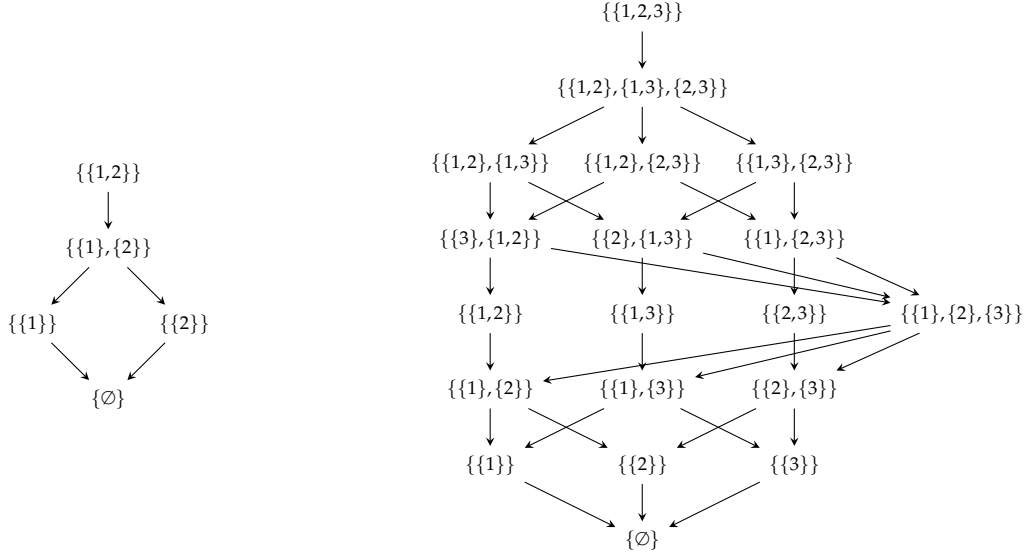
\begin{figure}[h]
	\centering
%
\begin{subfigure}[t]{0.4\linewidth}\centering
	\newcommand{\latSpace}{1}
	\begin{tikzpicture}[color=black]
		\node (Zs12) at (0,-1) {\scriptsize $\{\emptyset\}$};
		\node (Zu1) at (-\latSpace,0) {\scriptsize $\{\{1\}\}$};
		\node (Zu2) at (\latSpace,0) {\scriptsize $\{\{2\}\}$};
		\node (Zr12) at (0,1) {\scriptsize $\{\{1\},\{2\}\}$};
		\node (bot) at (0,2) {\scriptsize $\{\{1,2\}\}$};
		\foreach \fr/\to in {{Zs12/Zu2}, {Zs12/Zu1}, {Zu1/Zr12}, {Zu2/Zr12}, {Zr12/bot}}
		\path[-stealth]	(\to) edge (\fr);
		\node at (0,-3) {};
	\end{tikzpicture}
	\caption{Union lattice for two attributes ($n=2$).}
\end{subfigure}%
~
\begin{subfigure}[t]{0.6\linewidth}\centering
	\newcommand{\latSpace}{2}
	\begin{tikzpicture}[color=black]
		\node (Zs123) at (0,-3) { \scriptsize $\{\emptyset\}$};
		\node (Zs12) at (-\latSpace,-2) {\scriptsize $\{\{1\}\}$};
		\node (Zs13) at (0,-2) {\scriptsize $\{\{2\}\}$};
		\node (Zs23) at (\latSpace,-2) {\scriptsize $\{\{3\}\}$};
		\node (Zrs12s13) at (-\latSpace,-1) {\scriptsize $\{\{1\},\{2\}\}$};
		\node (Zrs12s23) at (0,-1) {\scriptsize $\{\{1\},\{3\}\}$};
		\node (Zrs13s23) at (\latSpace,-1) {\scriptsize $\{\{2\},\{3\}\}$};
		\node (Zu1) at (-\latSpace,0) {\scriptsize $\{\{1,2\}\}$};
		\node (Zu2) at (0,0) {\scriptsize $\{\{1,3\}\}$};
		\node (Zu3) at (\latSpace,0) {\scriptsize $\{\{2,3\}\}$};
		\node (Zrs12s13s23) at (2*\latSpace+0.2,0) {\scriptsize $\{\{1\},\{2\},\{3\}\}$};
		\node (Zrs1s23) at (-\latSpace,1) {\scriptsize $\{\{3\},\{1,2\}\}$};
		\node (Zrs2s13) at (0,1) {\scriptsize $\{\{2\},\{1,3\}\}$};
		\node (Zrs3s12) at (\latSpace,1) {\scriptsize $\{\{1\},\{2,3\}\}$};
		\node (Zr12) at (-\latSpace,2) {\scriptsize $\{\{1,2\},\{1,3\}\}$};
		\node (Zr13) at (0,2) {\scriptsize $\{\{1,2\},\{2,3\}\}$};
		\node (Zr23) at (\latSpace,2) {\scriptsize $\{\{1,3\},\{2,3\}\}$};
		\node (Zr123) at (0,3) {\scriptsize $\{\{1,2\},\{1,3\},\{2,3\}\}$};
		\node (bot) at (0,4) {\scriptsize $\{\{1,2,3\}\}$};
		\foreach \fr/\to in {{Zs123/Zs12}, {Zs123/Zs13}, {Zs123/Zs23}, {Zs12/Zrs12s13}, {Zs12/Zrs12s23}, {Zs13/Zrs12s13}, {Zs13/Zrs13s23}, {Zs23/Zrs12s23}, {Zs23/Zrs13s23}, {Zrs12s13/Zu1}, {Zrs12s13/Zrs12s13s23}, {Zrs12s23/Zu2}, {Zrs12s23/Zrs12s13s23}, {Zrs13s23/Zu3}, {Zrs13s23/Zrs12s13s23}, {Zrs12s13s23/Zrs1s23}, {Zrs12s13s23/Zrs2s13}, {Zrs12s13s23/Zrs3s12}, {Zu1/Zrs1s23}, {Zu2/Zrs2s13}, {Zu3/Zrs3s12}, {Zrs1s23/Zr12}, {Zrs1s23/Zr13}, {Zrs2s13/Zr12}, {Zrs2s13/Zr23}, {Zrs3s12/Zr13}, {Zrs3s12/Zr23}, {Zr12/Zr123}, {Zr13/Zr123}, {Zr23/Zr123}, {Zr123/bot}}
		\path[-stealth]	(\to) edge (\fr);
	\end{tikzpicture}
	\caption{Union lattice for three attributes ($n=3$).}
\end{subfigure}
	\caption{\textbf{Decomposition lattice visualization.} The union lattice for (\textbf{a}) two attributes and (\textbf{b}) three attributes.}
	\label{fig:lattice-examples}
\end{figure}

Using an inequality measure $I(\cdot)$, we can already quantify the inequality for an attribute set $\mathbf{a}\in\alpha$ as $I(\Gamma(\mathbf{a},\mathbf{M}))$. However, this provides fever equations than free variables when calculating partial contributions (under-determined) as discussed in Section \ref{subsubset:insufficient}. To fully determine the system, we have to extend inequality measures from attribute sets to atoms (cumulative measure) and can express partial contributions as computation on the decomposition lattice (partial measure)~\cite{williams-beer}. We first introduce both definitions and then discuss the required properties for achieving the desired set-theoretic analogy:
\begin{Definition}[Cumulative measure: union inequality  $I^\cup$]
	The union inequality $I^\cup(\alpha,\mathbf{M})$ is a function that assigns a real value to every atom of the union lattice. It is a cumulative measure that shall satisfy Property \ref{prop:union-1}-\ref{prop:union-4} defined below.
\end{Definition}
\begin{Definition}[Partial inequality contributions $I^\delta$]
	The partial inequality (redundant, unique, synergetic) contributions $I^\delta(\alpha,\mathbf{M})$ are defined by the Möbius inverse~\cite{rota1964foundations,williams-beer} on the reversed lattice~\cite{mages2024} as shown in \refEq{eq:union-möbius}.
	\begin{subequations}
		\begin{align}
			I^\cup(\top_\cup,\mathbf{M}) - I^\cup(\alpha,\mathbf{M}) &\definedAs \sum_{\beta \in \uparrow\alpha}I^\delta(\beta,\mathbf{M})\\
			I^\delta(\alpha,\mathbf{M}) &= I^\cup(\top_\cup,\mathbf{M}) - I^\cup(\alpha,\mathbf{M}) - \sum_{\beta \in \dot\uparrow\alpha}I^\delta(\beta,\mathbf{M})\label{subeq:möbius-inv}
		\end{align}
		\label{eq:union-möbius}
	\end{subequations}
	\label{def:union-möbius}
\end{Definition}
The following properties for a cumulative measure are typically presented as axioms in the context of \aclp{PID}~\cite{williams-beer,dualdecomposition} and can directly be transferred to inequality measures: 
\begin{PropertyU}[Commutativity~\cite{williams-beer,dualdecomposition}]
	A notion of union inequality is invariant to the order of attribute sets. Let $\sigma:\alpha\rightarrow \alpha$ permute the order of attribute sets in an atom. 
	\begin{subequations}
		\begin{align}
			\makebox[5.5cm][l]{$\forall\alpha\in\mathcal{A}(n):$}&I^\cup(\alpha,\mathbf{M}) = I^\cup(\{\sigma(\mathbf{a})~:~\mathbf{a}\in\alpha\},\mathbf{M})\\
			\makebox[5.5cm][l]{\textnormal{Example:}}&I^\cup(\{\{1\},\{2,3\}\},\mathbf{M}) = I^\cup(\{\{2,3\},\{1\}\},\mathbf{M})
		\end{align}
	\end{subequations}
	\label{prop:union-1}
\end{PropertyU}
\begin{PropertyU}[Monotonicity~\cite{williams-beer,dualdecomposition}]
	Adding an attribute set to an atom can only increase their union inequality:
	\begin{subequations}
		\begin{align}
		\makebox[5.5cm][l]{$\forall\alpha\in\mathcal{A}(n),~\forall \mathbf{a}\in\mathcal{P}(\{1,..,n\}):$} &I^\cup(\alpha,\mathbf{M}) \leq I^\cup(\alpha\cup\{\mathbf{a}\},\mathbf{M})\\
		\makebox[5.5cm][l]{\textnormal{Example:}}&I^\cup(\{\{2,3\}\},\mathbf{M}) \leq I^\cup(\{\{2,3\},\{1\}\},\mathbf{M})
		\end{align}
	\end{subequations}
	\label{prop:union-2}
\end{PropertyU}
\begin{PropertyU}[Self-inequality~\cite{williams-beer,dualdecomposition}]
The union of a single attribute set equals the desired inequality measure.
\begin{subequations}
	\begin{align}
	\makebox[5.5cm][l]{$\forall \mathbf{a}\in\mathcal{P}(\{1,..,n\}):$}&I^\cup(\{\mathbf{a}\},\mathbf{M}) = I(\Gamma(\mathbf{a},\mathbf{M}))\\
	\makebox[5.5cm][l]{\textnormal{Example:}}&I^\cup(\{\{2,3\}\},\mathbf{M}) = I(\Gamma(\{2,3\},\mathbf{M}))
	\end{align}
\end{subequations}
\label{prop:union-3}
\end{PropertyU}
\begin{PropertyU}[Non-negativity~\cite{williams-beer,dualdecomposition}]
The partial inequality contributions are non-negative.
\begin{equation}
	\makebox[5.5cm][l]{$\forall\alpha\in\mathcal{A}(n):$} \makebox[5.5cm][l]{$I^\delta(\alpha,\mathbf{M}) \geq 0$}
\end{equation}
\label{prop:union-4}
\end{PropertyU}
The combination of Property \ref{prop:union-2} and the union lattice ensures the expected subset relation. Property \ref{prop:union-3} binds the union measure to the desired inequality measure. Property \ref{prop:union-4} ensures the interpretability of results by enabling the analogy from a population's inequality to a set's cardinality.
Finally, \refFig{fig:lattice-intution} visualizes the relation between a Venn diagram and the used decomposition lattice at the example of $n=2$. Except for the top element, each partial contribution on the union lattice $I^\delta(\cdot,\mathbf{M})$ corresponds to a partial region of the Venn diagram.
\begin{figure}[h]
	\centering
%
\begin{subfigure}[t]{0.4\linewidth}
	\scalebox{0.87}{\begin{tikzpicture}[color=black]
			\begin{scope}[xshift=-1.75cm]
				\node[anchor=west] (s12) at (1.9,-0.50) {\small $I(\Gamma(\{1,2\},\mathbf{M}))$};
				\node[anchor=west] (s1) at (1.9,  0.25) {\small $I(\Gamma(\{1\},\mathbf{M}))$};
				\node[anchor=west] (s2) at (1.9,-1.25) {\small $I(\Gamma(\{2\},\mathbf{M}))$};
				\draw[draw=black,fill=my-green!40] (0,-1) circle (1);
				\draw[draw=black,fill=blue!40] (0,0) circle (1);
				\begin{scope}[draw=black]
					\clip (0,0) circle (1);
					\clip (0,-1) circle (1);
					\path [fill=orange!40] (0,-1) circle (1);
				\end{scope}
				\begin{scope}[draw=black,even odd rule]
					\clip (0,0) circle (1) (0,-0.5) ellipse (1.4 and 2.2);
					\clip (0,-1) circle (1) (0,-0.5) ellipse (1.4 and 2.2);
					\fill[magenta!40] (0,-0.5) ellipse (1.4 and 2.2);
				\end{scope}
				\path[-stealth]
				($(s12.west)$) edge ($(s12.west)+(-0.45,0)$)
				($(s1.west)$) edge ($(s1.west)+(-0.85,-0.25)$)
				($(s2.west)$) edge ($(s2.west)+(-0.85,+0.25)$);
				\draw (0,-0.5) ellipse (1.4 and 2.2) node[above,yshift=45,circle,fill=white,inner sep=2pt] {\footnotesize 4};
				\node[circle,fill=white,inner sep=2pt, yshift=13] at (0,0) {\footnotesize 1};
				\node[circle,fill=white,inner sep=2pt] at (0,-0.5) {\footnotesize 3};
				\node[circle,fill=white,inner sep=2pt, yshift=-13] at (0,-1) {\footnotesize 2};
				\draw[draw=black] (0,-1) circle (1);
				\draw[draw=black] (0,0) circle (1);
				\draw[draw=black] (0,-0.5) ellipse (1.4 and 2.2);
			\end{scope}
			\node[anchor=west] at (-0.3,-3.7) {\small \synergy~ synergetic};
			\node[anchor=west] at (-3.0,-3.2) {\small \uniqueA~ unique $\mathbb{A}_1$};
			\node[anchor=west] at (-0.3,-3.2) {\small \redundancy~ redundant};
			\node[anchor=west] at (-3.0,-3.7) {\small \uniqueB~ unique $\mathbb{A}_2$};
	\end{tikzpicture}}
	\caption{Representation as Venn diagram}
\end{subfigure}%
\begin{subfigure}[t]{0.4\linewidth}
	\newcommand{\latSpace}{2}
	\begin{tikzpicture}[color=black,anode/.style={text width=3.2cm,align=center,font=\scriptsize}]
		\node[anode] (Zs12) at (0,-1.5) {\baselineskip=10pt $I^\cup(\{\emptyset\}, \mathbf{M}) = 0$ \\$I^\delta(\{\emptyset\}, \mathbf{M}) = \redundancy$};
		\node[anode] (Zu1) at (-\latSpace,0) {\baselineskip=12pt $I^\cup(\{\{1\}\}, \mathbf{M}) = \redundancy\uniqueA$ \\$I^\delta(\{\{1\}\}, \mathbf{M}) = \uniqueB$};
		\node[anode] (Zu2) at (\latSpace,0) {\baselineskip=12pt $I^\cup(\{\{2\}\}, \mathbf{M}) = \redundancy\uniqueB$ \\$I^\delta(\{\{2\}\}, \mathbf{M}) = \uniqueA$};
		\node[anode] (Zr12) at (0,1.5) {\baselineskip=12pt $I^\cup(\{\{1\},\{2\}\}, \mathbf{M}) = \redundancy\uniqueA\uniqueB$ \\$I^\delta(\{\{1\},\{2\}\}, \mathbf{M}) = \synergy$};
		\node[anode] (bot) at (0,3) {\baselineskip=12pt $I^\cup(\{\{1,2\}\}, \mathbf{M}) = \redundancy\uniqueA\uniqueB\synergy$ \\$I^\delta(\{\{1,2\}\}, \mathbf{M}) =  0$};
		\foreach \fr/\to in {{Zs12/Zu2}, {Zs12/Zu1}, {Zu1/Zr12}, {Zu2/Zr12}, {Zr12/bot}}
		\path[-stealth]	(\to) edge (\fr);
	\end{tikzpicture}
	\caption{Representation as union lattice}
\end{subfigure}%
	\caption{\textbf{Visualization for the relation between Venn diagrams and the union lattice at two attributes ($n=2$).} Representation of partial contributions as (\textbf{a}) Venn diagram and (\textbf{b}) union lattice. The partial contribution of the top element is always zero. All other partial contributions of an atom on the union lattice ($I^\delta$) correspond to exactly one partial region in the Venn diagram.}
	\label{fig:lattice-intution}
\end{figure}
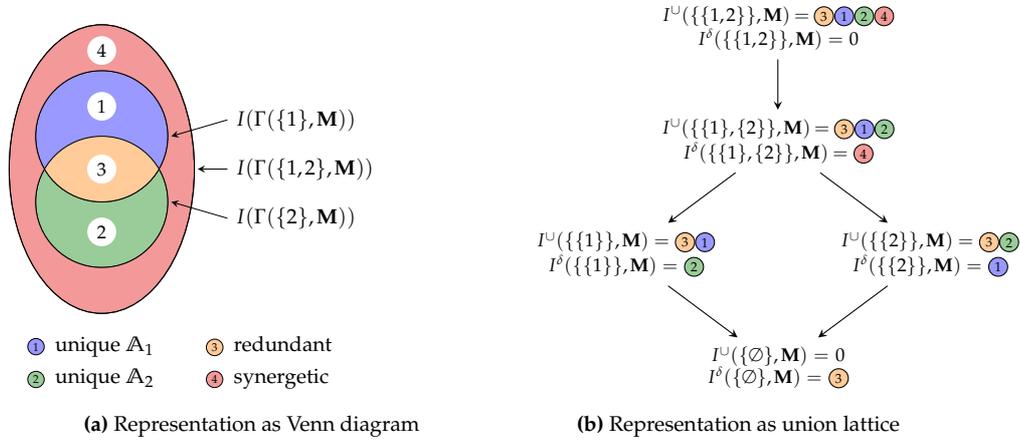

This section discussed the considered decomposition framework based on the union lattice and the necessary properties for the required cumulative measure.

\begin{Remark}
	If desired, the decomposition on the union lattice can be transformed into a decomposition on the redundancy lattice~\cite{williams-beer} as described in \cite[Section 3.4]{mages2024}.
\end{Remark}

	\subsection{Decomposing f-inequality}
	\label{subsec:decomposition-ineq-measure}
%
With the decomposition framework of Section \ref{subsec:decompositoin-lattice}, we only have to define a union inequality measure ($I^\cup_{f,p}$) for $f$-inequality to obtain its decomposition. To achieve the required properties and a practical operational interpretation, we use the join of the zonogon order (convex-hull), as shown in Definition \ref{def:def-f-union}. Intuitively, this appears suitable since it reflects the unique and measure independent optimal dependence between attributes as a notion of their union.
\begin{Definition}[$f$-inequality union]
	We define the union of two partitions by their join under the zonogon order.
	\begin{equation}
		I^\cup_{f,p}(\alpha,\mathbf{M}) \definedAs I_{f,p}\left(\bigsqcup_{\mathbf{a}\in\alpha}\langle\Gamma(\mathbf{a},\mathbf{M})\rangle\right)
	\end{equation}
	\label{def:def-f-union}
\end{Definition}
\begin{Theorem}
	Definition \ref{def:def-f-union} satisfies Property \ref{prop:union-1}-\ref{prop:union-4}.
	\label{thm:decom-proofs}
\end{Theorem}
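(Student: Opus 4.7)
The plan is to verify the four properties in turn, treating U1--U3 as quick consequences of the definition and reserving most of the work for U4. Property U1 is immediate because the lattice join $\bigsqcup$ is commutative, so reordering attribute sets inside $\alpha$ does not change the expression inside $I_{f,p}$. Property U3 reduces the join to a single equivalence class $\langle\Gamma(\mathbf{a},\mathbf{M})\rangle$, and by the equivalence-class invariance of $f$-inequality (Theorem \ref{thm:rf-props}, item 2b) this evaluates to $I_{f,p}(\Gamma(\mathbf{a},\mathbf{M}))$, matching $I(\Gamma(\mathbf{a},\mathbf{M}))$.

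For U2, enlarging $\alpha$ to $\alpha\cup\{\mathbf{b}\}$ only enlarges the join, i.e.\ $\bigsqcup_{\mathbf{a}\in\alpha}\langle\Gamma(\mathbf{a},\mathbf{M})\rangle \sqsubseteq \bigsqcup_{\mathbf{a}\in\alpha\cup\{\mathbf{b}\}}\langle\Gamma(\mathbf{a},\mathbf{M})\rangle$, because the join over a larger index set dominates the join over a smaller one in any lattice. Applying the order-preservation of $I_{f,p}$ from Theorem \ref{thm:rf-props} (item 2b, weak form) yields the required monotonicity inequality.

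Property U4, the non-negativity of every partial contribution $I^\delta(\alpha,\mathbf{M})$, is where the main difficulty lies. The recursive formula of Definition \ref{def:union-möbius} expresses $I^\delta(\alpha,\mathbf{M})$ as an alternating sum of $I^\cup$ values at atoms in $\uparrow\alpha$. I would proceed top-down on $(\mathcal{A}(n),\preceq)$: at the top, $I^\delta(\top_\cup,\mathbf{M})=0$; at any coatom $\beta$ just below $\top_\cup$, the Möbius inversion gives $I^\delta(\beta,\mathbf{M})=I^\cup(\top_\cup,\mathbf{M})-I^\cup(\beta,\mathbf{M})\geq 0$ directly from U2. The non-trivial case is at deeper atoms, where one must control a genuine alternating sum of $f$-inequalities at different joins.

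The main tool for this harder case will be the inclusion-exclusion relation on zonogon Minkowski sums from Equation (\ref{eq:zonogon-relation}), translated into an inequality on $f$-inequalities via Lemma \ref{lem:minkowski-sum-subsets} and Corollary \ref{col:f-ineq-inc-enc}. The combinatorial heart of the argument is to show that the Möbius coefficients on the distributive union lattice reorganise exactly into the signed sums appearing in that zonogon relation, so that $I^\delta(\alpha,\mathbf{M})$ can be bounded below by $I_{f,p}$ evaluated on an explicit zonogon sum that is non-empty in the sense of the subset relation. This is the same strategy the authors deployed in \cite{mages2024} for the partial information decomposition on the Blackwell lattice; since the zonogon order here has identical lattice and Minkowski-sum behaviour to the $2\times\_$ stochastic-matrix Blackwell order used there, the argument transfers, and I would most naturally carry it out in an appendix by induction on the depth of $\alpha$ in $(\mathcal{A}(n),\preceq)$.
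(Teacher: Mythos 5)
Your proposal follows essentially the same route as the paper: U1--U3 are handled identically via commutativity, monotonicity, and idempotence of the lattice join together with Theorem \ref{thm:rf-props} nr.~2.b, and for U4 the paper likewise rewrites the M\"obius inverse as an inclusion--exclusion over the immediate successors $\alpha^+$ and invokes Corollary \ref{col:f-ineq-inc-enc}. The only piece you leave implicit is the sandwich that closes the argument --- namely that $I^\cup_{f,p}(\alpha,\mathbf{M}) \leq I_{f,p}\bigl(\bigsqcap_{\gamma\in\alpha^+}\bigsqcup_{\mathbf{c}\in\gamma}\langle\Gamma(\mathbf{c},\mathbf{M})\rangle\bigr)$, since $\alpha$'s join lies below each successor's join and hence below their meet, which combined with the inclusion--exclusion lower bound on that same meet yields $I^\delta_{f,p}(\alpha,\mathbf{M})\geq 0$ directly, with no induction on depth needed.
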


The proof of Theorem \ref{thm:decom-proofs} is shown in Appendix \ref{ap:decomposition-properties}. 
\begin{Remark}
	Appendix \ref{ap:implementation} shows that we can compute partial contributions in a practical implementation using \refEq{eq:imp-partial}, where $\mathcal{C}(\cdot)$ is the n-ary Cartesian product. We recommend caching the cumulative measure $I^\cup(\cdot,\mathbf{M})$ to avoid repeated computations. This implementation is advantageous by computing the Möbius inverse without having to identify and visit each element in the strict upset of an atom ($\dot{\uparrow}\alpha$) as the lattice $|\mathcal{A}(n)|$ grows rapidly in $n$.
	\begin{subequations}
		\begin{align}
			\textnormal{reduce}(\sim,\alpha)&= \{\mathbf{a}\in\alpha~:~\neg(\exists\mathbf{b}\in\alpha)[\mathbf{a}\sim\mathbf{b}]\}\\
			\textnormal{dual}(\alpha) &= \textnormal{reduce}(\supset,~\mathcal{C}(\{\{1,..,n\}\setminus \mathbf{a}~:~\mathbf{a}\in\alpha\}))\\
			I^\delta_{f,p}(\alpha,\mathbf{M}) &=  \begin{cases}
				0 & \text{if } \alpha=\top_\cup\\ \sum_{\beta\in\mathcal{P}(\textnormal{dual}(\alpha))}(-1)^{|\beta|-1}~I^\cup_{f,p}(\textnormal{reduce}(\subset,~\alpha\cup\beta),\mathbf{M})& \text{otherwise}
			\end{cases}
		\end{align}
		\label{eq:imp-partial}
	\end{subequations}
\end{Remark}
The resulting operational interpretation depends on the type of $f$-inequality:
\begin{equation*}
	\begin{aligned}
		\text{synergetic contribution} ~~&\begin{matrix}
			\overset{\text{weak}}{\large  \Longrightarrow}\\ \underset{\text{strong}}{\normalsize\Longleftrightarrow}
		\end{matrix}~~\text{sub-optimal dependence between attributes} \\
		\text{unique contribution} ~~&\begin{matrix}
		\overset{\text{weak}}{\large  \Longrightarrow}\\ \underset{\text{strong}}{\normalsize\Longleftrightarrow}
		\end{matrix}~~\text{no Pigou-Dalton transfers from other attribute}\\
	\end{aligned}
\end{equation*}
Synergetic contributions indicate that inequality can be reduced by re-distributing the indicator variable based on the dependence between attributes or suitably increasing the dependence between attributes. Unique contributions can be reduced by re-distributing the indicator variable based on the specific attribute or changing the distribution of this attribute. As it can be seen from Corollary \ref{col:f-ineq-inc-enc}, the resulting notion of redundancy is lower bound by the quantification of the zonogon meet (intersection).
\begin{Example}
		Consider the model $\textbf{M}$ obtained from Table \ref{tbl:example-decomp} with the two attributes $\mathbb{A}_1 = \{A,B\}$ and $\mathbb{A}_2 = \{C,D\}$.
	\begin{table}[h!]\centering
		\begin{tabular}{r | c c | l}\toprule
			Indicator value & $\mathbb{A}_1$ &  $\mathbb{A}_2$ & Number of individuals\\\midrule
			0 & A & C & 170\\
			\sfrac{3}{150} & A & D & 150\\
			\sfrac{1}{30}   & B & C & 30 \\
			\sfrac{6}{50} & B & D & 50 \\\bottomrule
		\end{tabular}
		\caption{Example population model}
		\label{tbl:example-decomp}
	\end{table}
	The corresponding population matrices and zonogons for model $\mathbf{M}$ are visualized in \refFig{fig:decomp-zonogons}.
	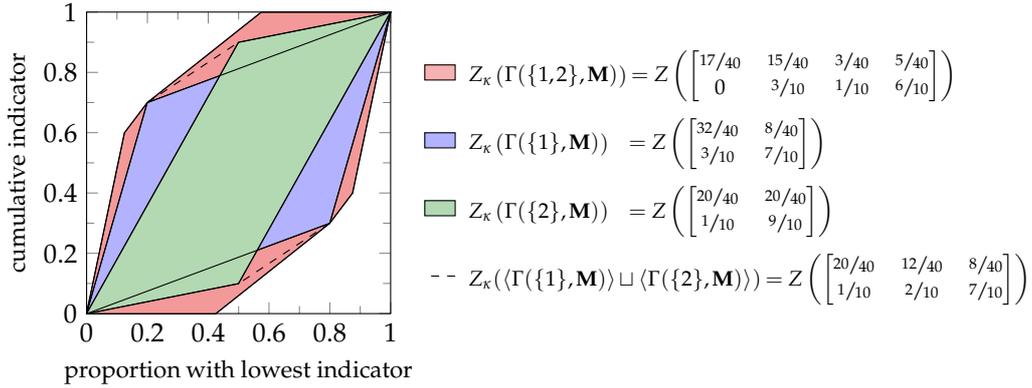
\begin{figure}[h]
		\centering
%
\begin{tikzpicture}[color=black]
	\begin{axis}[height=5.58cm, width=5.58cm,xmin=0, xmax=1,ymin=0, ymax=1, ylabel={\small cumulative indicator}, xlabel={\small proportion with lowest indicator},minor x tick num=1,minor y tick num=1,clip=false,
		legend style={at={(1.1,0.5)},anchor=west,draw=none,legend plot pos=left,legend cell align=left}]
	\end{axis};
	\draw[draw=black,fill=magenta!40,fill opacity=0.5] (0,0) -- (4*17/40,4*0/10) -- (4*32/40,4*3/10) -- (4*35/40,4*4/10) -- (1*4,1*4) -- (4-4*17/40,4-4*0/10) -- (4-4*32/40,4-4*3/10) -- (4-4*35/40,4-4*4/10) -- (0,0);
	\draw[draw=black,fill=blue!30,fill opacity=0.5] (0,0) -- (4*32/40,4*3/10) -- (4,4) -- (4-4*32/40,4-4*3/10) -- (0,0);
	\draw[draw=black,fill=my-green!30,fill opacity=0.5] (0,0) -- (4*20/40,4*1/10) -- (4,4)-- (4-4*20/40,4-4*1/10) --(0,0);
	\foreach \col/\label [count=\i] in {{my-green/{$Z_\kappa\left(\Gamma(\{2\},\mathbf{M})\right)~~~= Z\left(\begin{bmatrix}
					\sfrac{20}{40} & \sfrac{20}{40} \\
					\sfrac{1}{10} & \sfrac{9}{10} 
				\end{bmatrix}\right)$}},{blue/{$Z_\kappa\left(\Gamma(\{1\},\mathbf{M})\right)~~~= Z\left(\begin{bmatrix}
					\sfrac{32}{40} & \sfrac{8}{40}\\
					\sfrac{3}{10} & \sfrac{7}{10} 
				\end{bmatrix}\right)$}},{magenta/$Z_\kappa\left(\Gamma(\{1,2\},\mathbf{M})\right) = Z\left(\begin{bmatrix}
				\sfrac{17}{40} & \sfrac{15}{40} & \sfrac{3}{40} & \sfrac{5}{40} \\
				0 & \sfrac{3}{10} & \sfrac{1}{10} & \sfrac{6}{10}
			\end{bmatrix}\right)$}}{
		\node[anchor=west] at ($(4.3,3) + (0.6,-2.55) - \i*(0,-0.9)$) {\footnotesize \label\vphantom{q}};%
		\draw[fill=\col!30] ($(4.3,3) + (0.35,-2.5) - \i*(0,-0.9) + (0.2,0.1)$) rectangle ($(4.3,3) + (0.35,-2.5) - \i*(0,-0.9) - (0.2,0.1)$);}
		\draw[draw=black] (0,0) -- (4*17/40,4*0/10) -- (4*32/40,4*3/10) -- (4*35/40,4*4/10) -- (1*4,1*4) -- (4-4*17/40,4-4*0/10) -- (4-4*32/40,4-4*3/10) -- (4-4*35/40,4-4*4/10) -- (0,0);
		\draw[draw=black] (0,0) -- (4*32/40,4*3/10) -- (4,4) -- (4-4*32/40,4-4*3/10) -- (0,0);
		\draw[draw=black] (0,0) -- (4*20/40,4*1/10) -- (4,4)-- (4-4*20/40,4-4*1/10) --(0,0);
	\draw[dashed] (4*32/40,4*3/10) -- (4*20/40,4*1/10);
	\draw[dashed] (4-4*32/40,4-4*3/10) -- (4-4*20/40,4-4*1/10);
	\draw[dashed] ($(4.3,3) + (0.35,-2.5) + (0.2,0.0)$) -- ($(4.3,3) + (0.35,-2.5) - (0.2,0.0)$);
	\node[anchor=west] at ($(4.3,3) + (0.6,-2.55)$) {\footnotesize $Z_\kappa(\langle\Gamma(\{1\},\mathbf{M})\rangle\sqcup\langle\Gamma(\{2\},\mathbf{M})\rangle) = Z\left(\begin{bmatrix}
			\sfrac{20}{40} & \sfrac{12}{40}& \sfrac{8}{40} \\
			\sfrac{1}{10} & \sfrac{2}{10} & \sfrac{7}{10} 
		\end{bmatrix}\right)$};
\end{tikzpicture}
		\caption{\textbf{Visualization of the model from Table \ref{tbl:example-decomp}.}}
		\label{fig:decomp-zonogons}
	\end{figure}
	
	To analyze the model, we first define an inequality measure that suitably captures the required properties for the specific application.
	This is important since it determines how (zonogon) incomparable populations shall be ranked. For $f$-inequality, this is determined by the $(f,p)$ combination. Assume we consider the population $\Gamma(\{2\},\mathbf{M})$ preferable over $\Gamma(\{1\},\mathbf{M})$ and thus want to assign it a smaller inequality index.
	Without further information, we arbitrarily choose the inequality measure obtained from Definition \ref{def:generalized-ineq} using the $\chi^2$-divergence $f(t)= (t-1)^2$ with $p=0.4$, as shown in \refEq{eq:example-ineq-measure}. 
	\begin{equation}
		\begin{aligned}
		I_{\chi^2,0.4}(\mathbf{S}) 
		 &= \sum_{\vec{v}\in\kappa(\mathbf{S})} r(\vec{v}) &&\text{where: ~}~ r\left(\left[\begin{smallmatrix}x\\y\end{smallmatrix}\right]\right) = \frac{0.9\cdot (x - y)^2}{x + 1.5\cdot y}\\
		 &=  \frac{1}{\overline{\mathbf{S}}|\mathbf{S}|}\sum_{s\in\mathbf{S}}\frac{0.9\cdot\left(\overline{\mathbf{S}}-s\right)^2}{\overline{\mathbf{S}}+1.5\cdot s}
		\end{aligned}
		\label{eq:example-ineq-measure}
	\end{equation}
	We can compute the attribute decomposition using Definition \ref{def:def-f-union} and \refEq{eq:union-möbius} or \refEq{eq:imp-partial}.
	The results are visualized in \refFig{fig:decomp-results}.
	\begin{figure}[h]
		\centering
%
\begin{subfigure}[t]{0.5\linewidth}\centering
	\newcommand{\latSpace}{2}
	\begin{tikzpicture}[color=black,anode/.style={text width=3.5cm,align=center,font=\scriptsize}]
		\node[anode] (Zs12) at (0,-1.5) {\baselineskip=10pt $I^\cup_{\chi^2,0.4}(\{\emptyset\}, \mathbf{M}) = 0$ \\$I^\delta_{\chi^2,0.4}(\{\emptyset\}, \mathbf{M}) = 0.2429$};
		\node[anode] (Zu1) at (-\latSpace,0) {\baselineskip=12pt $I^\cup_{\chi^2,0.4}(\{\{1\}\}, \mathbf{M}) = 0.3600$ \\$I^\delta_{\chi^2,0.4}(\{\{1\}\}, \mathbf{M}) = 0.0565$};
		\node[anode] (Zu2) at (\latSpace,0) {\baselineskip=12pt $I^\cup_{\chi^2,0.4}(\{\{2\}\}, \mathbf{M}) = 0.2994$ \\$I^\delta_{\chi^2,0.4}(\{\{2\}\}, \mathbf{M}) = 0.1171$};
		\node[anode] (Zr12) at (0,1.5) {\baselineskip=12pt $I^\cup_{\chi^2,0.4}(\{\{1\},\{2\}\}, \mathbf{M}) = 0.4165$ \\$I^\delta_{\chi^2,0.4}(\{\{1\},\{2\}\}, \mathbf{M}) = 0.1727$};
		\node[anode] (bot) at (0,3) {\baselineskip=12pt $I^\cup_{\chi^2,0.4}(\{\{1,2\}\}, \mathbf{M}) =0.5892$ \\$I^\delta_{\chi^2,0.4}(\{\{1,2\}\}, \mathbf{M}) =  0$};
		\foreach \fr/\to in {{Zs12/Zu2}, {Zs12/Zu1}, {Zu1/Zr12}, {Zu2/Zr12}, {Zr12/bot}}
		\path[-stealth]	(\to) edge (\fr);
	\end{tikzpicture}
	\caption{Decomposition lattice}
\end{subfigure}%
\begin{subfigure}[t]{0.4\linewidth}\centering
	\scalebox{0.9}{\begin{tikzpicture}[color=black]
			\begin{axis}[
				set layers=axis on top,
				width=7cm, height=2.5cm, xbar stacked, axis lines=left,
				xlabel={\small Inequality $I_{\chi^2,0.4}$},
				ymin=0.5, ymax=1.5, 
				xmin=0, xmax=0.625,clip=false,
				ytick={1}, yticklabels={},
				]
				\addplot [fill=orange!40] coordinates {(0.2429, 1)};
				\addplot [fill=my-green!40] coordinates {(0.0565, 1)};
				\addplot [fill=blue!40] coordinates {(0.1171, 1)};
				\addplot [fill=magenta!40] coordinates {(0.1727, 1)};
			\end{axis}
			\node[circle,fill=white,inner sep=1pt] at (1.1,0.45) {\tiny 3};
			\node[circle,fill=white,inner sep=1pt] at (2.35,0.45) {\tiny 2};
			\node[circle,fill=white,inner sep=1pt] at (3.1,0.45) {\tiny 1};
			\node[circle,fill=white,inner sep=1pt] at (4.4,0.45) {\tiny 4};
			\begin{scope}[xshift=3cm,yshift=1.75cm]
				\node[anchor=west] at (-0.3,-3.7) {\small \synergy~ synergetic};
				\node[anchor=west] at (-3.0,-3.2) {\small \uniqueA~ unique $\mathbb{A}_1$};
				\node[anchor=west] at (-0.3,-3.2) {\small \redundancy~ redundant};
				\node[anchor=west] at (-3.0,-3.7) {\small \uniqueB~ unique $\mathbb{A}_2$};
			\end{scope}
			\node at (0,-3.7) {};
	\end{tikzpicture}}
	\caption{Result visualization}
\end{subfigure}%
		\caption{\textbf{Result visualization (rounded).} Considering the model from Table \ref{tbl:example-decomp} with the inequality measure of \refEq{eq:example-ineq-measure}.}
		\label{fig:decomp-results}
	\end{figure}
\end{Example}

	\subsection{Decomposing the Atkinson index}
	\label{subsec:transforming-ineq-measure}
%
The presented attribute decomposition can be extended to any invertible transformation of an  $f$-inequality. We demonstrate the approach using the Atkinson index (\refEq{eq:atkinson-index} with $d\in(0,1]$), which is a transformation of an $f$-inequality (Generalized Entropy index, Theorem \ref{thm:special-cases} and \refEq{eq:atkinson-index}) as shown in \refEq{eq:atkinson-transform}.
\begin{subequations}
	\begin{align}
		v_d(m) &=\begin{cases}
			1-e^{-m} & \text{if } d = 1\\
			1-[d(d-1)m+1]^{\frac{1}{1-d}}& \text{if } 0 < d < 1\\
		\end{cases} \label{eq:atkinson-transform-1}\\
		A_d(\mathbf{S}) &= v_d(\text{GE}_{1-d}(\mathbf{S}))
	\end{align}
	\label{eq:atkinson-transform}
\end{subequations}
The transformation function $v_d$ correctly maintains the bottom element at zero ($v_d(0) = 0$) and is invertible: the case of $d=1$ is invertible, and the case of $0 < d < 1$ is invertible for the required domain shown in \refEq{eq:transform-range}. Therefore, we can view the Atkinson index as a re-graduation~\cite{Knuth2019} on the decomposition lattice.
\begin{subequations}
	\begin{align}
			0 &< d < 1\\
			\text{GE}_{1-d}(\bot_\mathbf{S}) = \text{GE}_{1-d}(\{1\}) = 0 &\leq m \leq \frac{1 - 2^{-d}}{d - d^2}  = \text{GE}_{1-d}(\{1,0\})= \text{GE}_{1-d}(\top_\mathbf{S})
	\end{align}
	\label{eq:transform-range}
\end{subequations}
The attribute decomposition of the Atkinson index is obtained by \refEq{eq:Atkinson-decomp}~\cite[Section 3.6]{mages2024}, which maintains the operational interpretation of the decomposition from the Generalized Entropy index.
\begin{subequations}
	\begin{align}
		f_c(t) &\definedAs \begin{cases}
			t \ln(t)  & \text{if } c = 0\\
			\frac{t^{1-c} - t}{(c - 1) c}  \qquad& \text{if } 0<c<1
		\end{cases}\\
		A_d^\cup(\alpha,\mathbf{M}) &\definedAs v_d(I^\cup_{f_{1-d},0}(\alpha,\mathbf{M}))\\
		A_d^\delta(\alpha,\mathbf{M}) &\definedAs v_d(I^\delta_{f_{1-d},0}(\alpha,\mathbf{M}))
	\end{align}
	\label{eq:Atkinson-decomp}
\end{subequations}
The resulting re-graduation of the decomposition lattice satisfies the inclusion-exclusion relation (Möbius inverse, \refEq{eq:union-möbius}) under a transformed definition of addition shown in Definition \ref{def:Atkinson-addition}. This corresponds to the addition under which the partial contributions sum to the total amount. Moreover, the transformation maintains the desired Properties  \ref{prop:union-1}-\ref{prop:union-4}~\cite[Section 3.6]{mages2024}.
\begin{Definition}[Addition of Atkinson index inequality]
	We define the addition $\oplus_d$ and subtraction $\ominus_d$ on the cumulative and partial Atkinson index ($A_d^\diamond(\mathbf{S})$ where $\diamond\in\{\delta,\cup\}$)  by:
	\begin{subequations}
		\begin{align}
			A_d^\diamond(\mathbf{S}_1) \oplus_d A_d^\diamond(\mathbf{S}_2) &~\definedAs~ v_d(v_d^{-1}(A_d^\diamond(\mathbf{S}_1)) + v_d^{-1}(A_d^\diamond(\mathbf{S}_2)))\\
			A_d^\diamond(\mathbf{S}_1) \ominus_d A_d^\diamond(\mathbf{S}_2) &~\definedAs~ v_d(v_d^{-1}(A_d^\diamond(\mathbf{S}_1)) - v_d^{-1}(A_d^\diamond(\mathbf{S}_2)))
		\end{align}
	\end{subequations}
	\label{def:Atkinson-addition}
\end{Definition}
An interesting observation is that combining the resulting partial contributions into a Shapley value (\refEq{eq:shapley-2} using the addition of Definition \ref{def:Atkinson-addition}) is equivalent to computing the Shapley value directly from Definition \ref{def:Shaley} and Definition \ref{def:Atkinson-addition}. This indicates that the addition used when computing Shapley values should depend on the used inequality measure to maintain consistency between their results.

	\subsection{Multi-layered inequality}
	\label{subsec:layered-inequality}
%
In practical scenarios, inequality may appear in different layers, such as income, healthcare, or housing. Reducing inequality in complex systems forms a multi-objective optimization problem, which can be approached by scalarization (weighting method)~\cite[p. 10 ff.]{branke2008multiobjective}. We can construct an overall inequality measure as weighted sum of $f$-inequalities from each layer. 
Since we can use different $f$-inequalities on each layer, we can precisely specify which distributions are more desirable on each layer. While a Pigou-Dalton transfer on the resulting measure may no longer be practically meaningful, the partial contributions for each layer are additive. This can be used to construct an attribute decomposition of layered inequality that restricts Pigou-Dalton transfers within each layer.
\begin{Definition}
	A multi-layered inequality model is a set of models $\mathcal{M} = \{\mathbf{M}_i : 1 \leq i \leq k\}$ which share the same attributes and individuals. This provides $k$ indicator values for each individual.
\end{Definition}
\begin{Definition}[Layered $f$-inequality]
	Let $\mathcal{S}= \{\mathbf{S}_i ~:~ 1 \leq i \leq k\} = \{\Gamma(\mathbf{a},\mathbf{M}_i) ~:~  \mathbf{M}_i\in\mathcal{M}\}$ be  a set of $k$ populations obtained from a set of attributes $\mathbf{a}$ and a multi-layered inequality model $\mathcal{M}$. We define a layered inequality measure $\mathcal{I}(\mathcal{S})$ as weighted sum ($w_i\geq0$) of $f$-inequality measures. Like indicated by the subscripts of $f$ and $p$ in \refEq{eq:layered-inequality}, the considered $f$-inequality can vary between layers to emphasize important characteristics of inequality on the respective layer.
	\begin{equation}
		\mathcal{I}(\{\mathbf{S}_1,...,\mathbf{S}_k\}) \definedAs \sum_{i=1}^k w_i I_{f_i,p_i}(\mathbf{S}_i)
		\label{eq:layered-inequality}
	\end{equation}
	\label{def:layered-inequality}
\end{Definition}
\begin{Definition}[Layered $f$-inequality decomposition]
	Since the partial contributions of any $f$-inequality consider the same notion of addition, their decompositions are additive as shown in \refEq{eq:layered-inequality-decomposition}. The resulting decomposition restricts Pigou-Dalton transfer to each layer and satisfies an inclusion-exclusion relation (Möbius inverse).
	\begin{subequations}
		\begin{align}
			\mathcal{I}^\cup(\alpha,\mathcal{M}) &\definedAs \sum_{i=1}^k w_i I^\cup_{f_i,p_i}(\alpha,\mathbf{M}_i)\\
			\mathcal{I}^\delta(\alpha,\mathcal{M}) &\definedAs \sum_{i=1}^k w_i I^\delta_{f_i,p_i}(\alpha,\mathbf{M}_i
		\end{align}
		\label{eq:layered-inequality-decomposition}
	\end{subequations}
	\label{def:layered-inequality-decomposition}
\end{Definition}
\begin{Remark}
	The ideal attribute dependence can vary between indicator values. Therefore, there may not exist an attribute dependence that leads to zero synergy for the layered measure.
\end{Remark}
The notion of addition for the partial contributions has to be identical on each layer to obtain an inclusion-exclusion relation on the layered measure. To maintain this consistency when transforming layered inequality, the same transformation has to be applied to all layers. 
\begin{Notation}[Transformed addition and multiplication]
	Consider a specific invertable transformation $v(\cdot)$, then we obtain the following notion of addition and multiplication on transformed inequality measures:
	\begin{subequations}
		\begin{align}
			a \oplus b &~\definedAs~ v(v^{-1}(a) + v^{-1}(b))\\
			a \odot b &~\definedAs~ v(v^{-1}(a) ~\cdot~ v^{-1}(b))
		\end{align}
	\end{subequations}
	\label{not:op-transforms}
\end{Notation}
Transforming a layered inequality measure equals transforming each $f$-inequality and their corresponding operators as shown in \refEq{eq:transform-layers} by Definition \ref{def:layered-inequality-decomposition} and Notation \ref{not:op-transforms}. As a result, the inclusion-exclusion relation (Möbius inverse) of partial contributions from the transformed measure is maintained under the transformed addition operation.
\begin{subequations}
	\begin{align}
		v(\mathcal{I}^\cup(\alpha,\mathcal{M})) &= v(w_1) \odot v(I^\cup_{f_1,p_1}(\alpha,\mathbf{M}_1)) \quad\oplus~ ... ~\oplus\quad  v(w_k) \odot v(I^\cup_{f_k,p_k}(\alpha,\mathbf{M}_k))\\
		v(\mathcal{I}^\delta(\alpha,\mathcal{M})) &= v(w_1) \odot v(I^\delta_{f_1,p_1}(\alpha,\mathbf{M}_1)) \quad\oplus~ ... ~\oplus\quad  v(w_k) \odot v(I^\delta_{f_k,p_k}(\alpha,\mathbf{M}_k))
	\end{align}
	\label{eq:transform-layers}
\end{subequations}

	\subsection{The relation of inequality and information}
	\label{sec:relation}
%
This section brings the results from decomposing inequality into the context of decomposing information. Their relation is intuitive since both information and inequality measures aim to quantify a notion of distance from the uniform distribution. We think studying their relation provides valuable insights and can encourage the transfer of results between areas. 

A \acl{PID}~\cite{williams-beer,special-issue-intro} aims to understand how a set of source variables provides information redundantly, uniquely or synergistically about a target. We use the following notation:
\begin{Notation}\hfill
	\begin{itemize}
	\item We notate a discrete visible/source variable $V$ with state $v$ in the state space $\mathcal{V}=\{v_1,...,v_m\}$.
	\item We notate a discrete target variable $T$ with state $t$ in the state space $\mathcal{T}$.
	\item We notate an indicator variable for state $t$ of variable $T$ as $\mathbf{1}_T(t)$.
	\end{itemize}
\end{Notation}
We can define discrete $f$-information as shown in \refEq{eq:info}~\cite[Section 3.1]{mages2024}. Notice that \refEq{eq:relation-ineq} is $f$-inequality as defined in Section \ref{subsec:generalized-ineq-measure} and that $f$-information is its expected value (\refEq{eq:relation-ineq-2}). Therefore, discrete $f$-information is a layered inequality measure by Definition \ref{def:layered-inequality}. We refer to $i_{f,p}(\kappa)$ as specific or target-pointwise information. The ordering of populations by the Atkinson criterion is identical to the ordering of random variables by the Blackwell order with respect to an indicator variable~\cite{blackwell1953equivalent,blackwell-no-lattice}, which is the binary target $\mathbf{1}_T(t)$. In this context, the zonogon $Z_\kappa(P(V\mid \mathbf{1}_T(t)))$ represents the achievable trade-off between the type-I and type-II error for predicting the state $t\in\mathcal{T}$~\cite{mages2024} and its lower boundary is known as Neyman-Pearson boundary~\cite[Section 14.1]{polyanskiy2022information}. The transformation of measures is also used in both areas: just like the Atkinson index is an invertible transformation on an $f$-inequality, so is Rényi-information an invertible transformation of an $f$-information~\cite{mages2024}. Due to these relations, the presented methodology in this work can directly be applied to obtain non-negative \aclp{PID} with practical operational interpretation, as shown in~\cite{mages2024}.
\begin{subequations}
	\begin{align}
		P(V\mid \mathbf{1}_T(t)) &\definedAs \begin{bmatrix}
			p(V=v_1\mid T=t) & \dots & p(V=v_m\mid T=t)\\
			p(V=v_1\mid T\neq t) & \dots & p(V=v_m\mid T\neq t)\\
		\end{bmatrix}\\
		i_{f,p}(\kappa) &\definedAs \sum_{\vec{v}\in\kappa}r_{f,p}(\vec{v})\label{eq:relation-ineq}\\
		\text{discrete } f\text{-information of }(V;T) &\definedAs \mathbb{E}_{t\in T}[i_{f,p(T=t)}(P(V\mid \mathbf{1}_T(t)))]\label{eq:relation-ineq-2}
	\end{align}
	\label{eq:info}
\end{subequations}

This creates a relation between some commonly used information and inequality measures, as shown in Table \ref{tbl:relation-measures}. It may be desirable to survey existing inequality measures in the future to see if they are (invertible transformations of) an $f$-inequality and identify the equivalent (transformation of an)  $f$-information.
\begin{table}[h]\centering
	\begin{tabular}{r c l}\toprule
		\textbf{$f$-information measure} & \textbf{Generator}& \textbf{$f$-inequality measure}\\\midrule 
		Total Variation & $f(t) = 0.5|t-1|$ & Pietra index\\
		Reversed \acl{KL}-Information & $f(t) = -\ln\left(t\right)$  & Theil index/Generalized Entropy ($c=1$)\\
		Mutual Information & $f(t) =  t\ln\left(t\right)$&  Generalized Entropy ($c=0$)\\		\bottomrule
	\end{tabular}
	\caption{Relation of commonly used inequality and information measure.}
	\label{tbl:relation-measures}
\end{table}

Some key relations between both areas are summarized in Table \ref{tbl:relation}. Consequently, we see further opportunities to apply concepts and insights from one area to the other. In particular, we are curious about the resulting interpretation when applying subgroup decompositions from inequality measures to specific information.
\begin{table}[h]\centering
	\begin{tabular}{r c l}\toprule
		\textbf{Information Measures} && \textbf{Inequality Measures}\\\midrule  
		Blackwell order with respect to an indicator variable &$\Longleftrightarrow$& Atkinson criterion\\
		Neyman-Pearson boundary &$\Longleftrightarrow$& Lorenz curve\\
		specific $f$-information &$\Longleftrightarrow$& $f$-inequality\\\bottomrule
	\end{tabular}
	\caption{Equivalences between measures of information and inequality.}
	\label{tbl:relation}
\end{table}

	\section{Discussion}
%
This work transferred the results from decomposing information measures~\cite{mages2024} to the decomposition of inequality measures. We defined a class of inequality measures and demonstrated a framework for attribute decompositions based on current research on \aclp{PID}.
We studied a family of inequality measures that can be described as length-like on the Lorenz curve since they satisfy a triangle inequality. These measures are particularly interesting due to their properties and relation to established information measures. 

We could similarly construct an attribute decomposition for area-like measures, such as the Gini coefficient: As it can be seen from \refEq{eq:basic-in}, a non-negative attribute decomposition can be achieved by using the meet of the Atkinson order as a notion of intersection on the redundancy lattice~\cite{williams-beer}. While this approach does not provide the same operational interpretation of synergy, it may provide a beneficial interpretation of redundancy for some applications.

When decomposing inequality, it is sometimes desired to identify the flow of partial contributions through Pigou-Dalton transfers over time~\cite{centralbank2019}. While it has not been discussed in this work, this can be achieved using the method described in \cite[Section 4.2]{mages2024}.

Typical decompositions currently assume categorical attributes for forming clear partitions and subgroups. It appears to be an open research question of extending these ideas to attributes with a notion of similarity (distance) between states. For example, a person's age in years is discrete but not categorical, which leads to a more fuzzy definition of subgroups. We think it would be desirable to better understand the treatment of such variables in both inequality and information decompositions.

We noted in Section \ref{subsec:transforming-ineq-measure} that it would be desirable to utilize different notions of addition when computing Shapley values from inequality measures to ensure the consistency of results between related measures. This highlights the difficulty of transferring concepts between areas. However, we are optimistic that such issues can be avoided between inequality and information measures since they share an identical underlying representation and ordering relation.

Finally, we used the initial examples in Section \ref{sec:intro} (energy/communication and privacy/utility) to contrast inequality measures with respect to the contribution and distribution of resources. This indicates how the presented inequality measures can be extended to attribute decomposable fairness measures as future work.

	\section{Conclusions}
	In this work, we presented a new family of inequality measures and a new type of inequality decomposition. The presented decomposition focuses on the interactions between attributes of an individual to identify how inequality is obtained from the redundant, unique, and synergetic interactions between them. We demonstrated that the analysis by game synergy and Shapley values cannot separate the desired components and that the decomposition requires an extension of the inequality measure. We defined an extension for the introduced family of inequality measures, which satisfies the required properties and provides a practical operational interpretation. This generates a decomposition for established measures, such as the Generalized Entropy and Atkinson index. Finally, we discussed the relation between measures of information and inequality to encourage the transfer of results between both areas. 
	\vspace{3mm}
	
	{\noindent\small \textbf{Acknowledgments:} We thank Miia Bask for the helpful discussion and suggestions.}\vspace{3mm}
	
	{\noindent\small \textbf{Data Availability Statement:} An implementation of the presented decomposition is available at:\newline \url{https://github.com/uu-core/pid-inequality}}\vspace{3mm}
	
	{\noindent\small \textbf{Funding:} This research was funded by Swedish Civil Contingencies Agency (MSB) through the project RIOT grant number MSB 2018-12526. The funders had no role in study design, data collection and analysis, decision to publish, or preparation of the manuscript.}
	
	\bibliography{bibliography.bib}

	\appendix
	\renewcommand*{\thesection}{\Alph{section}}

	\section[\appendixname~\thesection]{Relation of Property \ref{prop:ineq-1}-\ref{prop:ineq-5} to the zonogon order}
	\label{ap:prop-to-lorenz}
%
\subsection{Representation of Property \ref{prop:ineq-1}-\ref{prop:ineq-5}}
\label{apsub:prop-representation}	
\begin{enumerate}
	\item \textbf{Label invariance} (Property \ref{prop:ineq-1}): Re-labeling individuals and groups is the  re-ordering of columns in the population matrix. This is a column permutation and, therefore, equivalent to the multiplication with a permutation matrix $\mathbf{P}$, as shown in \refEq{eq:relabling}.
	\begin{equation}
		\kappa(\mathbf{S}_2) = \kappa(\mathbf{S}_1) \mathbf{P}
		\label{eq:relabling}
	\end{equation}
	\begin{Lemma}
		If population $\mathbf{S}_2$ is a relabeling of $\mathbf{S}_1$, then both populations are in the same equivalence class $\langle\mathbf{S}_1\rangle = \langle\mathbf{S}_2\rangle$.
		\label{lem:label-inv-perm}
	\end{Lemma}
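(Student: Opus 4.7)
The plan is to show $Z_\kappa(\mathbf{S}_1) = Z_\kappa(\mathbf{S}_2)$ and then invoke Definition \ref{def:pop-equivalence} to conclude the equivalence class identity. By hypothesis, $\kappa(\mathbf{S}_2) = \kappa(\mathbf{S}_1)\mathbf{P}$ for some permutation matrix $\mathbf{P}$, so the two matrices have the same multiset of column vectors (i.e.\ generating vectors of the zonogon), just in a different order.

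The cleanest route is to apply the basic property of zonogons stated directly below Definition \ref{def:zonogon}: the zonogon is invariant to permuting the order of matrix vectors. A permutation matrix $\mathbf{P}$ acts on $\kappa(\mathbf{S}_1)$ exactly as a reordering of its columns, so $Z(\kappa(\mathbf{S}_1)) = Z(\kappa(\mathbf{S}_1)\mathbf{P}) = Z(\kappa(\mathbf{S}_2))$, giving $Z_\kappa(\mathbf{S}_1) = Z_\kappa(\mathbf{S}_2)$. By Definition \ref{def:pop-equivalence}, this means $\mathbf{S}_1 \cong \mathbf{S}_2$, and hence their equivalence classes coincide. An equivalent path, if one prefers to work through the zonogon order, is to observe that every permutation matrix is row stochastic (as is its inverse $\mathbf{P}^{-1} = \mathbf{P}^\top$); applying Definition \ref{def:zonogon-order} to both $\kappa(\mathbf{S}_2) = \kappa(\mathbf{S}_1)\mathbf{P}$ and $\kappa(\mathbf{S}_1) = \kappa(\mathbf{S}_2)\mathbf{P}^{-1}$ yields inclusion in both directions, hence equality of the zonogons.

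There is no real obstacle: the result is essentially a one-line corollary of the permutation invariance of zonogons already recorded in Section \ref{sec:bg-lorenz}. The only point worth stating explicitly in a formal write-up is that $\mathbf{P}$ is indeed a row stochastic matrix (each row has a single entry of $1$ and the rest zero), which immediately plugs into \refEq{eq:zonogon-order-2} and thus certifies both inclusions of Definition \ref{def:zonogon-order} without further calculation.
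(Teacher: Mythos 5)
Your proposal is correct and matches the paper's argument: the paper proves the lemma exactly via your ``equivalent path,'' noting that $\mathbf{P}$ and $\mathbf{P}^{-1}=\mathbf{P}^T$ are row stochastic and applying \refEq{eq:zonogon-order} in both directions to get $Z_\kappa(\mathbf{S}_1)=Z_\kappa(\mathbf{S}_2)$, then concluding via Definition \ref{def:pop-equivalence} and Notation \ref{not:pop-classes}. Your primary route, citing the permutation-invariance of zonogons stated below Definition \ref{def:zonogon}, is a harmless shortcut to the same fact.
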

	\begin{proof}
		The inverse of a permutation matrix is its transpose ($\mathbf{P}^{-1} = \mathbf{P}^T$). Since permutation matrices are double stochastic, $\mathbf{P}$ and $\mathbf{P}^T$ are row stochastic. For any population re-labeling:
		\begin{equation}
			\begin{aligned}
				\kappa(\mathbf{S}_2) &= \kappa(\mathbf{S}_1) \mathbf{P} \\
				\kappa(\mathbf{S}_1) &= \kappa(\mathbf{S}_2) \mathbf{P}^T
			\end{aligned}
		\end{equation}
		Thus, we obtain:
		\begin{equation*}
			 \begin{aligned}
			 	 Z_\kappa(\mathbf{S}_1) &= Z_\kappa(\mathbf{S}_2)\qquad& \textnormal{(by \refEq{eq:zonogon-order})}\\
			 	 \mathbf{S}_1 &\cong \mathbf{S}_2 & \textnormal{(by Definition \ref{def:pop-equivalence})}\\
			 	 \langle\mathbf{S}_1\rangle &= \langle\mathbf{S}_2\rangle & \textnormal{(by Notation \ref{not:pop-classes})}
			 \end{aligned}
		\end{equation*}
	\end{proof}
	\item \textbf{Duplication invariance} (Property \ref{prop:ineq-2}): Duplicating a population is equivalent to duplicating and normalizing the corresponding population matrix, as shown in \refEq{eq:dup-conc}.
	\begin{equation}
		Z\left(\kappa(\mathbf{S}\uplus\mathbf{S})\right) = Z\left(0.5 \begin{bmatrix}
			\kappa(\mathbf{S}) ~&~\kappa(\mathbf{S})
		\end{bmatrix}\right)
		\label{eq:dup-conc}
	\end{equation}
	\begin{Lemma}
		Duplicating a population does not affect its equality class: $\langle\mathbf{S}\rangle = \langle\mathbf{S}\uplus\mathbf{S}\rangle$.
		\label{lem:size-inv-perm}
	\end{Lemma}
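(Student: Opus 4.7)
The plan is to reduce the lemma to the zonogon invariance property stated just below Definition \ref{def:zonogon}, namely that merging (or splitting) matrix vectors of identical slope leaves the zonogon unchanged. By Definition \ref{def:pop-equivalence} and Notation \ref{not:pop-classes}, it suffices to show $Z_\kappa(\mathbf{S}) = Z_\kappa(\mathbf{S} \uplus \mathbf{S})$.

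First I would observe that the arithmetic mean is invariant under duplication, $\overline{\mathbf{S} \uplus \mathbf{S}} = \overline{\mathbf{S}}$, so directly from Definition \ref{def:population-to-matrix} the normalized population matrix of the duplicated population is obtained by concatenating $\kappa(\mathbf{S})$ with itself and scaling by $\tfrac{1}{2}$. This is precisely the identity already recorded in \refEq{eq:dup-conc}, so no computation is needed beyond citing it.

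Next I would apply the splitting/merging property: for each generating vector $\vec{v}_i \in \kappa(\mathbf{S})$, the pair of vectors $(\tfrac{1}{2}\vec{v}_i,\ \tfrac{1}{2}\vec{v}_i)$ appearing in $\tfrac{1}{2}[\kappa(\mathbf{S})\ \kappa(\mathbf{S})]$ have identical slope (indeed they are equal), so they can be merged to a single vector $\vec{v}_i$. Doing this for every $i \in \{1,\dots,|\mathbf{S}|\}$ and invoking also the permutation invariance from the same list to reorder columns if desired, one obtains
\begin{equation*}
Z\!\left(\tfrac{1}{2}\begin{bmatrix}\kappa(\mathbf{S}) & \kappa(\mathbf{S})\end{bmatrix}\right) \;=\; Z\!\left(\kappa(\mathbf{S})\right).
\end{equation*}
Combined with \refEq{eq:dup-conc} this yields $Z_\kappa(\mathbf{S} \uplus \mathbf{S}) = Z_\kappa(\mathbf{S})$, hence $\mathbf{S} \cong \mathbf{S}\uplus\mathbf{S}$ and $\langle\mathbf{S}\rangle = \langle\mathbf{S}\uplus\mathbf{S}\rangle$.

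There is essentially no obstacle here: the only judgement call is whether to invoke the merging property directly or, alternatively, to exhibit explicit row stochastic matrices $\lambda_1,\lambda_2$ witnessing both inclusions $\kappa(\mathbf{S}) = \kappa(\mathbf{S}\uplus\mathbf{S})\,\lambda_1$ and $\kappa(\mathbf{S}\uplus\mathbf{S}) = \kappa(\mathbf{S})\,\lambda_2$ via \refEq{eq:zonogon-order-2} (e.g.\ $\lambda_2 = \tfrac{1}{2}[I\ I]$ and $\lambda_1$ the corresponding $2n\times n$ matrix identifying duplicated columns). The merging route is shorter and matches the style used in the surrounding discussion, so I would prefer it.
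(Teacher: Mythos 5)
Your proof is correct, but it takes a different route from the paper's. You reduce the claim to the geometric invariance properties listed under Definition~\ref{def:zonogon} (permutation of columns plus merging of generating vectors of identical slope), arguing that each duplicated pair $(\tfrac{1}{2}\vec{v}_i,\tfrac{1}{2}\vec{v}_i)$ in $\tfrac{1}{2}\bigl[\kappa(\mathbf{S})~\kappa(\mathbf{S})\bigr]$ merges back into $\vec{v}_i$. The paper instead works directly with the stochastic-matrix characterization of the zonogon order (\refEq{eq:zonogon-order-2}): after permuting columns (justified via Lemma~\ref{lem:label-inv-perm}), it exhibits the two explicit row stochastic matrices $0.5\,[\,\mathbf{I}~\mathbf{I}\,]$ and $\bigl[\begin{smallmatrix}\mathbf{I}\\\mathbf{I}\end{smallmatrix}\bigr]$ witnessing mutual inclusion of the zonogons, which is exactly the alternative you sketch in your final paragraph (your $\lambda_1$ is in fact $\bigl[\begin{smallmatrix}\mathbf{I}\\\mathbf{I}\end{smallmatrix}\bigr]$, a $2|\mathbf{S}|\times|\mathbf{S}|$ stochastic matrix, not $2n\times n$ --- a cosmetic slip only). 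The two arguments are logically equivalent --- the merging property is itself a consequence of the stochastic-matrix criterion --- so what your route buys is brevity and consistency with the informal style of the surrounding discussion, while the paper's route buys self-containedness: it produces the witnesses demanded by Definition~\ref{def:zonogon-order} explicitly rather than leaning on a property stated without proof. Either is acceptable; just be sure that if you cite the merging property you also cite \refEq{eq:dup-conc} (or verify $\overline{\mathbf{S}\uplus\mathbf{S}}=\overline{\mathbf{S}}$ as you do) so the factor $\tfrac{1}{2}$ in the normalization is accounted for.
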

	\begin{proof}
		Let $\textbf{P}_1$ and $\textbf{P}_2$ be permutation matrices to generate the desired ordering of columns for the relation of \refEq{eq:prop-2-rel-ordering}. By Lemma \ref{lem:label-inv-perm}, this does not affect their equivalence class. 
		\begin{equation}
			\kappa(\mathbf{S} \uplus \mathbf{S})\mathbf{P}_2 = 0.5 \begin{bmatrix}
				\kappa(\mathbf{S})\mathbf{P}_1 ~&~\kappa(\mathbf{S})\mathbf{P}_1\label{eq:prop-2-rel-ordering}
			\end{bmatrix}
		\end{equation} 
		Let $\mathbf{I}$ be an $|\mathbf{S}|\times|\mathbf{S}|$ identity matrix, then we can find that both population matrices can be represented as multiplication of the other by a row stochastic matrix as shown in \refEq{eq:prop-2-relatinon}. As discussed in Lemma \ref{lem:label-inv-perm}, this implies $\langle\mathbf{S}\rangle = \langle\mathbf{S}\uplus\mathbf{S}\rangle$ by \refEq{eq:zonogon-order}, Definition \ref{def:pop-equivalence}, and Notation \ref{not:pop-classes}.
		\begin{subequations}
			\begin{align}
				\kappa(\mathbf{S} \uplus \mathbf{S})\mathbf{P}_2 &=  \kappa(\mathbf{S}) \mathbf{P}_1 \cdot0.5\cdot \begin{bmatrix} \mathbf{I} & \mathbf{I}
				\end{bmatrix} \label{eq:prop-2-relatinon-1} \\
				\kappa(\mathbf{S})\mathbf{P}_1 &= \kappa(\mathbf{S} \uplus \mathbf{S}) \mathbf{P}_2 \begin{bmatrix} \mathbf{I} \\ \mathbf{I}
				\end{bmatrix}\label{eq:prop-2-relatinon-2}
			\end{align}
			\label{eq:prop-2-relatinon}
		\end{subequations}
	\end{proof}
	\item \textbf{Scale invariance} (Property \ref{prop:ineq-3}):
	\begin{Lemma}
		If population $\mathbf{S}_1 = \{k\cdot s_i ~:~ s_i \in \mathbf{S}_2\}$ is a linear scaling for the indicator value of $\mathbf{S}_2$ by $k\in \mathbb{R}_{>0}$, then both populations are in the same equivalence class: $\langle\mathbf{S}_1\rangle= \langle\mathbf{S}_2\rangle$.
		\label{lem:prop-3-helper}
	\end{Lemma}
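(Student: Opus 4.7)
The plan is to show that scaling all indicator values by a positive constant leaves the normalized population matrix unchanged (up to the usual column ordering), so the induced zonogon — and hence the equivalence class — is identical.

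First, I would observe that $|\mathbf{S}_1| = |\mathbf{S}_2|$ since the map $s \mapsto k\cdot s$ is a bijection preserving multiplicities, and that $\overline{\mathbf{S}}_1 = k\cdot \overline{\mathbf{S}}_2$ by linearity of the arithmetic mean (well-defined since $k>0$ ensures $\overline{\mathbf{S}}_1 \neq 0$ whenever $\overline{\mathbf{S}}_2 \neq 0$, cf.\ the standing assumption that some indicator value is strictly positive). Next, for each $s \in \mathbf{S}_2$ with corresponding element $k\cdot s \in \mathbf{S}_1$, the crucial cancellation is
\begin{equation*}
\frac{k\cdot s}{\overline{\mathbf{S}}_1} = \frac{k\cdot s}{k\cdot \overline{\mathbf{S}}_2} = \frac{s}{\overline{\mathbf{S}}_2},
\end{equation*}
so the generating vector contributed by $k\cdot s$ in $\kappa(\mathbf{S}_1)$ coincides exactly with the generating vector contributed by $s$ in $\kappa(\mathbf{S}_2)$.

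Consequently, the columns of $\kappa(\mathbf{S}_1)$ and $\kappa(\mathbf{S}_2)$ agree as multisets, and therefore differ by at most a column permutation. Writing $\kappa(\mathbf{S}_1) = \kappa(\mathbf{S}_2)\mathbf{P}$ for a suitable permutation matrix $\mathbf{P}$, I would either invoke the permutation invariance of zonogons listed below Definition~\ref{def:zonogon}, or alternatively apply Lemma~\ref{lem:label-inv-perm} directly (since $\mathbf{P}$ is doubly stochastic). Either route yields $Z_\kappa(\mathbf{S}_1) = Z_\kappa(\mathbf{S}_2)$, whence $\mathbf{S}_1 \cong \mathbf{S}_2$ by Definition~\ref{def:pop-equivalence} and $\langle\mathbf{S}_1\rangle = \langle\mathbf{S}_2\rangle$ by Notation~\ref{not:pop-classes}.

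There is no real obstacle here: the argument reduces to the algebraic identity $k s/(k\overline{\mathbf{S}}_2) = s/\overline{\mathbf{S}}_2$, together with the machinery already developed for Property~\ref{prop:ineq-1}. The only minor subtlety worth flagging explicitly is that the hypothesis $k>0$ is essential both to preserve non-negativity of the indicator values and to keep $\overline{\mathbf{S}}_1$ non-zero, so that $\kappa(\mathbf{S}_1)$ is well-defined in the sense of Definition~\ref{def:population-to-matrix}.
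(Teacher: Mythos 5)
Your proof is correct and follows essentially the same route as the paper's: both rest on the cancellation $\sfrac{ks}{k\overline{\mathbf{S}}_2} = \sfrac{s}{\overline{\mathbf{S}}_2}$ showing $\kappa(\mathbf{S}_1) = \kappa(\mathbf{S}_2)$ up to column ordering, then invoke Lemma \ref{lem:label-inv-perm} to conclude equality of the equivalence classes. The extra remarks about well-definedness of the mean and preservation of multiplicities are fine but not needed beyond what the paper already assumes.
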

	\begin{proof}
		\begin{subequations}
			\begin{align}
				\kappa(\mathbf{S}_1) &= \frac{1}{|\mathbf{S}_1|} \begin{bmatrix}~
					\begin{matrix}
						1\\ \sfrac{s}{\overline{\mathbf{S}_1}}
					\end{matrix} &:~ s \in \mathbf{S}_1
				\end{bmatrix} & \text{(by Definition \ref{def:population-to-matrix})}\\
				&= \frac{1}{|\mathbf{S}_2|} \begin{bmatrix}~
					\begin{matrix}
						1\\ \frac{ks}{k\overline{\mathbf{S}_2}}
					\end{matrix} &:~ s \in \mathbf{S}_2
				\end{bmatrix} & \text{(by assumption)}\\
				&= \frac{1}{|\mathbf{S}_2|} \begin{bmatrix}~
				\begin{matrix}
					1\\ \sfrac{s}{\overline{\mathbf{S}_2}}
				\end{matrix} &:~ s \in \mathbf{S}_2
				\end{bmatrix} & \text{(by $k\neq 0$)}\\
				&= \kappa(\mathbf{S}_2)& \text{(by Definition \ref{def:population-to-matrix})}\\
				\Longrightarrow \langle\mathbf{S}_1\rangle &= \langle\mathbf{S}_2\rangle & \text{(by Lemma \ref{lem:label-inv-perm})}
			\end{align}
		\end{subequations}
	\end{proof}
	\item \textbf{Pigou-Dalton transfers} (Property \ref{prop:ineq-4}):
	Let $\mathbf{S}'$ be the population after a Pigou-Dalton transfer on $\mathbf{S} = \mathbf{G}\uplus\{s_1,s_2\}$ and choose the permutation matrices $\mathbf{P}_1$ and $\mathbf{P}_2$ according to \refEq{eq:pdtransfer-matrix-0}.  Note that a Pigou-Dalton transfer does not affect the total or average indicator value $\overline{\mathbf{S}}=\overline{\mathbf{S}}'$.
	\begin{subequations}
		\begin{align}
			\kappa(\mathbf{S})\ \mathbf{P}_1 &= \frac{1}{|\mathbf{S}|} \begin{bmatrix}\begin{bmatrix}
					\begin{matrix}
						1 \\ \sfrac{s}{\overline{\mathbf{S}}}
					\end{matrix}&:~s\in\mathbf{G}
				\end{bmatrix} & \begin{matrix}
				1 & 1\\ \sfrac{s_1}{\overline{\mathbf{S}}} & \sfrac{s_2}{\overline{\mathbf{S}}}
				\end{matrix}
			\end{bmatrix}\label{eq:pdtransfer-matrix-1}	\\
			\kappa(\mathbf{S}')\ \mathbf{P}_2 &= \frac{1}{|\mathbf{S}|} \begin{bmatrix}\begin{bmatrix}
				\begin{matrix}
					1 \\ \sfrac{s}{\overline{\mathbf{S}}}
				\end{matrix}&:~s\in\mathbf{G}
			\end{bmatrix} & \begin{matrix}
				1 & 1\\ \sfrac{s'_1}{\overline{\mathbf{S}}} & \sfrac{s'_2}{\overline{\mathbf{S}}}
			\end{matrix}
			\end{bmatrix}\label{eq:pdtransfer-matrix-2}
		\end{align}
		\label{eq:pdtransfer-matrix-0}
	\end{subequations}
	We can represent a Pigou-Dalton transfer with $p\in(0,0.5]$ as multiplication by a double stochastic matrix, as shown in \refEq{eq:pdtransfer-matrix}, where $\mathbf{I}$ is an identity matrix.
	\begin{subequations}
		\begin{align}
			\begin{bmatrix}\begin{bmatrix}
					\begin{matrix}
						1 \\ \sfrac{s}{\overline{\mathbf{S}}}
					\end{matrix}&:~s\in\mathbf{G}
				\end{bmatrix} & \begin{matrix}
					1 & 1\\ \sfrac{s'_1}{\overline{\mathbf{S}}} & \sfrac{s'_2}{\overline{\mathbf{S}}}
				\end{matrix}
			\end{bmatrix}&= \begin{bmatrix}\begin{bmatrix}
				\begin{matrix}
					1 \\ \sfrac{s}{\overline{\mathbf{S}}}
				\end{matrix}&:~s\in\mathbf{G}
			\end{bmatrix} & \begin{matrix}
				1 & 1\\ \sfrac{s_1}{\overline{\mathbf{S}}} & \sfrac{s_2}{\overline{\mathbf{S}}}
			\end{matrix}
			\end{bmatrix} \begin{bmatrix}
			\mathbf{I} & 0 & 0\\
			0 & (1-q) & q \\
			0 & q & (1-q) \\
			\end{bmatrix}\\
			\kappa(\mathbf{S}')\ \mathbf{P}_2 &= \kappa(\mathbf{S})\ \mathbf{P}_1 \lambda \qquad \text{(where $\lambda$ is a double stochastic matrix)}
		\end{align}
		\label{eq:pdtransfer-matrix}
	\end{subequations}
	Since stochastic matrices are closed under multiplication, any sequence of Pigou-Dalton transfers corresponds to a multiplication by some stochastic matrix $\lambda$.
	\begin{Lemma}
		If there exist a sequence of (non-empty) Pigou-Dalton transfers on population $\mathbf{S}$ to arrive at population $\mathbf{S}'$, then $\langle\mathbf{S}'\rangle \sqsubset \langle\mathbf{S}\rangle$.
		\label{lem:transfer-inferior}
	\end{Lemma}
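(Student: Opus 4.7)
The plan is to iterate (\ref{eq:pdtransfer-matrix}) along the given sequence $\mathbf{S} = \mathbf{S}_0 \to \mathbf{S}_1 \to \cdots \to \mathbf{S}_k = \mathbf{S}'$ of non-empty Pigou-Dalton transfers and combine the resulting stochastic matrices. At each step $i$, the preceding display produces column permutations $\mathbf{P}_1^{(i)}, \mathbf{P}_2^{(i)}$ and a doubly stochastic mixing matrix $\lambda_i$ satisfying $\kappa(\mathbf{S}_{i+1})\mathbf{P}_2^{(i)} = \kappa(\mathbf{S}_i)\mathbf{P}_1^{(i)}\lambda_i$. Since permutation matrices and doubly stochastic matrices are all row stochastic, and the row-stochastic matrices are closed under multiplication, absorbing the permutations and composing across all $k$ transfers yields a single row-stochastic matrix $\Lambda$ with $\kappa(\mathbf{S}') = \kappa(\mathbf{S})\,\Lambda$. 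By (\ref{eq:zonogon-order-2}), this immediately gives $Z_\kappa(\mathbf{S}') \subseteq Z_\kappa(\mathbf{S})$, i.e.\ $\langle\mathbf{S}'\rangle \sqsubseteq \langle\mathbf{S}\rangle$.

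The main obstacle is upgrading this weak inclusion to the strict one required by the statement. It suffices to prove strictness for any single step $\mathbf{S}_i \to \mathbf{S}_{i+1}$, because the weak inclusions obtained above for the remaining steps then propagate strictness to the full chain. Fix such a step, acting on two columns $v_1, v_2$ of $\kappa(\mathbf{S}_i)$ of respective slopes $s_1/\overline{\mathbf{S}}$ and $s_2/\overline{\mathbf{S}}$ with $s_1 \neq s_2$ and $q \in (0, 0.5]$, replaced by $v'_1 = (1-q)v_1 + q v_2$ and $v'_2 = q v_1 + (1-q)v_2$. Every point in the parallelogram $Z([v'_1\ v'_2])$ is, by direct substitution, a point of $Z([v_1\ v_2])$, so the new parallelogram is contained in the old; moreover their areas are $|s_1-s_2|/(\overline{\mathbf{S}}|\mathbf{S}|^2)$ and $(1-2q)|s_1-s_2|/(\overline{\mathbf{S}}|\mathbf{S}|^2)$ respectively, so the containment is strict.

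The last step is to lift strict inclusion of the two affected columns to strict inclusion of the full zonogon. The remaining columns contribute the same Minkowski summand $Z(\kappa(\mathbf{G}))$ (where $\mathbf{G}$ collects all columns except the transferred pair) on both sides of the step, so by (\ref{eq:def-zono-sum-2}) we have $Z_\kappa(\mathbf{S}_i) = Z([v_1\ v_2]) + Z(\kappa(\mathbf{G}))$ and $Z_\kappa(\mathbf{S}_{i+1}) = Z([v'_1\ v'_2]) + Z(\kappa(\mathbf{G}))$. Invoking the Minkowski cancellation property for bounded convex bodies---if $A + C = B + C$ with $A \subseteq B$ then $A = B$---the strict containment of the two-column parallelograms lifts to $Z_\kappa(\mathbf{S}_{i+1}) \subsetneq Z_\kappa(\mathbf{S}_i)$. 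Combined with the weak inclusions for the remaining steps of the chain, this yields $\langle\mathbf{S}'\rangle \sqsubset \langle\mathbf{S}\rangle$, as required.
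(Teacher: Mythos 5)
Your proof is correct, and while the first half (composing the permutation and doubly stochastic matrices of \refEq{eq:pdtransfer-matrix} into a single row-stochastic $\Lambda$ and invoking \refEq{eq:product-subset} / \refEq{eq:zonogon-order-2}) coincides with the paper's argument for the weak inclusion, your treatment of strictness is genuinely different. The paper dispatches strictness in one line: $Z_\kappa(\mathbf{S}) \not\subseteq Z_\kappa(\mathbf{S}')$ because the inverse of the transfer matrix $\lambda$ is not a valid stochastic matrix for $q\in(0,0.5]$ (singular at $q=0.5$, negative entries otherwise). That argument is short but leaves implicit why the non-stochasticity of $\lambda^{-1}$ rules out \emph{every} stochastic matrix mapping $\kappa(\mathbf{S}')$ back to $\kappa(\mathbf{S})$, which is exactly what \refEq{eq:zonogon-order-2} would require. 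You instead localize to the two affected generators, compute that the generator parallelogram's area shrinks by the factor $1-2q<1$ (strictly, since $s_1\neq s_2$ forces the original determinant to be non-zero), and then lift the strict containment through the Minkowski decomposition of the zonogon via the R\aa dstr\"om cancellation law for compact convex bodies. This is more work but closes the gap the paper glosses over, and it is fully compatible with the paper's machinery (Definition \ref{def:zono-sum}). One cosmetic remark: your $Z(\kappa(\mathbf{G}))$ should be read as the zonogon generated by the remaining columns of $\kappa(\mathbf{S}_i)$ (with the $1/|\mathbf{S}_i|$ and $\overline{\mathbf{S}_i}$ normalization), not as $\kappa$ applied to the population $\mathbf{G}$ in the sense of Definition \ref{def:population-to-matrix}; with that reading the Minkowski splitting is immediate from \refEq{eq:def-zono-sum-2}. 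A marginally shorter variant of your strictness step would note that the total zonogon area, i.e.\ the Gini coefficient, strictly decreases under a non-empty transfer, which together with the weak inclusion already forbids equality of the zonogons.
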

	\begin{proof}
		We obtain $Z_\kappa(\mathbf{S}') \subseteq Z_\kappa(\mathbf{S})$ from \refEq{eq:pdtransfer-matrix} and \refEq{eq:product-subset}. We obtain $Z_\kappa(\mathbf{S}) \not\subseteq Z_\kappa(\mathbf{S}')$ since the inverse of the transfer matrix $\lambda$ is not a valid stochastic matrix for $q\in(0,0.5]$. 
		It follows from Notation \ref{not:pop-classes} and Definition \ref{def:pop-class-lattice} that $\langle\mathbf{S}'\rangle \sqsubset \langle\mathbf{S}\rangle$.
	\end{proof}
	\item \textbf{Bottom element} (Property \ref{prop:ineq-5}):
	\begin{Lemma}
		The equivalence class of the bottom element $\langle\bot_{\mathbf{S}}\rangle$ contains all uniform distributions.
		\label{lem:prop-5-helper}
	\end{Lemma}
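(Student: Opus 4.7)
The plan is to unpack what it means for a population to be uniform and what $\bot_{\mathbf{S}}$ looks like after the normalization $\kappa$, and then show that the two population matrices generate the same zonogon. Since $\bot_{\mathbf{S}} = \{1\}$, we have $|\bot_{\mathbf{S}}|=1$, $\overline{\bot_{\mathbf{S}}} = 1$, and $\kappa(\bot_{\mathbf{S}}) = \left[\begin{smallmatrix}1\\1\end{smallmatrix}\right]$; the zonogon $Z_\kappa(\bot_{\mathbf{S}})$ is therefore the degenerate line segment from $(0,0)$ to $(1,1)$, i.e. the main diagonal.

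Next I would take an arbitrary uniform population $\mathbf{S}' = \{c, c, \dots, c\}$ with $|\mathbf{S}'| = m$ and common value $c$; by the standing assumption that at least one indicator is strictly positive, $c>0$ and hence $\overline{\mathbf{S}'} = c$. Plugging into Definition \ref{def:population-to-matrix} gives $\kappa(\mathbf{S}') = \frac{1}{m}\left[\begin{smallmatrix}1 & \cdots & 1\\ 1 & \cdots & 1\end{smallmatrix}\right]$, so every generating vector has slope one and the total column sum is $\left[\begin{smallmatrix}1\\1\end{smallmatrix}\right]$.

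Now I invoke the two basic zonogon invariances listed just below Definition \ref{def:zonogon}: invariance under permuting columns and, crucially, invariance under splitting/merging columns of identical slope. Iterating the merge rule collapses the $m$ columns of $\kappa(\mathbf{S}')$ into the single column $\left[\begin{smallmatrix}1\\1\end{smallmatrix}\right]$, matching $\kappa(\bot_{\mathbf{S}})$. Hence $Z_\kappa(\mathbf{S}') = Z_\kappa(\bot_{\mathbf{S}})$, and by Definition \ref{def:pop-equivalence} together with Notation \ref{not:pop-classes} we conclude $\langle\mathbf{S}'\rangle = \langle\bot_{\mathbf{S}}\rangle$, so $\mathbf{S}' \in \langle\bot_{\mathbf{S}}\rangle$.

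There is no real obstacle here; the only point that needs a moment's care is to note that ``uniform distribution'' under Assumption \ref{prop:ineq-5} forces $c>0$, so the division by $\overline{\mathbf{S}'}$ in the definition of $\kappa$ is legitimate and all generating columns really do have slope one. Alternatively, one can avoid the explicit merging argument by appealing to Lemma \ref{lem:prop-3-helper} (scale invariance) to reduce $\mathbf{S}'$ to $\{1,1,\dots,1\}$, and then to Lemma \ref{lem:size-inv-perm} (duplication invariance, applied in reverse) to collapse this to $\{1\} = \bot_{\mathbf{S}}$; both routes give the same conclusion.
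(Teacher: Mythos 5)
Your proof is correct and takes essentially the same approach as the paper: both reduce to the observation that every column of $\kappa(\mathbf{S}')$ for a uniform population is $\tfrac{1}{m}\left[\begin{smallmatrix}1\\1\end{smallmatrix}\right]$, so the zonogon collapses to the diagonal segment $Z_\kappa(\bot_{\mathbf{S}})$. The only cosmetic difference is that you invoke the merge-columns-of-equal-slope invariance directly, whereas the paper exhibits explicit row-stochastic factorizations in both directions and concludes by antisymmetry of the order; your remark that the standing assumption forces $c>0$ is a correct and welcome detail.
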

	\begin{proof}
		The equivalence class of the bottom element is a predecessor for all other populations, as shown in \refEq{eq:bot-inerior-all}:
		\begin{subequations}
			\begin{align}
				\kappa(\bot_{\mathbf{S}}) = \begin{bmatrix}
					1 \\ 1
				\end{bmatrix}&= \kappa(\mathbf{S}) \begin{bmatrix} 1 \\ \vdots \\ 1 \end{bmatrix}& \text{(by Definition \ref{def:pop-class-lattice})} \\
				\Longrightarrow \langle\bot_{\mathbf{S}}\rangle &\sqsubseteq  \langle\mathbf{S}\rangle & \text{(by \refEq{eq:zonogon-order} and Definition \ref{def:pop-class-lattice})} 
			\end{align}
			\label{eq:bot-inerior-all}
		\end{subequations}
		Let $\mathbf{S}$ be an arbitrary uniform distribution, then its equivalence class is also a predecessor to the bottom element, as shown in \refEq{eq:bot-inerior-all-2}.
		\begin{subequations}
			\begin{align}
				\kappa(\mathbf{S}) &= \begin{bmatrix} 1 \\ 1 \end{bmatrix} \begin{bmatrix}
					\sfrac{1}{|\mathbf{S}|} & \dots & \sfrac{1}{|\mathbf{S}|}
				\end{bmatrix}& \text{(by $\mathbf{S}$ being uniform)}\\
				&= \kappa(\bot_{\mathbf{S}}) \begin{bmatrix}
					\sfrac{1}{|\mathbf{S}|} & \dots & \sfrac{1}{|\mathbf{S}|}
				\end{bmatrix}& \text{(by Definition \ref{def:pop-class-lattice})} \\
				\Longrightarrow \langle\mathbf{S}\rangle &\sqsubseteq \langle\bot_{\mathbf{S}}\rangle & \text{(by \refEq{eq:zonogon-order} and Definition \ref{def:pop-class-lattice})} 
			\end{align}
			\label{eq:bot-inerior-all-2}
		\end{subequations}
		\refEq{eq:bot-inerior-all} and \refEq{eq:bot-inerior-all-2} imply $\langle\bot_{\mathbf{S}}\rangle = \langle{\mathbf{S}}\rangle$.
	\end{proof}
\end{enumerate}

\subsection{Proofs for Section \ref{subsubsec:order-prop-relation}}
\label{subap:proofs-order-prop}

\textbf{Lemma \ref{thm:weak-star-implies-1-5}}: \textit{Satisfying Property \ref{prop:weak-propertyStar} implies that the inequality measure satisfies the weak Property \ref{prop:ineq-1}-\ref{prop:ineq-5}.}
\begin{proof}\hfil
	\begin{itemize}
		\item \textbf{Property \ref{prop:ineq-1}}: Lemma \ref{lem:label-inv-perm} states that relabeling individuals/groups does not affect the equivalence class, and \refEq{eq:mainain-weak-zonogon-order-1} ensures that all populations within an equivalence class obtain the same inequality index. This ensures Property \ref{prop:ineq-1}.
		\item \textbf{Property \ref{prop:ineq-2}}: Lemma \ref{lem:size-inv-perm} states that population duplication does not affect the equivalence class, and \refEq{eq:mainain-weak-zonogon-order-1} ensures that all populations within an equivalence class obtain the same inequality index. This ensures Property \ref{prop:ineq-2}.
		\item \textbf{Property \ref{prop:ineq-3}}: Lemma \ref{lem:prop-3-helper} states that scaling the indicator variable does not affect the equivalence class, and \refEq{eq:mainain-weak-zonogon-order-1} ensures that all populations within an equivalence class obtain the same inequality index. This ensures Property \ref{prop:ineq-3}.
		\item \textbf{weak Property \ref{prop:ineq-4}}: Lemma \ref{lem:transfer-inferior} states that the equivalence class after a Pigou-Dalton transfer is a predecessor of the original population. \refEq{eq:mainain-weak-zonogon-order-1} ensures that predecessors obtain an inequality index that is less or equal. This ensures the \textit{weak} Property \ref{prop:ineq-4}.
		\item \textbf{Property \ref{prop:ineq-5}}: Lemma \ref{lem:prop-5-helper} states that all uniform distributions are in the equivalence class of the bottom element, and \refEq{eq:mainain-weak-zonogon-order} ensures that all populations within this equivalence class obtain the inequality index zero. The non-negativity is then obtained from \refEq{eq:mainain-weak-zonogon-order-1} since all populations are successors or equivalent to the bottom element. This ensures Property \ref{prop:ineq-5}. 
	\end{itemize}
\end{proof}

\textbf{Lemma \ref{thm:strict-star-implies-1-5}}: \textit{Satisfying Property \ref{prop:strict-propertyStar} implies that the inequality measure satisfies the strict Property \ref{prop:ineq-1}-\ref{prop:ineq-5}.}
\begin{proof}\hfil
	\begin{itemize}
		\item \textbf{Property \ref{prop:ineq-1}}: Lemma \ref{lem:label-inv-perm} states that relabeling individuals/groups does not affect the equivalence class, and \refEq{eq:mainain-strict-zonogon-order-1} ensures that all populations within an equivalence class obtain the same inequality index. This ensures Property \ref{prop:ineq-1}.
		\item \textbf{Property \ref{prop:ineq-2}}: Lemma \ref{lem:size-inv-perm} states that population duplication does not affect the equivalence class, and \refEq{eq:mainain-strict-zonogon-order-1} ensures that all populations within an equivalence class obtain the same inequality index. This ensures Property \ref{prop:ineq-2}.
		\item \textbf{Property \ref{prop:ineq-3}}: Lemma \ref{lem:prop-3-helper} states that scaling the indicator variable does not affect the equivalence class, and \refEq{eq:mainain-strict-zonogon-order-1} ensures that all populations within an equivalence class obtain the same inequality index. This ensures Property \ref{prop:ineq-3}.
		\item \textbf{strict Property \ref{prop:ineq-4}}: Lemma \ref{lem:transfer-inferior} states that the equivalence class after a Pigou-Dalton transfer is a predecessor of the original population. \refEq{eq:mainain-strict-zonogon-order-2} ensures that predecessors obtain a smaller inequality index. This ensures the \textit{strict} Property \ref{prop:ineq-4}.
		\item \textbf{Property \ref{prop:ineq-5}}: Lemma \ref{lem:prop-5-helper} states that all uniform distributions are in the equivalence class of the bottom element, and \refEq{eq:mainain-strict-zonogon-order} ensures that all populations within this equivalence class obtain the inequality index zero. The non-negativity is then obtained from \refEq{eq:mainain-strict-zonogon-order-2} since all populations are successors or equivalent to the bottom element. This ensures Property \ref{prop:ineq-5}. 
	\end{itemize}
\end{proof}

	\section[\appendixname~\thesection]{Properties and special cases of f-inequality}
	\label{ap:special-cases}
%
\subsection{Properties of f-inequality}
\label{subap:f-ineq-properties}

\textbf{Theorem \ref{thm:rf-props}}: \textit{
For a constant $0 \leq p \leq 1$:
\begin{enumerate}
	\item the function $r_{f,p}(\vec{v})$:
	\begin{enumerate}
		\item \makebox[7.2cm][l]{quantifies any vector of slope one to zero:} $r_{f,p}\left(\left[\begin{smallmatrix}\ell\\\ell\end{smallmatrix}\right]\right)=0$
		\item \makebox[7.2cm][l]{quantifies the zero vector to zero:} $r_{f,p}\left(\left[\begin{smallmatrix}0\\0\end{smallmatrix}\right]\right)=0$
		\item \makebox[7.2cm][l]{scales linearly in $\vec{v}$ where $\ell\in\mathbb{R}$:} $r_{f,p}(\ell\vec{v})=\ell r_{f,p}(\vec{v})$
		\item {is convex in $\vec{v}$:}\begin{itemize}
			\item \makebox[6.5cm][l]{$f$-inequality $\ell\in\{0,1\}$:} $r_{f,p}(\ell\vec{v}_1+(1-\ell)\vec{v}_2) = \ell r_{f,p}(\vec{v}_1)+(1-\ell)r_{f,p}(\vec{v}_2)$
			\item\makebox[6.5cm][l]{weak $f$-inequality $\ell\in(0,1)$:} $r_{f,p}(\ell\vec{v}_1+(1-\ell)\vec{v}_2) \leq \ell r_{f,p}(\vec{v}_1)+(1-\ell)r_{f,p}(\vec{v}_2)$
			\item \makebox[6.5cm][l]{strict $f$-inequality $\ell\in(0,1)$:} $r_{f,p}(\ell\vec{v}_1+(1-\ell)\vec{v}_2) < \ell r_{f,p}(\vec{v}_1)+(1-\ell)r_{f,p}(\vec{v}_2)$
		\end{itemize} 
		\item {satisfies a triangle inequality in $\vec{v}$:}\begin{itemize}
			\item \makebox[6.5cm][l]{$f$-inequality $\textnormal{Slope}(\vec{v}_1) = \textnormal{Slope}(\vec{v}_2)$:} $r_{f,p}(\vec{v}_1+\vec{v}_2) = r_{f,p}(\vec{v}_1)+r_{f,p}(\vec{v}_2)$
			\item\makebox[6.5cm][l]{weak $f$-inequality $\textnormal{Slope}(\vec{v}_1) \neq \textnormal{Slope}(\vec{v}_2)$:} $r_{f,p}(\vec{v}_1+\vec{v}_2) \leq r_{f,p}(\vec{v}_1)+r_{f,p}(\vec{v}_2)$
			\item \makebox[6.5cm][l]{strict $f$-inequality $\textnormal{Slope}(\vec{v}_1) \neq \textnormal{Slope}(\vec{v}_2)$:} $r_{f,p}(\vec{v}_1+\vec{v}_2) < r_{f,p}(\vec{v}_1)+r_{f,p}(\vec{v}_2)$
		\end{itemize}  
	\end{enumerate}
	\item the function $I_{f,p}(\mathbf{S})$:
	\begin{enumerate}
		\item \makebox[6cm][l]{quantifies the bottom element to zero:} $I_{f,p}(\bot_\mathbf{S})=0$
		\item \makebox[6cm][l]{maintains the zonogon order:} 
		\begin{itemize}
			\item \makebox[5.2cm][l]{$f$-inequality:} $\langle\mathbf{S}_1\rangle = \langle\mathbf{S}_2\rangle \Longrightarrow I_{f,p}(\mathbf{S}_1) = I_{f,p}(\mathbf{S}_2)$
			\item\makebox[5.2cm][l]{weak $f$-inequality:} $\langle\mathbf{S}_1\rangle \sqsubseteq \langle\mathbf{S}_2\rangle \Longrightarrow I_{f,p}(\mathbf{S}_1) \leq I_{f,p}(\mathbf{S}_2)$
			\item \makebox[5.2cm][l]{strict $f$-inequality:} $\langle\mathbf{S}_1\rangle \sqsubset \langle\mathbf{S}_2\rangle \Longrightarrow I_{f,p}(\mathbf{S}_1) < I_{f,p}(\mathbf{S}_2)$
		\end{itemize} 
	\end{enumerate}
\end{enumerate}}
\begin{proof}\hfill
	\begin{enumerate}
		\item Properties of $r_{f,p}(\vec{v})$:
		\begin{enumerate}
			\item Non-zero vectors of slope one $\vec{v}=\left[\begin{smallmatrix}\ell\\\ell
			\end{smallmatrix}\right]$:
			\begin{subequations}
				\begin{align}
					r_{f,p}\left(\left[\begin{smallmatrix}\ell\\\ell
					\end{smallmatrix}\right]\right) &= \left(p\ell+(1-p)\ell\right) \cdot f\left(\tfrac{\ell}{p\ell+(1-p)\ell}\right)= \ell\cdot f\left(\tfrac{\ell}{\ell}\right) & \text{(by Definition \ref{def:generalized-ineq})}\label{eq:sub-p1a}\\
					&= \ell\cdot f\left(1\right) = 0& \text{(by Notation \ref{not:f-div})}
				\end{align}
			\end{subequations}
			\item The zero vector $\vec{v}=\left[\begin{smallmatrix}0\\0
			\end{smallmatrix}\right]$:
			\begin{subequations}
			\begin{align}
				r_{f,p}\left(\left[\begin{smallmatrix}0\\0
				\end{smallmatrix}\right]\right) &= 0\cdot f\left(\tfrac{0}{0}\right)& \text{(by \refEq{eq:sub-p1a})}\\
				&= 0& \text{(by Notation \ref{not:f-div})}
			\end{align}
			\end{subequations}
			\item Linear scaling of vectors:
			\begin{subequations}
				\begin{align}
					 r_{f,p}\left(\ell \vec{v}\right) = r_{f,p}\left(\ell\left[\begin{smallmatrix}x\\y
					\end{smallmatrix}\right]\right) &= \left(p\ell x+(1-p)\ell y\right) \cdot f\left(\tfrac{\ell x}{p\ell x+(1-p)\ell y}\right)& \text{(by Definition \ref{def:generalized-ineq})}\\
					&= \ell\left(p x+(1-p) y\right) \cdot f\left(\tfrac{x}{px+(1-p)y}\right) \\
					&= \ell r_{f,p}\left(\left[\begin{smallmatrix}x\\y
					\end{smallmatrix}\right]\right) = \ell r_{f,p}\left(\vec{v}\right)
				\end{align}
			\end{subequations}
			\item Convexity in $\vec{v}$:
			\begin{itemize}
				\item Assume $\ell=\{0,1\}$, then \refEq{eq:sub-p1d} simplifies into a simple identity:
				\begin{equation}
					r_{f,p}(\ell\vec{v}_1+(1-\ell)\vec{v}_2) = \ell r_{f,p}(\vec{v}_1)+(1-\ell)r_{f,p}(\vec{v}_2) \label{eq:sub-p1d}
				\end{equation}
				\item Assume $\ell\in(0,1)$:  We use the following definitions as abbreviation:
				\begin{equation*}
					\begin{matrix}
						\vec{v}_1 \definedAs \left[\begin{smallmatrix}
							x_1\\y_1
						\end{smallmatrix}\right] \quad&\quad a_1 \definedAs x_1 p + y_1 (1-p) \quad&\quad b_1 \definedAs \tfrac{\ell a_1}{\ell a_1 + (1-\ell)a_2}\\ 
						\vec{v}_2 \definedAs \left[\begin{smallmatrix}
						x_2\\y_2
						\end{smallmatrix}\right] \quad&\quad a_2 \definedAs x_2 p + y_2 (1-p) \quad&\quad b_2 \definedAs \tfrac{(1-\ell) a_2}{\ell a_1 + (1-\ell)a_2}\\ 
					\end{matrix}
				\end{equation*}
				The cases of $a_1=0$ and $a_2=0$ are covered by the convention $0f(\frac{0}{0})=0$ (Notation \ref{not:f-div}). Therefore, we can assume they are non-zero and utilize the following two relations: $0 < b_1 < 1$ and $b_2 = 1 - b_1$. If $f$ is strictly convex, let $(\sim) = (<)$, otherwise let $(\sim) = (\leq)$.
				\begin{subequations}
					\begin{align}
						r_{f,p}(\ell\vec{v}_1+(1-\ell)\vec{v}_2) &= (\ell a_1 + (1-\ell)a_2)\cdot f\left(\frac{\ell x_1 + (1-\ell)x_2}{\ell a_1 + (1-\ell)a_2}\right)& \text{(by Definition \ref{def:generalized-ineq})}\\
						&= (\ell a_1 + (1-\ell)a_2)\cdot f\left(b_1\frac{x_1}{a_1} + b_2\frac{x_2}{a_2}\right)\\
						&\sim (\ell a_1 + (1-\ell)a_2)\cdot\left( b_1 f\left(\frac{x_1}{a_1}\right) + b_2 f\left(\frac{x_2}{a_2}\right)\right)& \text{(by convexity of $f$)}\\
						&= \ell a_1 f\left(\frac{x_1}{a_1}\right) + (1-\ell)a_2 f\left(\frac{x_2}{a_2}\right)\\
						&= \ell r_{f,p}(\vec{v}_1)+(1-\ell)r_{f,p}(\vec{v}_2)
					\end{align}
				\end{subequations}
			\end{itemize}
			\item Triangle inequality in $\vec{v}$:
			\begin{itemize}
				\item Assume $\textnormal{Slope}(\vec{v}_1) = \textnormal{Slope}(\vec{v}_2)$: Then there exists an $\ell\in\mathbb{R}$ such that $\ell\vec{v}_1=\vec{v}_2$.
				\begin{subequations}
					\begin{align}
					r_{f,p}(\vec{v}_1+\vec{v}_2) = r_{f,p}(\vec{v}_1+\ell\vec{v}_1) &= r_{f,p}((1+\ell)\vec{v}_1)\\
					 &= (1+\ell) r_{f,p}(\vec{v}_1) &\text{(by Theorem \ref{thm:rf-props} nr. 1.c)} \\
					 &= r_{f,p}(\vec{v}_1)+\ell r_{f,p}(\vec{v}_1) \\
					 &= r_{f,p}(\vec{v}_1)+ r_{f,p}(\ell\vec{v}_1)&\text{(by Theorem \ref{thm:rf-props} nr. 1.c)} \\
					 &= r_{f,p}(\vec{v}_1)+r_{f,p}(\vec{v}_2)
					\end{align}
				\end{subequations}
				\item Assume $\textnormal{Slope}(\vec{v}_1) \neq \textnormal{Slope}(\vec{v}_2)$:\newline If $f$ is strictly convex, let $(\sim) = (<)$, otherwise let $(\sim) = (\leq)$.
				\begin{subequations}
					\begin{align}
						r_{f,p}(\ell\vec{v}_1+(1-\ell)\vec{v}_2) &~\sim~ \ell r_{f,p}(\vec{v}_1)+(1-\ell)r_{f,p}(\vec{v}_2)&\text{(by Theorem \ref{thm:rf-props} nr. 1.d)} \\
						r_{f,p}(0.5(\vec{v}_1+\vec{v}_2)) &~\sim~ 0.5 r_{f,p}(\vec{v}_1)+0.5 r_{f,p}(\vec{v}_2)&\text{(let $\ell=0.5$)} \\
						0.5 r_{f,p}(\vec{v}_1+\vec{v}_2) &~\sim~ 0.5 r_{f,p}(\vec{v}_1)+0.5 r_{f,p}(\vec{v}_2)&\text{(by Theorem \ref{thm:rf-props} nr. 1.c)} \\
						r_{f,p}(\vec{v}_1+\vec{v}_2) &~\sim~ r_{f,p}(\vec{v}_1)+ r_{f,p}(\vec{v}_2)						
					\end{align}
				\end{subequations}
			\end{itemize}
		\end{enumerate}
		\item Properties of $I_{f,p}(\mathbf{S})$:
		\begin{enumerate}
			\item Bottom element $\bot_\mathbf{S}=\{1\}$:
			\begin{equation}
				I_{f,p}(\bot_\mathbf{S}) = \sum_{\vec{v}\in\kappa(\{1\})}r_{f,p}(\vec{v})
				 = r_{f,p}(\left[\begin{smallmatrix}1\\1\end{smallmatrix}\right]) = 0 \qquad\qquad\text{(by Theorem \ref{thm:rf-props} nr. 1.a)}
			\end{equation}
			\item Zonogon order:
			\begin{itemize}
			\item Assume $Z_\kappa(\mathbf{S}_1) = Z_\kappa(\mathbf{S}_2)$, which equals $\langle\mathbf{S}_1\rangle = \langle\mathbf{S}_2\rangle$ by Definition \ref{def:pop-equivalence} and Notation \ref{not:pop-classes}: In this case, both zonogons have the same boundary. Since the boundary consists of the generating vectors sorted by slope, the generating vectors of identical slope have the same sum:
			\begin{subequations}
				\begin{align}
					\forall x\in\mathbb{R}:&&\sum_{\stackrel{\vec{v}\in\kappa(\mathbf{S}_1)}{\textnormal{Slope}(\vec{v})=x}} \vec{v} &=\sum_{\stackrel{\vec{v}\in\kappa(\mathbf{S}_2)}{\textnormal{Slope}(\vec{v})=x}} \vec{v} &\text{(by $Z_\kappa(\mathbf{S}_1) = Z_\kappa(\mathbf{S}_2)$)}\\
					\forall x\in\mathbb{R}:&&r_{f,p}\left(\sum_{\stackrel{\vec{v}\in\kappa(\mathbf{S}_1)}{\textnormal{Slope}(\vec{v})=x}} \vec{v}\right) &=r_{f,p}\left(\sum_{\stackrel{\vec{v}\in\kappa(\mathbf{S}_2)}{\textnormal{Slope}(\vec{v})=x}} \vec{v}\right)\\
					\forall x\in\mathbb{R}:&&\sum_{\stackrel{\vec{v}\in\kappa(\mathbf{S}_1)}{\textnormal{Slope}(\vec{v})=x}} r_{f,p}\left(\vec{v}\right) &=\sum_{\stackrel{\vec{v}\in\kappa(\mathbf{S}_2)}{\textnormal{Slope}(\vec{v})=x}} r_{f,p}\left(\vec{v}\right) &\text{(by Theorem \ref{thm:rf-props} nr. 1.e)}\\
					&&\sum_{x\in\mathbb{R}} \sum_{\stackrel{\vec{v}\in\kappa(\mathbf{S}_1)}{\textnormal{Slope}(\vec{v})=x}} r_{f,p}\left(\vec{v}\right)&=\sum_{x\in\mathbb{R}}\sum_{\stackrel{\vec{v}\in\kappa(\mathbf{S}_2)}{\textnormal{Slope}(\vec{v})=x}} r_{f,p}\left(\vec{v}\right)\\
					&& \sum_{{\vec{v}\in\kappa(\mathbf{S}_1)}} r_{f,p}\left(\vec{v}\right)&=\sum_{{\vec{v}\in\kappa(\mathbf{S}_2)}} r_{f,p}\left(\vec{v}\right)\\
					&&I_{f,p}(\mathbf{S}_1) &= I_{f,p}(\mathbf{S}_2)
				\end{align}
			\end{subequations}
			\item Assume $Z_\kappa(\mathbf{S}_1) \subset Z_\kappa(\mathbf{S}_2)$, which equals $\langle\mathbf{S}_1\rangle \sqsubset \langle\mathbf{S}_2\rangle$ by Definition \ref{def:pop-equivalence} and Definition \ref{def:pop-class-lattice}: In this case, there exists a stochastic matrix $\lambda$ which combines some vectors from $\kappa_2$ with different slope:
			\begin{equation}
				\kappa_1 \definedAs \kappa(\mathbf{S}_1) \qquad \qquad \kappa_2 \definedAs \kappa(\mathbf{S}_2) \qquad \qquad \kappa_1 = \kappa_2 \lambda
			\end{equation}
			Let $\kappa_1$ be a $2\times a$ stochastic matrix, $\kappa_2$ be a $2\times b$ stochastic matrix and $\lambda$ be a $b\times a$ stochastic matrix. We write $\kappa_2[:,i]$ to refer to the $i^\text{th}$ column of matrix $\kappa_2$ and write  $\lambda[i,j]$ for the element at row $i\in\{1,..,b\}$ and column $j\in\{1,..,a\}$. Since $\lambda$ is a stochastic matrix, its rows sum to one $\forall i\in\{1,..,b\}~:~\sum_{j=1}^a\lambda[i,j] = 1$. 
			If $f$ is strictly convex, let $(\sim) = (<)$, otherwise let $(\sim) = (\leq)$.
			\begin{subequations}
				\begin{align}
					I_{f,p}(\mathbf{S}_1) &= \sum_{j=1}^a r_{f,p}(\kappa_1[:,j]) &\text{(by Definition \ref{def:generalized-ineq})}\\
					&= \sum_{j=1}^a r_{f,p}(\sum_{i=1}^{b}\kappa_2[:,i]\lambda[i,j]) &\text{(by $\kappa_1 = \kappa_2 \lambda$)}\\
					&\sim \sum_{j=1}^a \sum_{i=1}^b r_{f,p}(\kappa_2[:,i]\lambda[i,j]) &\text{(by Theorem \ref{thm:rf-props} nr. 1.e)}\\
					&= \sum_{j=1}^a \sum_{i=1}^b \lambda[i,j] r_{f,p}(\kappa_2[:,i])\qquad &\text{(by Theorem \ref{thm:rf-props} nr. 1.c)}\\
					&= \sum_{i=1}^b r_{f,p}(\kappa_2[:,i])\qquad & \text{(by $\sum_{j=1}^a \lambda[i,j] = 1$)}\\
					&= I_{f,p}(\mathbf{S}_2)&\text{(by Definition \ref{def:generalized-ineq})}
				\end{align}
			\end{subequations}
			\end{itemize}
		\end{enumerate}
	\end{enumerate}
\end{proof}

\subsection{Additivity of f-inequality}
\label{subap:f-ineq-additivity}
\textbf{Proof of Lemma \ref{lem:minkowski-sum-subsets} from Section \ref{subsec:generalized-ineq-measure}:}\newline
\textit{Consider two non-empty sets of populations with equal cardinality ($|\mathbf{A}| = |\mathbf{B}|$), then:
	\begin{subequations}
		\begin{align}
			\text{$f$-inequality:}&&\sum_{\mathbf{S}\in \mathbf{A}}Z_\kappa(\mathbf{S}) = \sum_{\mathbf{S}\in \mathbf{B}}Z_\kappa(\mathbf{S}) &\Longrightarrow \sum_{\mathbf{S}\in \mathbf{A}}I_{f,p}(\mathbf{S}) = \sum_{\mathbf{S}\in \mathbf{B}}I_{f,p}(\mathbf{S})\\
			\text{weak $f$-inequality:}&&\sum_{\mathbf{S}\in \mathbf{A}}Z_\kappa(\mathbf{S}) \subseteq \sum_{\mathbf{S}\in \mathbf{B}}Z_\kappa(\mathbf{S}) &\Longrightarrow \sum_{\mathbf{S}\in \mathbf{A}}I_{f,p}(\mathbf{S}) \leq \sum_{\mathbf{S}\in \mathbf{B}}I_{f,p}(\mathbf{S})\\
			\text{strict $f$-inequality:}&&\sum_{\mathbf{S}\in \mathbf{A}}Z_\kappa(\mathbf{S}) \subset \sum_{\mathbf{S}\in \mathbf{B}}Z_\kappa(\mathbf{S}) &\Longrightarrow \sum_{\mathbf{S}\in \mathbf{A}}I_{f,p}(\mathbf{S}) < \sum_{\mathbf{S}\in \mathbf{B}}I_{f,p}(\mathbf{S})
		\end{align}
	\end{subequations}}
\begin{proof}
	Let $m = |\mathbf{A}| = |\mathbf{B}|$ and $(\sim,\approx)\in\{(=,=),(\subseteq,\leq),(\subset,<)\}$. We use the notation $\mathbf{A}[i]$ and $\mathbf{B}[i]$ with $1\leq i \leq m$ to indicate a specific population within the set $\mathbf{A}$ and $\mathbf{B}$ respectively.
	\begin{equation*}
		\begingroup
		\allowdisplaybreaks
		\begin{aligned}
			\sum_{i=1}^m Z_\kappa(\mathbf{A}[i]) &\sim \sum_{i=1}^m Z_\kappa(\mathbf{B}[i])\\
			Z\left(\begin{bmatrix} \kappa(\mathbf{A}[1]) &\dots&\kappa(\mathbf{A}[m]) \end{bmatrix}\right) &\sim Z\left(\begin{bmatrix}
				\kappa(\mathbf{B}[1]) &\dots&\kappa(\mathbf{B}[m])
			\end{bmatrix}\right) & & \text{(by Definition \ref{def:zono-sum})}\\
			Z\left(\frac{1}{m}\begin{bmatrix} \kappa(\mathbf{A}[1]) &\dots&\kappa(\mathbf{A}[m]) \end{bmatrix}\right) &\sim Z\left(\frac{1}{m}\begin{bmatrix}
				\kappa(\mathbf{B}[1]) &\dots&\kappa(\mathbf{B}[m])
			\end{bmatrix}\right) & & \text{(scale zonogon to (1,1))}\\
			\sum_{\vec{v}\in\frac{1}{m}\left[\begin{smallmatrix} \kappa(\mathbf{A}[1]) &\dots&\kappa(\mathbf{A}[m]) \end{smallmatrix}\right]} r_{f,p}(\vec{v}) &\approx 
			\sum_{\vec{v}\in\frac{1}{m}\left[\begin{smallmatrix} \kappa(\mathbf{B}[1]) &\dots&\kappa(\mathbf{B}[m]) \end{smallmatrix}\right]} r_{f,p}(\vec{v}) & & \text{(by Def. \ref{def:pop-class-lattice}, Thm. \ref{thm:rf-props} nr. 2.b, Def. \ref{def:generalized-ineq})}\\
			\sum_{\vec{v}\in\left[\begin{smallmatrix} \kappa(\mathbf{A}[1]) &\dots&\kappa(\mathbf{A}[m]) \end{smallmatrix}\right]} \frac{1}{m}r_{f,p}(\vec{v}) &\approx 
			\sum_{\vec{v}\in\left[\begin{smallmatrix} \kappa(\mathbf{B}[1]) &\dots&\kappa(\mathbf{B}[m]) \end{smallmatrix}\right]} \frac{1}{m}r_{f,p}(\vec{v}) & & \text{(by Theorem \ref{thm:rf-props} nr. 1.c)}\\
			\sum_{i=1}^{m}~\ \sum_{\vec{v}\in\kappa(\mathbf{A}[i])} r_{f,p}(\vec{v}) &\approx 
			\sum_{i=1}^{m}~\ \sum_{\vec{v}\in\kappa(\mathbf{B}[i])}r_{f,p}(\vec{v}) & & \text{(multiply $m$, split sum)}\\
			\sum_{i=1}^{m}~\ I_{f,p}(\mathbf{A}[i]) &\approx 
			\sum_{i=1}^{m}~\ I_{f,p}(\mathbf{B}[i]) & & \text{(by Definition \ref{def:generalized-ineq})}\\
			\sum_{\mathbf{S}\in\mathbf{A}} I_{f,p}(\mathbf{S}) &\approx 
			\sum_{\mathbf{S}\in\mathbf{B}} I_{f,p}(\mathbf{S}) & &\text{(change notation)} \\
		\end{aligned}
		\endgroup
	\end{equation*}
\end{proof}

\subsection{Special cases of f-inequality}
\label{subap:f-ineq-cases}
\textbf{Proof of Theorem \ref{thm:special-cases} from Section \ref{subsec:generalized-ineq-measure}:}\newline
\textit{The Pietra index and Generalized Entropy index are special cases of $f$-inequality:
	\begin{subequations}
		\begin{align}
			R(\mathbf{S}) &= I_{f,p}(\mathbf{S}) &\text{where:~} & p=0~\text{ and }~ f(t) = \frac{|t-1|}{2}\\
			\text{GE}_{c}(\mathbf{S}) &= I_{f,p}(\mathbf{S}) &\text{where:~} & p=0~\text{ and }~f(t) = \frac{t^{1-c} - t}{c(c - 1)}\\
			\text{GE}_1(\mathbf{S}) &= I_{f,p}(\mathbf{S}) &\text{where:~} & p=0~\text{ and }~f(t) = -\ln\left(t\right)\\
			\text{GE}_0(\mathbf{S}) &= I_{f,p}(\mathbf{S}) &\text{where:~} & p=0~\text{ and }~f(t) =  t\ln\left(t\right)
		\end{align}
	\end{subequations}}
\begin{proof}
	We can simplify the generalized inequality function for $p=0$ as shown in \refEq{eq:if0}:
	\begin{equation}
		I_{f,0}\left(\mathbf{S}\right) = \frac{1}{|\mathbf{S}|}\sum_{s\in\mathbf{S}}\frac{s}{\overline{\mathbf{S}}}\cdot f\left(\frac{\overline{\mathbf{S}}}{s}\right)
		\label{eq:if0}
	\end{equation}
	\begin{itemize}
		\item Pietra index:
		\begin{equation}
			\begin{aligned}
				I_{f,0}(\mathbf{S}) &= \frac{1}{|\mathbf{S}|}\sum_{s\in\mathbf{S}}\frac{s}{\overline{\mathbf{S}}}\cdot f\left(\frac{\overline{\mathbf{S}}}{s}\right)\\
				&= \frac{1}{|\mathbf{S}|}\sum_{s\in\mathbf{S}}\frac{s}{\overline{\mathbf{S}}}\cdot \frac{|\frac{\overline{\mathbf{S}}}{s}-1|}{2}&\text{using:}~&  f(t) = \frac{|t-1|}{2}\\
				&= \frac{1}{2|\mathbf{S}|}\sum_{s\in\mathbf{S}} \frac{|\overline{\mathbf{S}}-s|}{\overline{\mathbf{S}}}&\text{using:}~& s\geq0\\
				&= \frac{1}{2|\mathbf{S}|}\sum_{s\in\mathbf{S}} \frac{|s - \overline{\mathbf{S}}|}{\overline{\mathbf{S}}}&\text{using:}~&  |a-b|=|b-a|\\
				&= R(\mathbf{S})&\text{(by }& \text{\refEq{eq:pietra})}
			\end{aligned}
		\end{equation}
		The function $f(t) = \frac{|t-1|}{2}$ is a well known generator function for an $f$-divergences from the total variation distance.
		\item $\text{GE}_{c}(\mathbf{S})$ index with $c\notin\{0,1\}$:
		\begin{equation}
			\begin{aligned}
				I_{f,0}(\mathbf{S}) &= \frac{1}{|\mathbf{S}|}\sum_{s\in\mathbf{S}}\frac{s}{\overline{\mathbf{S}}}\cdot f\left(\frac{\overline{\mathbf{S}}}{s}\right)\\
				&= \frac{1}{|\mathbf{S}|}\sum_{s\in\mathbf{S}}\frac{s}{\overline{\mathbf{S}}}\cdot \frac{\left(\frac{\overline{\mathbf{S}}}{s}\right)^{1-c} - \frac{\overline{\mathbf{S}}}{s}}{c(c - 1)}&\text{using: }&  \frac{t^{1-c} - t}{c(c - 1)} \\
				&= \frac{1}{|\mathbf{S}|}\sum_{s\in\mathbf{S}}\frac{s}{\overline{\mathbf{S}}}\cdot \frac{\frac{\overline{\mathbf{S}}}{s} \cdot\left(\frac{s}{\overline{\mathbf{S}}}\right)^{c} - \frac{\overline{\mathbf{S}}}{s}}{c(c - 1)}\\
				&= \frac{1}{c(c - 1)}\frac{1}{|\mathbf{S}|}\sum_{s\in\mathbf{S}}\left({\left(\frac{s}{\overline{\mathbf{S}}}\right)^{c} - 1}\right)\\
				&=\text{GE}_{c}(\mathbf{S})&\text{(by }& \text{\refEq{eq:gec-1})}
			\end{aligned}
		\end{equation}
		The function $f(t)$ satisfies the requirements for a generator function of an $f$-divergence:
		\begin{enumerate}
			\item $f(1) =\frac{1}{c(c-1)}\left(1-1\right) = 0$.
			\item $f(t)$ is convex for $t>0$ and $c\in\mathbb{R}\setminus\{0,1\}$ since $f''(t) = \left(\frac{1}{t}\right)^{c+1} \geq 0$
			\item $f(t)$ is finite for $t>0$ and $c\in\mathbb{R}\setminus\{0,1\}$.
		\end{enumerate}
		\item $\text{GE}_{1}(\mathbf{S})$ index with $c = 1$ (Theil index):
		\begin{equation}
			\begin{aligned}
				I_{f,0}(\mathbf{S})  &= \frac{1}{|\mathbf{S}|}\sum_{s\in\mathbf{S}}\frac{s}{\overline{\mathbf{S}}}\cdot f\left(\frac{\overline{\mathbf{S}}}{s}\right)\\
				&= \frac{1}{|\mathbf{S}|}\sum_{s\in\mathbf{S}}\frac{s}{\overline{\mathbf{S}}}\cdot \left(-\ln\left(\frac{\overline{\mathbf{S}}}{s}\right)\right)
				&\text{using: }&  f(t) = -\ln\left(t\right)\\
				&= \frac{1}{|\mathbf{S}|}\sum_{s\in\mathbf{S}}\frac{s}{\overline{\mathbf{S}}}\cdot \ln\left(\frac{s}{\overline{\mathbf{S}}}\right)\\
				&=\text{GE}_{1}(\mathbf{S})&\text{(by }& \text{\refEq{eq:gec-1})}
			\end{aligned}
		\end{equation}
		The function $f(t) = -\ln\left(t\right)$ is a well known generator function for an $f$-divergences from the reverse Kullback–Leibler divergence.
		\item $\text{GE}_{0}(\mathbf{S})$ index with $c = 0$:
		\begin{equation}
			\begin{aligned}
				I_{f,0}(\mathbf{S}) &= \frac{1}{|\mathbf{S}|}\sum_{s\in\mathbf{S}}\frac{s}{\overline{\mathbf{S}}}\cdot f\left(\frac{\overline{\mathbf{S}}}{s}\right)\\
				&= \frac{1}{|\mathbf{S}|}\sum_{s\in\mathbf{S}}\frac{s}{\overline{\mathbf{S}}}\cdot \frac{\overline{\mathbf{S}}}{s}\ln\left(\frac{\overline{\mathbf{S}}}{s}\right)
				&\text{using: }&  f(t) = t\ln\left(t\right)\\
				&= -\frac{1}{|\mathbf{S}|}\sum_{s\in\mathbf{S}}\ln\left(\frac{s}{\overline{\mathbf{S}}}\right)\\
				&=\text{GE}_{0}(\mathbf{S})&\text{(by }& \text{\refEq{eq:gec-1})}
			\end{aligned}
		\end{equation}
		The function $f(t) = t\ln\left(t\right)$ is a well known generator function for an $f$-divergences from the Kullback–Leibler divergence.
	\end{itemize}
\end{proof}
	
	\section[\appendixname~\thesection]{Decomposition properties}
	\label{ap:decomposition-properties}
%
\textbf{Proof of Theorem \ref{thm:decom-proofs} from Section \ref{subsec:decomposition-ineq-measure}:}
\textit{Definition \ref{def:def-f-union} satisfies Property \ref{prop:union-1}-\ref{prop:union-4}.}
\begin{proof}\hfill
	\begin{itemize}
		\item \textbf{Property \ref{prop:union-1}} (commutativity): The join operator (convex hull of zonogons) is invariant to the order of zonogons. Therefore, the measure $I_{f,p}^\cup(\cdot)$ of Definition \ref{def:def-f-union} is invariant to the order of attribute sets in an atom: 
		\begin{equation*}
			\begin{aligned}
			\forall\alpha\in\mathcal{A}(n):&&\bigsqcup_{\mathbf{a}\in\alpha} \langle\Gamma(\mathbf{a},\mathbf{M})\rangle &= \bigsqcup_{\mathbf{a}\in\{\sigma(\mathbf{a})~:~\mathbf{a}\in\alpha\}} \langle\Gamma(\mathbf{a},\mathbf{M})\rangle\quad& \text{(commutativity of lattice join)}\\
			&&I^\cup_{f,p}(\alpha,\mathbf{M}) &= I^\cup_{f,p}(\{\sigma(\mathbf{a})~:~\mathbf{a}\in\alpha\},\mathbf{M}) & \text{(by Theorem \ref{thm:rf-props} nr. 2.b)}
			\end{aligned}
		\end{equation*}
		\item \textbf{Property \ref{prop:union-2}} (monotonicity): The join element is monotonically increasing and thus also the measure $I_{f,p}^\cup(\cdot)$ of Definition \ref{def:def-f-union}.
		\begin{equation*}
			\begin{aligned}
				\forall\alpha\in\mathcal{A}(n),~\forall \mathbf{a}\in\mathcal{P}(\{1,..,n\}):\\
				\bigsqcup_{\mathbf{b}\in\alpha} \langle\Gamma(\mathbf{b},\mathbf{M})\rangle &\sqsubseteq  \left(\bigsqcup_{\mathbf{b}\in\alpha} \langle\Gamma(\mathbf{b},\mathbf{M})\rangle\right)\sqcup\langle\Gamma(\mathbf{a},\mathbf{M})\rangle& \text{(monotonicity of lattice join)}\\
				I^\cup_{f,p}(\alpha,\mathbf{M}) &\leq I^\cup_{f,p}(\alpha\cup\{\mathbf{a}\},\mathbf{M})& \text{(by Theorem \ref{thm:rf-props} nr. 2.b)}
			\end{aligned}
		\end{equation*}
		\item \textbf{Property \ref{prop:union-3}} (self-inequality): The join of a single population is an identity such that the measure $I_{f,p}^\cup(\cdot)$ of Definition \ref{def:def-f-union} equals $I_{f,p}(\cdot)$.
		\begin{equation*}
				\forall \mathbf{a}\in\mathcal{P}(\{1,..,n\}):~ I^\cup_{f,p}(\{\mathbf{a}\},\mathbf{M}) = I_{f,p}\left(\bigsqcup_{\mathbf{b}\in\{\mathbf{a}\}}\langle\Gamma(\mathbf{b},\mathbf{M})\rangle\right)= I_{f,p}\left(\Gamma(\mathbf{a},\mathbf{M})\right)\quad \text{(by Definition \ref{def:def-f-union})}
		\end{equation*}
		\item \textbf{Property \ref{prop:union-4}} (non-negativity):
		We begin with the required preliminaries: 
		\begin{itemize}
			\item The join on the union lattice can be expressed as shown in \refEq{eq:joinuion}~\cite{mages2024}, using the function `reduce` of \refEq{eq:imp-partial}.
			\item The function $\textnormal{reduce}(\subset,\cdot)$ does not affect the convex hull of the underlying zonogons, which provides \refEq{eq:joinuion-2} from \refEq{eq:joinuion}.
			\item From \refEq{eq:joinuion-2}, we obtain the monotonicity of the cumulative measure on the lattice shown in \refEq {eq:joinuion-3}.
			\item We notate the set of immediate successors of $\alpha$ as $\alpha^+$ (\refEq{eq:cover}).
			\item The Möbius inverse can be computed using an inclusion-exclusion relation as shown in \refEq{eq:möbineq}~\cite{williams-beer,dualdecomposition,mages2024}.
			\item Since the inclusion-exclusion of a constant is the constant itself, we obtain \refEq{eq:möbineq-new} from \refEq{eq:möbineq}.
			\item Using \refEq{eq:möbineq-new} and Definition \ref{def:union-möbius} (\refEq{subeq:möbius-inv}) provides \refEq{eq:möbineq-2}.
		\end{itemize}
		\begin{subequations}
			\begin{align}
				&&\alpha \curlyvee \beta &= \textnormal{reduce}(\subset,~\alpha\cup\beta) \label{eq:joinuion}\\
				&&\bigsqcup_{\mathbf{c}\in(\alpha \curlyvee \beta)} \langle\Gamma(\mathbf{c},\mathbf{M})\rangle&= \bigsqcup_{\mathbf{c}\in(\alpha \cup \beta)} \langle\Gamma(\mathbf{c},\mathbf{M})\rangle \label{eq:joinuion-2}\\
				&&\alpha \preceq \beta &\Longrightarrow I^\cup_{f,p}(\alpha,\mathbf{M})\leq  I^\cup_{f,p}(\beta,\mathbf{M})\label{eq:joinuion-3}\\
				&&\alpha^+ &\definedAs \{\beta\in\mathcal{A}(n)~:~\alpha\prec\beta~\text{ and }~\neg(\exists\gamma\in\mathcal{A}(n))[\alpha \prec \gamma ~\text{ and }~ \gamma\prec\beta]\}\label{eq:cover}\\
				\alpha \neq \top_\cup:&& \sum_{\beta\in\dot{\uparrow}\alpha} I_{f,p}^\delta(\beta,\mathbf{M}) &=
				\sum_{\emptyset\neq\mathbf{B}\subseteq\alpha^+} (-1)^{|\mathbf{B}|-1} \left(I_{f,p}^\cup(\top_\cup,\mathbf{M}) - I_{f,p}^\cup(\bigcurlyvee_{\beta\in\mathbf{B}}\beta,\mathbf{M})\right) \label{eq:möbineq}\\
				\alpha \neq \top_\cup:&& \sum_{\beta\in\dot{\uparrow}\alpha} I_{f,p}^\delta(\beta,\mathbf{M}) &=I_{f,p}^\cup(\top_\cup,\mathbf{M}) - 
				\sum_{\emptyset\neq\mathbf{B}\subseteq\alpha^+} (-1)^{|\mathbf{B}|-1} I_{f,p}^\cup(\bigcurlyvee_{\beta\in\mathbf{B}}\beta,\mathbf{M}) \label{eq:möbineq-new}\\
				\alpha \neq \top_\cup:&& I_{f,p}^\delta(\alpha,\mathbf{M}) &=- I_{f,p}^\cup(\alpha,\mathbf{M}) + \sum_{\emptyset\neq\mathbf{B}\subseteq\alpha^+} (-1)^{|\mathbf{B}|-1} I_{f,p}^\cup(\bigcurlyvee_{\beta\in\mathbf{B}}\beta,\mathbf{M})\label{eq:möbineq-2}
			\end{align}
			\label{eq:non-neg-prelim}
		\end{subequations}
		To show the non-negativity of partial contributions, we split the case of $\alpha=\top_\cup$:
		\begin{enumerate}
			\item Assume $\alpha=\top_\cup=\{\{1,..,n\}\}$: In this case, its strict upset is the empty set ($\dot{\uparrow}\alpha = \{\}$). We obtain from Definition \ref{def:union-möbius} that its partial contribution is zero and thus non-negative:
			\begin{equation}
				\alpha=\top_\cup:~\qquad I^\delta_{f,p}(\alpha,\mathbf{M}) = I^\cup_{f,p}(\top_\cup,\mathbf{M}) - I^\cup_{f,p}(\alpha,\mathbf{M}) = 0 \geq 0
				\label{eq:non-negativity-case-1}
			\end{equation}
			\item Assume $\alpha\neq\top_\cup$: In this case, the strict upset of $\alpha$ is non-empty. To show the non-negativity, we construct a lower and upper bound for the two components of \refEq{eq:möbineq-2}. First, we obtain the lower bound on the inclusion-exclusion relation from Corollary \ref{col:f-ineq-inc-enc} as shown in \refEq{eq:non-neg-bound-1} and simplify it using \refEq{eq:joinuion-2}.
			\begin{adjustwidth}{-\extralength}{0cm}
			\begin{subequations}
				\begin{align}
					\alpha\neq\top_\cup:&&I_{f,p}\left(\bigsqcap_{\gamma\in\alpha^+}\bigsqcup_{\mathbf{c}\in\gamma}\langle\Gamma(\mathbf{c},\mathbf{M})\rangle\right)&\leq \sum_{\emptyset\neq\mathbf{C}\subseteq\alpha^+} (-1)^{|\mathbf{C}|-1} I_{f,p}\left(\bigsqcup_{\gamma\in\mathbf{C}}\bigsqcup_{\mathbf{c}\in\gamma}\langle\Gamma(\mathbf{c},\mathbf{M})\rangle\right)\label{eq:non-neg-bound-1a}\\
					\alpha\neq\top_\cup:&&I_{f,p}\left(\bigsqcap_{\gamma\in\alpha^+}\bigsqcup_{\mathbf{c}\in\gamma}\langle\Gamma(\mathbf{c},\mathbf{M})\rangle\right)&\leq \sum_{\emptyset\neq\mathbf{C}\subseteq\alpha^+} (-1)^{|\mathbf{C}|-1} I_{f,p}\left(\bigsqcup_{\mathbf{c}\in\left(\bigcup_{\gamma\in\mathbf{C}}(\gamma)\right)}\langle\Gamma(\mathbf{c},\mathbf{M})\rangle\right)\\
					\alpha\neq\top_\cup:&&I_{f,p}\left(\bigsqcap_{\gamma\in\alpha^+}\bigsqcup_{\mathbf{c}\in\gamma}\langle\Gamma(\mathbf{c},\mathbf{M})\rangle\right)&\leq \sum_{\emptyset\neq\mathbf{C}\subseteq\alpha^+} (-1)^{|\mathbf{C}|-1} I_{f,p}^\cup(\bigcurlyvee_{\gamma\in\mathbf{C}}\gamma,\mathbf{M})\label{eq:non-neg-bound-1b}
				\end{align}
				\label{eq:non-neg-bound-1}
			\end{subequations}
			\end{adjustwidth}
			Second, we obtain an upper bound for atom $\alpha$ based on its immediate successors as shown in \refEq{eq:non-neg-bound-2} from \refEq{eq:joinuion-3}.
			\begin{subequations}
				\begin{align}
					\alpha\neq\top_\cup,\ \forall \gamma\in\alpha^+:&& \bigsqcup_{\mathbf{a}\in\alpha} \langle\Gamma(\mathbf{a},\mathbf{M})\rangle &\sqsubseteq \bigsqcup_{\mathbf{c}\in\gamma} \langle\Gamma(\mathbf{c},\mathbf{M})\rangle\\
					\alpha\neq\top_\cup:&&\bigsqcup_{\mathbf{a}\in\alpha} \langle\Gamma(\mathbf{a},\mathbf{M})\rangle &\sqsubseteq \bigsqcap_{\gamma\in\alpha^+}\bigsqcup_{\mathbf{c}\in\gamma} \langle\Gamma(\mathbf{c},\mathbf{M})\rangle\\
					\alpha\neq\top_\cup:&&I_{f,p}\left(\bigsqcup_{\mathbf{a}\in\alpha} \langle\Gamma(\mathbf{a},\mathbf{M})\rangle\right) &\leq I_{f,p}\left(\bigsqcap_{\gamma\in\alpha^+}\bigsqcup_{\mathbf{c}\in\gamma} \langle\Gamma(\mathbf{c},\mathbf{M})\rangle\right)\label{eq:non-neg-bound-2d}
				\end{align}
				\label{eq:non-neg-bound-2}
			\end{subequations}
		\end{enumerate}
		By transitivity, we obtain \refEq{eq:non-neg-bound-3a} from \refEq{eq:non-neg-bound-1b} and \refEq{eq:non-neg-bound-2d}. Re-arranging both terms demonstrates the desired non-negativity of the partial contributions.
		\begin{subequations}
			\begin{align}
				\alpha\neq\top_\cup:&&I^\cup_{f,p}(\alpha,\mathbf{M}) &\leq \sum_{\emptyset\neq\mathbf{C}\subseteq\alpha^+} (-1)^{|\mathbf{C}|-1} I_{f,p}^\cup(\bigcurlyvee_{\gamma\in\mathbf{C}}\gamma,\mathbf{M}) \label{eq:non-neg-bound-3a}\\
				\alpha\neq\top_\cup:&&0 &\leq - I^\cup_{f,p}(\alpha,\mathbf{M}) + \sum_{\emptyset\neq\mathbf{C}\subseteq\alpha^+} (-1)^{|\mathbf{C}|-1} I_{f,p}^\cup(\bigcurlyvee_{\gamma\in\mathbf{C}}\gamma,\mathbf{M})\\
				\alpha\neq\top_\cup:&&0 &\leq I^\delta_{f,p}(\alpha,\mathbf{M})\qquad\qquad\qquad\qquad\qquad\text{(applying  \refEq{eq:möbineq-2})}\label{eq:non-neg-bound-3c}
			\end{align}
			\label{eq:non-neg-bound-3}
		\end{subequations}
	\end{itemize}
	From \refEq{eq:non-negativity-case-1} and \refEq{eq:non-neg-bound-3c} we obtain the non-negativity of the decomposition and thus Property \ref{prop:union-4}.
	\begin{equation}
		\forall\alpha\in\mathcal{A}(n):\quad 0 \leq I^\delta_{f,p}(\alpha,\mathbf{M})
	\end{equation}
\end{proof}
	
	\section[\appendixname~\thesection]{Implementation suggestion}
	\label{ap:implementation}
%
This Section demonstrates the correctness of the suggested implementation in Section \ref{subsec:decomposition-ineq-measure}:
\begin{enumerate}
	\item Assume $\alpha=\top_\cup$: The partial contribution of the top element is always zero as shown in \refEq{eq:non-negativity-case-1}.
	\item Assume $\alpha\neq\top_\cup$: 
	Using~\cite[Lemma 5]{mages2024}, the set of immediate successors on the union lattice can be computed as shown in \refEq{eq:cover-2} with the function `dual` of \refEq{eq:imp-partial}.
	\begin{equation}
		\alpha \neq \top_\cup:\qquad \alpha^+ = \{\alpha \curlyvee \{\mathbf{b}\}~:~\mathbf{b}\in\textnormal{dual}(\alpha)\}\label{eq:cover-2}
	\end{equation}
	Using the properties of \refEq{eq:non-neg-prelim} and \refEq{eq:cover-2}, we can show the correctness of the suggested implementation by continuing the simplification of \refEq{eq:möbineq-2}:
	\begin{subequations}
		\begin{align}
			\alpha \neq \top_\cup:&& I_{f,p}^\delta(\alpha,\mathbf{M}) &=- I_{f,p}^\cup(\alpha,\mathbf{M}) + \sum_{\emptyset\neq\mathbf{B}\subseteq\alpha^+} (-1)^{|\mathbf{B}|-1} I_{f,p}^\cup(\bigcurlyvee_{\beta\in\mathbf{B}}\beta,\mathbf{M})\\
			\alpha \neq \top_\cup:&& I_{f,p}^\delta(\alpha,\mathbf{M}) &=- I_{f,p}^\cup(\alpha,\mathbf{M}) + \sum_{\emptyset\neq\mathbf{B}\subseteq\{\alpha \curlyvee \{\mathbf{b}\}~:~\mathbf{b}\in\textnormal{dual}(\alpha)\}} (-1)^{|\mathbf{B}|-1} I_{f,p}^\cup(\bigcurlyvee_{\beta\in\mathbf{B}}\beta,\mathbf{M})\label{eq:möbineq-3}\\
			\alpha \neq \top_\cup:&& I_{f,p}^\delta(\alpha,\mathbf{M}) &=- I_{f,p}^\cup(\alpha,\mathbf{M}) + \sum_{\emptyset\neq\mathbf{B}\subseteq\{\{\mathbf{b}\}~:~\mathbf{b}\in\textnormal{dual}(\alpha)\}} (-1)^{|\mathbf{B}|-1} I_{f,p}^\cup(\bigcurlyvee_{\beta\in\mathbf{B}}(\alpha \curlyvee \beta),\mathbf{M})\label{eq:möbineq-4}\\
			\alpha \neq \top_\cup:&& I_{f,p}^\delta(\alpha,\mathbf{M}) &=- I_{f,p}^\cup(\alpha,\mathbf{M}) + \sum_{\emptyset\neq\beta\subseteq\textnormal{dual}(\alpha)} (-1)^{|\beta|-1} I_{f,p}^\cup(\alpha \curlyvee \beta,\mathbf{M})\label{eq:möbineq-5}\\
			\alpha \neq \top_\cup:&& I_{f,p}^\delta(\alpha,\mathbf{M}) &=\sum_{\beta\subseteq\textnormal{dual}(\alpha)} (-1)^{|\beta|-1} I_{f,p}^\cup(\textnormal{reduce}(\subset,\alpha \cup \beta),\mathbf{M})\label{eq:möbineq-6}
		\end{align}
	\end{subequations}
\end{enumerate}
Therefore, the suggested implementation is correct for all $\alpha\in\mathcal{A}(n)$.
	
\end{document}